\definecolor{darkgreen}{rgb}{0.0,0,0.9}
\newtcolorbox{wbox}
{
	colback  = white,
}
\newcommand*{\suppress}[1]{}
\DeclareMathOperator{\rot}{rot}
\DeclareMathOperator{\pre}{pre}
\def\thm@space@setup{%
	\thm@preskip= 10pt
	\thm@postskip=\thm@preskip % or whatever, if you don't want them to be equal
}
\renewcommand{\paragraph}{%
	\@startsection{paragraph}{4}%
	{\z@}{5pt}{-1em}%
	{\normalfont\normalsize\bfseries}%
}
\newtheorem{theorem}{Theorem}
\newtheorem{lemma}{Lemma}
\newtheorem{corollary}{Corollary}
\newtheorem{definition}{Definition}
\newtheorem{proposition}[theorem]{Proposition}
\newtheorem{claim}[theorem]{Claim}
\theoremstyle{definition}
\newtheorem{remark}[theorem]{Remark}
\newenvironment{fminipage}%
{\begin{Sbox}\begin{minipage}}%
		{\end{minipage}\end{Sbox}\fbox{\TheSbox}}
\def\union{\cup}
\def\intersect{\cap}
\def\Union{\bigcup}
\newcommand\yy{\boldsymbol{\mathit{y}}}
\newcommand{\Bc}{\mathcal{B}}
\newcommand{\Mc}{\mathcal{M}}
\newcommand{\Lc}{\mathcal{L}}                                                    
\newcommand{\rin}{\rho_{\text{in}}}
\newcommand{\rout}{\rho_{\text{out}}}
\newcommand{\st}{\text{s.t.}}
\newcommand{\Pbar}{\overline{P}}
\newcommand{\tail}{\text{tail}}
\newcommand{\head}{\text{head}}
\title{A Structural and Algorithmic Study of Stable Matching\\
Lattices of ``Nearby'' Instances, with Applications}
 \author[1]{Rohith Reddy Gangam}
 \author[2]{Tung Mai\footnote{This work was done while the author was a postdoctoral fellow at the University of California, Irvine.\\ 
 This work was supported in part by NSF grants CCF-1815901 and CCF-2230414.}}
 \author[1]{Nitya Raju}
 \author[1]{Vijay V.~Vazirani}
 \affil[1]{University of California, Irvine}
 \affil[2]{Adobe Research}
\date{}
\begin{document}
	\maketitle
	
	\begin{abstract}

Recently \cite{MV.robust} identified and initiated work on the new problem of understanding structural relationships between the lattices of solutions of two ``nearby'' instances of stable matching. They also gave an application of their work to finding a {\em robust stable matching}. However, the types of changes they allowed in going from instance $A$ to $B$ were very restricted, namely any one agent executes an {\em upward shift}.

In this paper, we allow any one agent to permute its preference list {\em arbitrarily}. Let $M_A$ and $M_B$ be the sets of stable matchings of the resulting pair of instances $A$ and $B$, and let $\Lc_A$ and $\Lc_B$ be the corresponding lattices of stable matchings. We prove that the matchings in $M_A \cap M_B$ form a sublattice of both $\Lc_A$ and $\Lc_B$ and those in $M_A \setminus M_B$ form a join semi-sublattice of $\Lc_A$. These properties enable us to obtain a polynomial time algorithm for not only finding a stable matching in $M_A \cap M_B$, but also for obtaining  the partial order, as promised by Birkhoff's Representation Theorem \cite{Birkhoff}, thereby enabling us to generate all matchings in this sublattice. 

Our algorithm also helps solve a version of the robust stable matching problem. We discuss another potential application, namely obtaining new insights into the incentive compatibility properties of the Gale-Shapley Deferred Acceptance Algorithm.
\end{abstract}

\section{Introduction}
\label{sec.intro}

The seminal 1962 paper of Gale and Shapley \cite{GaleS} introduced the stable matching problem and gave the Deferred Acceptance (DA) Algorithm for it. In the process, they initiated the field of matching-based market design. Over the years, numerous researchers unearthed the remarkably deep and pristine structural properties of this problem -- this led to polynomial time algorithms for for a host of problems, in particular those addressing various operations related to the lattice of stable matchings, see details below as well as in the books \cite{Knuth-book, GusfieldI, Manlove-book, Roth-Sotomayor, MM.book}. 

Recently \cite{MV.robust} identified and initiated work on a new problem which appears to be fundamental and deserving of an in-depth study, namely understanding structural relationships between the lattices of solutions of two ``nearby'' instances. \cite{MV.robust} had given an application of their work to finding a {\em robust stable matching} as described below. Such pairs of instances arise naturally in an even more important context: the study of incentive compatibility of the DA Algorithm. Let $A$ and $B$ be the given instance and the one in which one of the agents manipulates its preference list in order to get a better match. The types of manipulations allowed in \cite{MV.robust} were very restricted, namely any one agent executes an {\em upward shifts}, see definition below. They left the open problem of tackling more general changes. 

In this paper, we allow any one agent to permute its preference list {\em arbitrarily}. Let $A$ and $B$ be the resulting pair of instances, let $M_A$ and $M_B$ be the sets of their stable matchings and $\Lc_A$ and $\Lc_B$ be the corresponding lattices of stable matchings. We prove that the matchings in $M_A \cap M_B$ form a sublattice of both $\Lc_A$ and $\Lc_B$ and those in $M_A \setminus M_B$ form a join semi-sublattice, see definitions in Section \ref{sec.overview}. This enables is to obtain a polynomial time algorithm for not only finding a stable matching in $M_A \cap M_B$, but also to obtain the partial order, promised by Birkhoff's Representation Theorem \cite{Birkhoff}, which helps generate all matchings in this sublattice. We also apply our algorithm to a more general setting for robust stable matching than the one given in \cite{MV.robust}.

The setting defined in \cite{MV.robust} was the following: Let $A$ be an instance of stable matching on $n$ workers and $n$ firms. A {\em domain of errors}, $D$, is defined via an operation called {\em upward shift}: For a firm $f$, assume its preference list in instance $A$ is $\{ \ldots,w_1,w_2, \ldots, w_k, w , \ldots \}$. Move up the position of worker $w$ so $f$'s list becomes $\{ \ldots, w, w_1,w_2, \ldots, w_k , \ldots \}$. An analogous operation is defined on a worker $w$'s list; again some firm $f$ on its list is moved up. For each firm and each worker, consider all possible shifts to get the domain D; clearly, $|D| = \binom{2n}{1} \binom{n}{2} =$ $O(n^3)$. Assume that {\em one error} is chosen from $D$ via a given discrete probability distribution over $D$ to obtain instance $B$. A {\em robust stable matching} is a matching that is stable for $A$ and maximizes the probability of being stable for $B$. A polynomial time algorithm was given for finding such a matching.

Since we allow an {\em arbitrary permutation} to be applied to any {\em one worker or any one firm's} preference list, our domain of errors, say $T$, has size $2n (n!)$. Let $S \subseteq T$ and define a {\em fully robust stable matching w.r.t. $S$} to be a matching that is stable for $A$ and for {\em each} of the $|S|$ instances obtained by introducing one error from $S$. We give an $O(|S| p(n))$ algorithm to determine if such a matching exists and if so to find one, where $p$ is a polynomial function. In particular, if $S$ is polynomial sized, then our algorithm runs in polynomial time. Clearly, this notion is weaker than the previous one, since we cannot extend it to the probabilistic setting; we leave that as an open problem, see Section \ref{sec.discussion}. 

There is a simple modification of the Deferred Acceptance algorithm (Algorithm~\ref{alg:daalgorithm}) given in Appendix~\ref{app:algorithms} that works when errors are on one side. However extending this algorithm to errors on both side results in an algorithm (Algorithm~\ref{alg:daalgorithm2}) that has exponential runtime. This motivates the study and characterization of sublattices in the lattice of stable matchings. 

Conway, see \cite{Knuth-book}, proved that the set of stable matchings of an instance forms a finite distributive lattice; see definitions in Section \ref{sec.lattice}. Knuth \cite{Knuth-book} asked if every finite distributive lattice is isomorphic to the lattice arising from an instance of stable matching. A positive answer was provided by Blair \cite{Blair}; for a much better proof, see \cite{GusfieldI}. A key fact about such lattices is Birkhoff's Representation Theorem \cite{Birkhoff}, which has also been called {\em the fundamental theorem for finite distributive lattices}, e.g., see \cite{Stanley}. It states that corresponding to such a lattice, $\Lc$, there is a partial order, say $\Pi$, such that $\Lc$ is isomorphic to $L(\Pi)$, the lattice of closed sets of $\Pi$ (see Section \ref{sec.lattice} for details). We will say that $\Pi$ {\em generates} $\Lc$. 

The following important question arose in the design of our algorithm: For a specified
sublattice $\Lc'$ of $\Lc$, obtain partial order $\Pi'$ from $\Pi$ such that $\Pi'$ generates $\Lc'$. Our answer to this question requires a study Birkhoff’s Theorem from this angle; we are not aware of any previous application of Birkhoff’s Theorem in this manner. We define a set of operations
called compressions; when a compression is applied to a partial order $\Pi$, it yields a partial
order $\Pi'$ on (weakly) fewer elements. The following implication of Birkhoff’s Theorem is
useful for our purposes:

\begin{theorem}
	\label{thm:generalization} 
	There is a one-to-one correspondence between the compressions of $\Pi$ and the sublattices of 
	$L(\Pi)$ such that if sublattice $\Lc'$ of $L(\Pi)$ corresponds to compression $\Pi'$, 
	then $\Lc'$ is generated by $\Pi'$.
\end{theorem}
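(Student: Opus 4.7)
The plan is to prove Theorem \ref{thm:generalization} by exhibiting natural maps in both directions and showing they are mutually inverse. Any sublattice $\Lc'$ of the finite distributive lattice $L(\Pi)$ is itself a finite distributive lattice, so by Birkhoff's theorem it is isomorphic to $L(\Pi')$ for a \emph{unique} poset $\Pi'$ (its poset of join-irreducibles). The theorem therefore asserts two things: this $\Pi'$ is always realizable as a compression of $\Pi$, and distinct compressions always produce distinct sublattices. Injectivity will then drop out of Birkhoff's uniqueness, so the real work is in the construction.

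To go from $\Lc'$ to a compression, I would first set $F_0 = \bigcap_{C \in \Lc'} C$ and $F_1 = \bigcup_{C \in \Lc'} C$, the bottom and top of $\Lc'$; both are closed sets in $\Pi$ and every $C \in \Lc'$ satisfies $F_0 \subseteq C \subseteq F_1$. Thus the elements of $\Pi$ split into ``always in'' ($F_0$), ``always out'' ($\Pi \setminus F_1$), and ``variable'' ($V = F_1 \setminus F_0$). On $V$ define an equivalence $x \sim y$ iff for every $C \in \Lc'$, $x \in C$ iff $y \in C$, and define an order on $V/{\sim}$ by $[x] \le [y]$ iff every $C \in \Lc'$ containing $y$ also contains $x$. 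I would argue that this quotient poset is exactly the $\Pi'$ produced by a suitable sequence of elementary compression moves on $\Pi$ -- namely, delete the fixed-in and fixed-out elements and contract each $\sim$-class to a single point. In the reverse direction, given a compression $\Pi'$ I would lift each $D \in L(\Pi')$ to the subset $F_0 \cup \bigcup_{[x] \in D} [x] \subseteq \Pi$ and verify that the image is a sublattice of $L(\Pi)$ order-isomorphic to $L(\Pi')$.

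The main technical hurdle will be aligning the combinatorial compression operations defined in the paper with the algebraic quotient just described. Concretely, one must show that for every sublattice $\Lc'$ the $\sim$-classes and the induced order on $V/{\sim}$ are ``admissible'' -- each class is convex in an appropriate sub-order of $\Pi$, the induced relation is antisymmetric, and it extends the order of $\Pi$ on class representatives -- so that the quotient can actually be obtained by a legal sequence of compression moves. The converse direction is easier: to check that a compression yields a sublattice, one verifies that $\{F_0 \cup \bigcup_{[x] \in D} [x] : D \in L(\Pi')\}$ is closed under union and intersection, which follows because the $\sim$-classes partition $V$ and unions/intersections respect this partition. Distributivity of $L(\Pi)$ ensures no unexpected meets or joins arise, and the Galois-style back-and-forth then closes the bijection: starting from $\Lc'$, taking its compression, and lifting back recovers $\Lc'$; starting from $\Pi'$, lifting to a sublattice, and recomputing its compression recovers $\Pi'$ up to isomorphism.
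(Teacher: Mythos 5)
Your high-level plan is sound and your definitions are equivalent to the paper's, though packaged differently: the paper defines meta-rotations concretely as the rotation sets $S_{M',M}$ between $\Lc'$-direct-successor pairs (plus the prefix $S_{M_0,M_{0'}}$ and suffix $S_{M_{z'},M_z}$), whereas you define the partition abstractly via the relation ``$x\sim y$ iff $x,y$ have the same membership pattern across $\Lc'$'' and the order via ``$[x]\le[y]$ iff every closed set in $\Lc'$ containing $y$ contains $x$.'' One can check these agree, and your formulation makes the partition property and the poset axioms (antisymmetry, transitivity, extension of the projected order $Q$) nearly immediate, which is a genuine simplification over the paper's Lemmas on partitioning and anti-symmetry. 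The forward direction (compression $\Rightarrow$ sublattice) is also correctly reduced to the observation that unions and intersections respect the partition.

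There are, however, two gaps. The first is small but real: you propose to ``delete the fixed-in and fixed-out elements,'' but Definition~\ref{def:compression} requires a \emph{partition} of all of $\Pi$, not a deletion; $F_0$ must be merged with $s$ into $A_s$ and $\Pi\setminus F_1$ merged with $t$ into $A_t$. The second is the main one: you explicitly identify ``aligning the quotient with compression moves'' as the technical hurdle but never discharge it, and you never prove that the reconstructed $L(\Pi')$ \emph{equals} $\Lc'$ as a subset of $L(\Pi)$ (not merely that the two are isomorphic, which is all Birkhoff gives you). This equality is exactly the content of Lemmas~\ref{lem:LPsubsetLbar} and its converse in the paper, and neither containment is automatic. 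For $L(\Pi')\subseteq\Lc'$ one needs an inductive argument (strip a maximal meta-rotation $A$ from a closed set $I$ and express $M_I$ as $M_{I\setminus\{A\}}\vee M^A$, using that $\Lc'$ is a sublattice); for $\Lc'\subseteq L(\Pi')$ one needs to rule out closed sets that fail to be unions of whole $\sim$-classes, which requires a separate argument involving lattice operations inside $\Lc'$. Appealing to Birkhoff's uniqueness for injectivity is also slightly glib: Birkhoff gives uniqueness of the abstract poset up to isomorphism, whereas what you need is that the compression (a specific partition of $\Pi$ together with an order) is recovered from the sublattice; fortunately your construction does exactly that, but you should say so explicitly rather than invoking abstract uniqueness.
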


The proof for Theorem~\ref{thm:generalization} using stable matching lattices is given in Section~\ref{gen:proof} for completeness.

In the case of stable matchings, $\Pi$ can be defined using the notion of {\em rotations}; see Section \ref{sec.lattice} for a formal definition. Since the total number of rotations of a stable matching instance is at most $O(n^2)$, $\Pi$ has a succinct description even though $\Lc$ may be exponentially large. Our main algorithmic result is:

\begin{theorem}
	\label{thm:main}
There is an algorithm for checking if there is a fully robust stable matching w.r.t. any set $S \subseteq T$ in time  $O(|S| p(n))$, where $p$ is a polynomial function. Moreover, if the answer is yes, the set of all such matchings forms a sublattice of $\Lc$ and our algorithm finds a partial order that generates it.
\end{theorem}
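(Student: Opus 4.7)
The plan is to reduce the problem to iteratively compressing the rotation poset $\Pi$ of $A$, using Theorem \ref{thm:generalization} as the bridge between sublattices of $\Lc$ and combinatorial objects that can be manipulated algorithmically. The key structural lemma I would establish first is that for any single perturbation $s \in S$ (arbitrarily permuting one agent's preference list to form an instance $A_s$), the set $\Mc_s$ of matchings in $\Lc = L(\Pi)$ that remain stable in $A_s$ is a sublattice of $\Lc$. By Theorem \ref{thm:generalization}, $\Mc_s$ is therefore generated by some compression $\Pi_s$ of $\Pi$, which I would construct explicitly in $p(n)$ time by identifying the rotations that, if eliminated or merged, correspond exactly to avoiding the potential new blocking pairs in $A_s$.

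Given this lemma, the full algorithm processes the perturbations in $S$ one at a time, maintaining a compression of $\Pi$ whose generated sublattice equals $\bigcap_{s' \text{ processed so far}} \Mc_{s'}$. When a new $s$ is handled, I apply the compression corresponding to $\Mc_s$; since intersections of sublattices are sublattices, and compositions of compressions are compressions (by Theorem \ref{thm:generalization} applied to the intersected sublattice), the invariant is preserved. After processing all of $S$, the final compression generates the sublattice of fully robust stable matchings with respect to $S$; emptiness can then be checked by examining whether the compressed poset admits a valid closed set, and if non-empty, the compression itself is the partial order the theorem asks for. Each stage costs $p(n)$, so the overall runtime is $O(|S| p(n))$.

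The main obstacle is the structural lemma in the first step. Replacing an entire preference list by an arbitrary permutation is a large change, and it is not a priori clear that the condition ``$M$ introduces no new blocking pair in $A_s$'' is closed under the meet and join of $\Lc$ (which are determined by the \emph{unperturbed} instance $A$, not $A_s$), nor that it can be cleanly captured through rotations. The proof will proceed by enumerating the potential new blocking pairs $(b,g)$ induced by $s$, and showing that for each, the set of matchings in $\Lc$ avoiding $(b,g)$ is determined by a condition on the rotation-closed set representing $M$ — specifically, either a set of forbidden rotations or a forced ``coupling'' of rotations. Closing $\Lc$ under meet and join then translates into standard set-theoretic operations on closed sets of $\Pi$, from which the sublattice property follows. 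The translation of these local conditions into a single compression of $\Pi$ is precisely what Theorem \ref{thm:generalization} enables, making the algorithm immediate once the structural lemma is in hand.
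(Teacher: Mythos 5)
Your high-level outline matches the paper's: establish that each $\Mc_s = \Mc_A \cap \Mc_{A_s}$ is a sublattice (the paper's Proposition~\ref{prop.sublattice}, and this part is easier than you fear, since meet and join depend only on the unchanged gender's lists), then intersect sublattices across $s \in S$ (Corollary~\ref{cor.sublatticeIntersection}), and produce a compression of $\Pi$ generating the intersection. Your iterative composition of compressions versus the paper's ``compute an edge set $E_i$ for each $B_i$ and take the union'' (Lemma~\ref{lem:poset}) is a stylistic difference, not a substantive one.

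However, there is a genuine gap in the step you acknowledge as the ``main obstacle,'' and your sketch does not resolve it. Theorem~\ref{thm:generalization} is an existence statement: it tells you a compression $\Pi_s$ generating $\Mc_s$ exists, but its proof (Section~\ref{sec.backward}) constructs $\Pi_s$ by examining $\Lc'$-direct successor pairs in the sublattice, of which there can be exponentially many. You cannot invoke it to get a polynomial-time procedure, and ``identifying the rotations that, if eliminated or merged, correspond exactly to avoiding the potential new blocking pairs'' names the desired outcome rather than an algorithm. The enumeration of potential blocking pairs $(b,g)$ does constrain matters to $O(n)$ pairs per perturbation, but it is not clear how to combine those per-pair conditions into a single valid compression without further structure; the conditions interact, and the set of edges you add to $\Pi$ must be chosen so that shrinking strongly connected components yields exactly $\Mc_s$, not some smaller sublattice.

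What the paper adds, and what your plan is missing, is the observation that the \emph{complement} $\Mc_{AB} = \Mc_A \setminus \Mc_B$ is a join (or meet) semi-sublattice of $\Lc_A$ (Lemma~\ref{lem:MABsemi}), even though it is generally not a sublattice (Lemma~\ref{lem.eg}). This extra structure is what forces the defining edge set to have the ``bouquet'' form of Theorem~\ref{thm:semi} — a chain of tails, each with a flower of pairwise-incomparable heads — and that shape in turn is what makes \textsc{FindBouquet} (Section~\ref{sec.alg}), driven only by a membership oracle, terminate in $p(n)$ time: it can march down the chain one tail at a time, using the uniqueness of the maximal tail in each splitting set (Lemma~\ref{lem:uniqueMaximal}). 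Without proving the semi-sublattice property of $\Mc_{AB}$ and deriving the bouquet characterization from it, you have no polynomial-time construction of $\Pi_s$, and so the $O(|S|\,p(n))$ bound does not follow. This is the core technical content of the paper's proof of Theorem~\ref{thm:main} and it is absent from your proposal.
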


The importance of the stable matching problem lies not only in its efficient computability but also its good incentive compatibility properties. Roth \cite{Roth-IC-1982economics} showed there is no stable matching procedure for which truthful revelation of preferences is a dominant strategy for all agents on both sides. On the other hand, Roth \cite{Roth-IC-1982economics} and Dubins and Freedman \cite{Dubins-1981} independently proved that the DA Algorithm is {\em dominant-strategy incentive compatible (DSIC)} for the proposing side. Additionally, Dubins and Freedman \cite{Dubins-1981} proved that no coalition of agents on the proposing side can simultaneously improve all of their matches by altering their preferences, provided all agents outside this set report their preferences truthfully. 

This opened up the use of the DA Algorithm in a host of highly consequential applications and led to the award of the 2012 Nobel Prize in Economics to Roth and Shapley. Among the applications was matching students to public schools in big cities, such as NYC and Boston, see \cite{Atila-Sonmez-school, NYC-school, Boston}. In this application, the proposing side is taken to be the students; clearly, their best strategy is to report preference lists truthfully and not waste time and effort on ``gaming'' the system. In Section \ref{sec.discussion} we give a hypothetical situation regarding incentive compatibility in which Theorem \ref{thm:main} plays a role.

%%%%%%%%%%%%%%%%%%%

\subsection{Related work}
\label{sec.related}

The two topics, of stable matching and the design of algorithms that produce solutions that are robust to errors,  have been studied extensively for decades and there are today several books on each of them, e.g., see \cite{Knuth-book, GusfieldI, Manlove-book} and \cite{CaE:06, BEN:09}. Yet, there is a paucity of results at the intersection of these two topics. Indeed, before the publication of \cite{MV.robust}, we are aware of only two previous works \cite{aziz1,aziz2}. We remark that the notion of robustness studied in \cite{MV.robust} was quite different from that of the previous two works as detailed below.

Aziz et al. \cite{aziz1} considered the problem of finding stable matching under uncertain linear preferences. They proposed three different uncertainty models:
\begin{enumerate}
\item Lottery Model: Each agent has a probability distribution
over strict preference lists, independent of other agents. 
\item Compact Indifference Model: Each agent has a single weak preference
list in which ties may exist. All linear order extensions of this
weak order have equal probability.
\item Joint Probability Model: A probability distribution over preference profiles
is specified.
\end{enumerate}
They showed that finding the matching with highest probability of being stable is NP-hard for the Compact Indifference Model and the Joint Probability Model. For the very special case that preference lists of one side are certain and the number of uncertain agents of the other side are bounded by a constant, they gave a polynomial time algorithm that works for all three models.

The joint probability model is the most powerful and closest to our setting. The main difference is that in their model, there is no base instance, which is called $A$ in our model. The opportunity of finding new structural results arises from our model precisely because we need to consider two ``nearby'' instances, namely $A$ and $B$ as described above.

Aziz et al. \cite{aziz2} introduced a pairwise probability model in which
each agent gives the probability of preferring one agent over another for all possible pairs.
They showed that the problem of finding a matching with highest probability of being stable is NP-hard even when no agent has a cycle in its certain preferences (i.e., the ones that hold with probability 1).

\subsubsection{A matter of nomenclature}
\label{sec.nomen}

Assigning correct nomenclature to a new issue under investigation is clearly critical for ease of 
comprehension. In this context we wish to mention that very recently,
Genc et al. \cite{genc2} defined the notion of an $(a, b)$-supermatch as follows: this 
is a stable matching in which if any $a$ pairs break
up, then it is possible to match them all off by changing the partners
of at most $b$ other pairs, so the resulting matching is also stable.
They showed that it is NP-hard to decide if there is an
$(a, b)$-supermatch. They also gave a polynomial time algorithm for a very restricted version
of this problem, namely given a stable matching and a number $b$, decide if it is a 
$(1, b)$-supermatch. Observe that since the given instance may have exponentially many stable 
matchings, this does not yield a polynomial time algorithm even for deciding
if there is a stable matching which is a $(1, b)$-supermatch for a given $b$. 

Genc et al. \cite{genc1} also went on to defining the notion of the most robust stable matching, namely 
a $(1, b)$-supermatch where $b$ is minimum. We would like to point out that ``robust'' is a misnomer
in this situation and that the name ``fault-tolerant'' is more appropriate.
In the literature, the latter is used to
describe a system which continues to operate even in the event of failures and the former is 
used to describe a system which is able to cope with erroneous inputs, e.g., 
see the following pages from Wikipedia \cite{Robust, FT}.

\subsection{Overview of structural and algorithmic ideas}
\label{sec.overview}

We start by giving a short overview of the structural facts proven in \cite{MV.robust}. 
Let $A$ and $B$ be two instances of stable matching over $n$ workers and $n$ firms, with sets of 
stable matchings $\Mc_A$ and $\Mc_B$, and lattices $\Lc_A$ and $\Lc_B$, respectively.
Let $\Pi$ be the poset on rotations such that $L(\Pi) = \Lc_A$; in particular, for a closed set $S$, let $M(S)$ denote the stable matching corresponding to $S$. It is easy to see that if $B$ is obtained from $A$ by changing (upshifts only) the lists of only one side, either workers or firms, but not both, then the matchings in $\Mc_A \cap \Mc_B$ form a sublattice of each of the two lattices (Proposition ~\ref{prop.sublattice}). Furthermore, if $B$ is obtained by applying a shift operation, then $\Mc_{A \setminus B} = \Mc_A \setminus \Mc_B$ is also a sublattice of $\Lc_A$. Additionally, there is at most one rotation, $\rin$, that leads from  $\Mc_{A} \cap \Mc_{B}$ to $\Mc_{A \setminus B}$ and at most one rotation, $\rout$, that leads from $\Mc_{A \setminus B}$ to $\Mc_{A} \cap \Mc_{B}$; moreover, these rotations can be efficiently found. Finally, for a closed set $S$ of $\Pi$, $M(S)$ is stable for instance $B$ iff $\rin \in S \ \Rightarrow \ \rout \in S$.

With a view to extending the results of \cite{MV.robust}, we consider the following abstract  question. Suppose instance $B$ is such that $\Mc_A \cap \Mc_B$ and $\Mc_{A \setminus B}$ are both sublattices of $\Lc_A$, i.e., $\Mc_A$ is partitioned into two sublattices. Then, is there a polynomial time algorithm for finding a matching in $\Mc_A \cap \Mc_B$? Our answer to this question is built on the following structural fact: There exists a sequence of rotations $r_0, r_1, \ldots , r_{2k}, r_{2k+1}$ such that a closed set of $\Pi$ generates a matching in $\Mc_A \cap \Mc_B$ iff it contains $r_{2i}$ but not $r_{2i+1}$ for some $0 \leq i \leq k$ (Proposition \ref{prop.sub}). Furthermore, this sequence of rotations can be found in polynomial time (see Section \ref{sec.sublattice}). Our generalization of Birkhoff's Theorem described in the Introduction is an important ingredient in this algorithm. At this point, we do not know of any concrete error pattern, beyond shift, for which this abstract setting applies.

Next, we address the case that $\Mc_{A \setminus B}$ is not a sublattice of $\Lc_A$. We start by proving that if $B$ is obtained by permuting the preference list of any one worker, then $\Mc_{A \setminus B}$ must be a join semi-sublattice of $\Lc_A$ (Lemma \ref{lem:MABsemi}); an analogous statement holds if the preference list of any one firm is permuted. Hence we study a second abstract question, namely lattice $\Lc_A$ is partitioned into a sublattice and a join semi-sublattice (see Section \ref{sec.semi}). These two abstract questions are called {\bf Setting I and Setting II}, respectively, in this paper.

For Setting II, we characterize a compression that yields a partial order $\Pi'$, such that $\Pi'$ generates the sublattice consisting of matchings in $\Mc_{A} \cap \Mc_{B}$ (Theorem \ref{thm:semi}). We also characterize closed sets of $\Pi$ such that the corresponding matchings lie in this sublattice; however, the characterization is too elaborate to summarize succinctly (see Proposition \ref{prop.gen}). Edges forming the required compression can be found  efficiently (Theorem \ref{thm:algFindFlower}), hence leading to an efficient algorithm for finding a matching in $\Mc_{A} \cap \Mc_{B}$.

Finally, consider the setting given in the Introduction, with $T$ being the super-exponential set of
all possible errors that can be introduced in instance $A$ and $S \subset T$.
We show that the set of all matchings that are stable for $A$ and for each of the instances obtained by introducing one error from $S$ forms a sublattice of $\Lc$ and we obtain a compression of $\Pi$ that generates this sublattice (Section \ref{sec.fullyRobust}). 
Each matching in this sublattice is a fully robust stable matching. Furthermore, given a weight function on all worker-firm pairs,
we can obtain, using the algorithm of \cite{MV.weight}, a maximum (or minimum) weight fully robust stable matching.

	\section{Preliminaries}

\subsection{The stable matching problem and the lattice of stable matchings}
\label{subsection.latticeOfSM}

The stable matching problem takes as input a set of workers $\mathcal{W} = \{w_1, w_2, \ldots , w_n\}$ and a set of firms $\mathcal{F} = \{f_1, f_2, \ldots , f_n\}$; each agent has a complete preference ranking over the set of the other side. The notation $w_i <_f w_j$ indicates that firm $f$ strictly prefers $w_j$ to $w_i$ in its preference list. Similarly, $f_i <_w f_j$ indicates that the worker $w$ strictly prefers $f_j$ to $f_i$ in its list.

A matching $M$ is a one-to-one correspondence between $\cal W$ and $\cal F$. For each pair $(w, f) \in M$, $w$ is called the partner of $f$ in $M$ (or $M$-partner) and vice versa. 
%For a matching $M$, we say that $b$ is \emph{above} (or \emph{below}) $g$ if he prefers his $M$-partner to $g$ (or $g$ to his $M$-partner). Similarly, $g$ is said to be \emph{above} (or \emph{below}) $b$ if she prefers her $M$-partner to $b$ (or $b$ to her $M$-partner).
For a matching $M$, a pair $(w, f) \not \in M$ is said to be \emph{blocking} if 
%$b$ is below $g$ and $g$ is below $b$, i.e., 
they prefer each other to their partners. A matching $M$ is \emph{stable} if there is no blocking pair for $M$.

Let $M$ and $M'$ be two stable matchings. We say that $M$ \emph{dominates} $M'$, denoted by 
$M \preceq M'$, if every worker weakly prefers its partner in $M$ to $M'$. Define the relation 
{\em predecessor} as the transitive closure of dominates. For two stable matchings, $M_1$ and $M_2$,   stable matching $M$ is a {\em common predecessor} of $M_1$ and $M_2$ if it is a predecessor of both $M_1$ and $M_2$. Furthermore, $M$ is a {\em lowest common predecessor} of $M_1$ and $M_2$ if it is a common predecessor $M_1$ and $M_2$, and if $M'$ is another common predecessor, then $M'$ cannot be a predecessor of $M$. Analogously, one can define the notions of {\em successor}  and {\em highest common successor} (definitions are omitted). This dominance partial order has the following property: For any two stable $M_1$ and $M_2$, their lowest common predecessor is unique and their highest common successor is unique, i.e., the partial order is a {\em lattice}; the former is called the {\em meet}, denoted $M_1 \wedge M_2$, and the latter is called the {\em join}, denoted $M_1 \vee M_2$. One can show that $M_1 \wedge M_2$, is the matching that results when each worker chooses its more preferred partner from $M_1$ and $M_2$; it is easy to show that this matching is also stable. Similarly, $M_1 \vee M_2$ is the matching that results when each worker chooses its less preferred partner from $M_1$ and $M_2$; this matching is also stable. 

These operations distribute, i.e.,
given three stable matchings $M, M', M''$,
$$ M \vee (M' \wedge M'') =  (M \vee M') \wedge (M \vee M'') \ \ \mbox{and} \ \
M \wedge (M' \vee M'') = (M \wedge M') \vee (M \wedge M'').$$

It is easy to see that the lattice must contain a matching, $M_0$, that dominates all others
and a matching $M_z$ that is dominated by all others.
$M_0$ is called the \emph{worker-optimal matching}, since in it, each worker is matched to its most
preferred firm among all stable matchings. This is also the {\em firm-pessimal matching}.
Similarly, $M_z$ is the {\em worker-pessimal} or \emph{firm-optimal matching}.

%%%%%%%%%%%%%%%%%%%%%%%%%

\subsection{Birkhoff's Theorem and rotations}
\label{sec.lattice}

It is easy to see that the family of closed sets (also called lower sets, Definition~\ref{closedset}) of a partial order, say $\Pi$, is closed under 
union and intersection and forms a distributive lattice, with join and meet being these two 
operations, respectively; let us denote it by $L(\Pi)$.
Birkhoff's theorem \cite{Birkhoff}, which has also been called {\em the fundamental theorem for 
finite distributive lattices}, e.g., see \cite{Stanley}, states that corresponding to any 
finite distributed lattice, $\Lc$, there is a partial order, say $\Pi$, whose lattice of closed sets 
$L(\Pi)$ is isomorphic to $\Lc$, i.e., $\Lc \cong L(\Pi)$. 
We will say that $\Pi$ {\em generates} $\Lc$. 

For the lattice of stable matchings, the partial order $\Pi$ defined in Birkhoff's Theorem, has
additional useful structural properties. First, its elements are rotations. 
A {\em rotation} takes $r$ matched worker-firm pairs in a fixed order, say 
$\{w_0f_0, w_1f_1,\ldots, w_{r-1}f_{r-1}\}$, and ``cyclically'' changes the matches of these $2r$ 
agents. The number $r$, the $r$ pairs, and the order among the pairs are so chosen that when a rotation is
applied to a stable matching containing all $r$ pairs, the resulting matching is also stable;
moreover, there is no valid rotation on any subset of these $r$ pairs, under
any ordering. Hence, a rotation can be viewed as a minimal
change to the current matching that results in a stable matching.

 Any worker--firm pair, $(w, f)$,
belongs to at most one rotation. Consequently, the set $R$ of rotations underlying $\Pi$ 
satisfies $|R|$ is $O(n^2)$, and hence, $\Pi$ is a succinct representation of $\Lc$; the
latter can be exponentially large. $\Pi$ will be called the {\em rotation poset} for $\Lc$.

Second, the rotation poset helps traverse the lattice as follows.
For any closed set $S$ of $\Pi$, the corresponding stable matching $M(S)$ can be obtained
as follows: start from the worker-optimal matching in the lattice and apply
the rotations in set $S$, in any topological order consistent with $\Pi$. The resulting
matching will be $M(S)$. In particular, applying all rotations in $R$, starting from the worker-optimal
matching, leads to the firm-optimal matching.

The following process yields a rotation for a stable matching $M$. For a worker $w$ let $s_M(w)$ denote the first firm $f$ on $w$'s list such that $f$ strictly prefers $w$ to its $M$-partner. Let $next_M(w)$ denote the partner in $M$ of firm $s_M(f)$. A \emph{rotation} $\rho$ \emph{exposed} in $M$ is an ordered list of pairs $\{w_0f_0, w_1f_1,\ldots, w_{r-1}f_{r-1}\}$ such that for each $i$, $0 \leq i \leq r-1$, $w_{i+1}$ is $next_M(w_i)$, where $i+1$ is taken modulo $r$. In this paper, we assume that the subscript is taken modulo $r$ whenever we mention a rotation. Notice that a rotation is cyclic and the sequence of pairs can be rotated. $M / \rho$ is defined to be a matching in which each worker not in a pair of $\rho$ stays matched to the same firm and each worker $w_i$ in $\rho$ is matched to $f_{i+1} = s_M(w_i)$. It can be proven that $M / \rho$ is also a stable matching. The transformation from $M$ to $M / \rho$ is called the \emph{elimination} of $\rho$ from $M$.

\begin{lemma}[\cite{GusfieldI}, Theorem 2.5.4]
	\label{lem:seqElimination}
	Every rotation appears exactly once in any sequence of elimination from $M_0$ to $M_z$.
\end{lemma}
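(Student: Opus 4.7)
The plan is to leverage the Birkhoff correspondence between closed sets of the rotation poset $\Pi$ and stable matchings in $\Lc$ already described in Section~\ref{sec.lattice}: each closed set $S \subseteq R$ corresponds to the matching $M(S)$ obtained from $M_0$ by applying the rotations in $S$ in any order consistent with $\Pi$. In particular, $M_0 = M(\emptyset)$, $M_z = M(R)$, and the dominance order on $\Lc$ corresponds to set inclusion on closed sets of $\Pi$.

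Fix any sequence of eliminations $M_0 = N_0 \to N_1 \to \cdots \to N_k = M_z$, where $N_i = N_{i-1}/\rho_i$, and set $S_i := \{\rho_1,\ldots,\rho_i\}$. The core step will be to prove by induction on $i$ that (a) the rotations $\rho_1,\ldots,\rho_i$ are pairwise distinct, (b) $S_i$ is a closed set of $\Pi$, and (c) $N_i = M(S_i)$. Once this is established, $N_k = M_z = M(R)$ forces $S_k = R$ by injectivity of the Birkhoff correspondence, which gives $k = |R|$ and shows that every rotation in $R$ appears exactly once.

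The hard part will be a structural sublemma that I would isolate as the engine of the induction: the rotations exposed in $M(S)$ are precisely the minimal elements of $R \setminus S$ in $\Pi$. I would prove this by identifying both sides with the set of covers (direct successors in the dominance order) of $M(S)$ in $\Lc$. On one side, in the lattice $L(\Pi)$ of closed sets of a finite poset, the covers of $S$ are exactly the sets $S \cup \{\rho\}$ for $\rho$ minimal in $R \setminus S$; under the Birkhoff isomorphism these transport to the matchings $M(S \cup \{\rho\})$. On the other side, eliminating an exposed rotation $\rho$ from $M(S)$ yields a stable matching strictly dominated by $M(S)$, and because a rotation is defined to be a minimal change to a stable matching that produces another stable matching, $M(S)/\rho$ must in fact be a cover of $M(S)$. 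Matching the two descriptions of covers pins down the bijection between exposed rotations and minimal elements of $R \setminus S$, and along the way shows that applying an exposed rotation preserves the invariant $N_i = M(S_i)$ with $S_i$ closed.

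With the sublemma in hand the induction is short: since $\rho_i$ is exposed in $N_{i-1} = M(S_{i-1})$, it is minimal in $R \setminus S_{i-1}$, hence $\rho_i \notin S_{i-1}$ (giving distinctness), $S_i = S_{i-1} \cup \{\rho_i\}$ is closed, and $N_i = M(S_{i-1})/\rho_i = M(S_i)$ from the description of $M(\cdot)$ as sequential rotation application. Combining the induction with the final identity $S_k = R$ yields the lemma.
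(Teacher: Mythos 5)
The paper cites this lemma from Gusfield and Irving (Theorem 2.5.4) and gives no proof of its own, so there is nothing in-paper to compare against; but your proposal has a genuine circularity. You take as given the Birkhoff correspondence for stable matchings---that closed sets of the rotation poset $\Pi$ are in dominance-preserving bijection with stable matchings, with $M(\emptyset) = M_0$ and $M(R) = M_z$---together with the fact that the rotations exposed in $M(S)$ are exactly the minimal elements of $R \setminus S$. Neither of these is a consequence of abstract Birkhoff theory alone: they are nontrivial theorems about the stable matching lattice, and in Gusfield and Irving's development they are proved \emph{after} and \emph{using} Theorem 2.5.4. Indeed, the very definition of the precedence relation on rotations quantifies over ``every sequence of eliminations,'' which already presupposes the kind of well-behavedness that 2.5.4 asserts. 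Deriving 2.5.4 from the rotation-poset machinery therefore runs the logic backwards.

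Even setting the circularity aside, the one step you attempt to justify directly---that $M(S)/\rho$ must be a cover of $M(S)$ in $\Lc$ whenever $\rho$ is exposed---does not follow from ``a rotation is a minimal change.'' The formal content of that minimality is only that no proper subcycle of the $r$ pairs constitutes a valid rotation; it says nothing a priori about whether another stable matching can sit strictly between $M(S)$ and $M(S)/\rho$ in the dominance order. That eliminating an exposed rotation yields a direct successor in $\Lc$ is a separate theorem in Gusfield and Irving requiring its own argument. Without it, your identification of exposed rotations with minimal elements of $R \setminus S$ is not established, and the induction does not close.
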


Let $\rho = \{w_0f_0, w_1f_1,\ldots, w_{r-1}f_{r-1}\}$ be a rotation. For $0 \leq i \leq r-1$, we say that $\rho$ \emph{moves $w_i$ from $f_i$ to $f_{i+1}$}, and \emph{moves $f_i$ from $w_{i}$ to $w_{i-1}$}. If $f$ is either $f_i$ or is strictly between $f_{i}$ and $f_{i+1}$ in $w_i$'s list, then we say that $\rho$ \emph{moves $w_i$ below $f$}. Similarly, $\rho$ \emph{moves $f_i$ above} $w$ if $w$ is $w_i$ or between $w_i$ and $w_{i-1}$ in $f_i$'s list.

\subsection{The rotation poset}
\iffalse
If a stable matching includes the pair $bg$, then we call $M$ a \emph{$bg$-matching}. The unique $bg$-matching that dominates all $bg$-matchings is said to be \emph{irreducible}. The set of all irreducible matchings forms a partial order from the dominance relationship.
The \emph{rotation poset} is based on an isomorphism with the irreducible matching partial order. 
\fi

A rotation $\rho'$ is said to \emph{precede} another rotation $\rho$, denoted by $\rho' \prec \rho$, if $\rho'$ is eliminated in every sequence of eliminations from $M_0$ to a stable matching in which $\rho$ is exposed. 
If $\rho'$ precedes $\rho$, we also say that $\rho$ \emph{succeeds} $\rho'$.
If neither $\rho' \prec \rho$ nor $\rho' \succ \rho$, we say that $\rho'$ and $\rho$ are 
\emph{incomparable}
Thus, the set of rotations forms a partial order via
this precedence relationship. The partial order on rotations is called \emph{rotation poset} and denoted by $\Pi$.

\begin{lemma}[\cite{GusfieldI}, Lemma 3.2.1]
	\label{lem:pre2}
	For any worker $w$ and firm $f$, there is at most one rotation that moves $w$ to $f$, $w$ below $f$, or $f$ above $w$. Moreover, if $\rho_1$ moves $w$ to $f$ and $\rho_2$ moves $w$ from $f$ then $\rho_1 \prec \rho_2$.
\end{lemma}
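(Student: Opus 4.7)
The plan is to base everything on a strict monotonicity property of elimination sequences. First I would establish: in any elimination sequence starting from the boy-optimal matching $M_0$, each time a rotation involves boy $b$, his partner strictly moves down his preference list, and symmetrically each girl's partner strictly moves up. The key observation is that for any stable matching $M$ with $b$'s $M$-partner equal to $g$, the girl $s_M(b)$ lies strictly below $g$ in $b$'s list: any girl $g'$ that $b$ strictly prefers to $g$ cannot strictly prefer $b$ to her own $M$-partner (else $(b,g')$ would block $M$), and $g$ itself is ruled out by the strict preference required in the definition of $s_M(b)$. Hence whenever a rotation moves $b_i$ from $g_i$ to $g_{i+1} = s_M(b_i)$, the girl $g_{i+1}$ lies strictly below $g_i$ in $b_i$'s list.

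For the first part of the lemma, I combine monotonicity with Lemma~\ref{lem:seqElimination}. Fixing any elimination sequence from $M_0$ to $M_z$, every rotation appears exactly once. If two rotations both moved $b$ to $g$, then $b$'s partner would equal $g$ immediately after each of their eliminations, forcing $b$'s partner to leave and later return to $g$ --- impossible under strict monotonicity. The event ``$\rho$ moves $b$ below $g$'' is, by the definition in the preliminaries, exactly a crossing of the threshold $g$ in $b$'s list, from a position weakly above $g$ (namely $g_i$) to a position strictly below $g$ (namely $g_{i+1}$), and strict monotonicity permits at most one such crossing. The event ``$\rho$ moves $g$ above $b$'' is handled symmetrically on the girls' side.

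For the second part, I consider an arbitrary elimination sequence from $M_0$ in which $\rho_2$ is exposed, so $b$'s partner equals $g$ just before $\rho_2$ fires. I first argue that $b$'s $M_0$-partner must lie strictly above $g$ in $b$'s list: otherwise monotonicity would prevent $b$ from ever reaching $g$, contradicting the existence of $\rho_1$. Consequently, along the prefix of the sequence up to $\rho_2$, some rotation must transition $b$'s partner from strictly above $g$ to exactly $g$; by the uniqueness in part one, that rotation is $\rho_1$, so $\rho_1$ is eliminated before $\rho_2$ in every such sequence, yielding $\rho_1 \prec \rho_2$ by the definition of precedence. I expect the main obstacle to be bookkeeping around the definition of ``moves $b$ below $g$'', which combines the cases $g = g_i$ and $g$ strictly between $g_i$ and $g_{i+1}$ --- superficially different, but once unpacked both describe the same monotone threshold crossing underlying the whole argument.
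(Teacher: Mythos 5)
The paper states this as a cited result (Gusfield and Irving, Lemma 3.2.1) and does not reproduce a proof, so there is no in-paper argument to compare against. Your reconstruction is correct and is essentially the standard textbook argument: the observation that $s_M(b)$ lies strictly below $b$'s current partner gives strict monotonicity of $b$'s partner along any elimination sequence (dually, strict improvement for girls), which together with Lemma~\ref{lem:seqElimination} yields uniqueness of each threshold crossing, and your endpoint argument (that $b$'s $M_0$-partner must lie strictly above $g$ because $\rho_1$ must fire in a full sequence) correctly forces $\rho_1$ into the prefix of any sequence exposing $\rho_2$, giving $\rho_1 \prec \rho_2$ by the definition of precedence.
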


\begin{lemma}[\cite{GusfieldI}, Lemma 3.3.2]
	\label{lem:computePoset}
	$\Pi$ contains at most $O(n^2)$ rotations and can be computed in polynomial time.
\end{lemma}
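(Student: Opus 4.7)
The plan is to tackle the two claims separately: the cardinality bound and then the constructive algorithm.

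\emph{Counting.} The bound $|\Pi| = O(n^2)$ follows immediately from Lemma \ref{lem:pre2}. Any rotation $\rho = \{b_0 g_0, b_1 g_1, \ldots, b_{r-1} g_{r-1}\}$ has $r \geq 2$ pairs, and for each $i$ it moves $b_i$ to the new partner $g_{i+1}$. By Lemma \ref{lem:pre2}, at most one rotation moves any fixed boy $b$ to any fixed girl $g$, so the sum of $r$ over all rotations is bounded by the total number of ordered boy-girl pairs, $n^2$. Combined with $r \geq 2$, this yields $|\Pi| \leq n^2/2$.

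\emph{Enumeration.} I would compute $M_0$ in $O(n^2)$ time via Gale-Shapley, and then repeat the following until the current matching equals $M_z$: compute the function $next_M$ on $\mathcal{B}$ in $O(n)$ time, extract any cycle from its functional digraph to obtain an exposed rotation $\rho$, record $\rho$, and update $M \leftarrow M/\rho$. By Lemma \ref{lem:seqElimination}, every rotation appears exactly once in such a sequence, so this loop runs for $|\Pi| = O(n^2)$ iterations and outputs every rotation in total time $O(n^3)$.

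\emph{Precedence edges.} To assemble $\Pi$, I would extract the immediate predecessors of each rotation using the two canonical relationships implied by Lemma \ref{lem:pre2}: (type~1) $\rho' \prec \rho$ whenever $\rho'$ moves some boy $b$ to a girl $g$ and $\rho$ moves $b$ away from $g$; (type~2) $\rho' \prec \rho$ whenever $\rho'$ moves $b$ below $g$ and $\rho$ moves $g$ above $b$. These can be detected by, for each boy $b$, scanning between his successive partners along the enumeration from the preceding step and recording the rotation (if any) responsible for passing each intermediate girl. This costs polynomial time, and the transitive closure of the resulting digraph is $\Pi$.

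\emph{Main obstacle.} The hard part is justifying that the transitive closure of the type-1 and type-2 edges captures every precedence relation defined in the excerpt via elimination sequences, rather than merely being a subset of it. The forward direction (every such edge is forced by Lemma \ref{lem:pre2}) is immediate. The reverse direction requires an exchange argument: if no chain of type-1/type-2 edges links $\rho'$ to $\rho$, then any elimination sequence reaching a matching in which $\rho$ is exposed can be reordered so as to defer $\rho'$ past the step that exposes $\rho$, contradicting the definition of $\rho' \prec \rho$.
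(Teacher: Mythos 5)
The paper does not prove this lemma; it is quoted verbatim from Gusfield and Irving \cite{GusfieldI} (their Lemma 3.3.2), so there is no in-paper argument to compare against. Judging your attempt on its own merits: the cardinality bound is correct and is exactly the standard argument (one rotation per ordered boy--girl pair by Lemma~\ref{lem:pre2}, each rotation consuming $r \geq 2$ such pairs). The enumeration phase --- Gale--Shapley for $M_0$, then repeatedly extract a cycle of $next_M$ and eliminate --- is also correct; Lemma~\ref{lem:seqElimination} guarantees all rotations are produced.

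The precedence phase is where you correctly flag the real difficulty, but two issues remain. First, your ``type-2'' edge rule (``$\rho'$ moves $b$ below $g$ and $\rho$ moves $g$ above $b$'') is attributed to Lemma~\ref{lem:pre2}, but that lemma only asserts the \emph{uniqueness} of such rotations and gives the \emph{type-1} precedence (``moves to / moves from''). The second class of edges used by Gusfield--Irving is subtler: for a pair $b_i g_i$ in $\rho$ and a boy $b$ with $b_{i-1} >_{g_i} b >_{g_i} b_i$, one adds an edge from the rotation that moves $b$ below $g_i$ to $\rho$. Your formulation is not obviously equivalent, and in any case neither form of the rule is established by the lemmas quoted in this paper. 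Second, and more fundamentally, you acknowledge but do not supply the completeness argument: that the transitive closure of these locally-detected edges recovers \emph{every} precedence, i.e.\ that no further constraints on elimination order exist. Sketching it as ``an exchange argument to defer $\rho'$'' is the right instinct, but this is the actual theorem (Gusfield--Irving's characterization of the rotation poset) and cannot be waved away; without it the algorithm might output a strict sub-poset of $\Pi$, whose closed sets would then form a proper superset of the stable matchings. As a blind reconstruction the attempt has the right shape and correctly identifies where the content lies, but it is not a complete proof.
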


Consequently, $\Pi$ is a succinct representation of $\Lc$; the
latter can be exponentially large.

\begin{definition}
	\label{closedset}
A closed set of a poset is a set $S$ of elements of the poset such that if an element is in 
$S$ then all of its predecessors are also in $S$.
\end{definition}

There is a one-to-one relationship between the stable matchings and the closed subsets of $\Pi$. Given a closed set $S$, the correponding matching $M$ is found by eliminating the rotations starting from $M_0$ according to the topological ordering of the elements in the set $S$. 
We say that $S$ \emph{generates} $M$ and that $\Pi$ {\em generates the lattice} $\Lc$ of all
stable matchings of this instance.

Let $S$ be a subset of the elements of a poset, and let $v$ be an element in $S$. 
We say that $v$ is a \emph{minimal} element in $S$ if there is no predecessors of $v$ in $S$. Similarly, $v$ is a \emph{maximal} element in $S$ if it has no successors in $S$.

The \emph{Hasse diagram} of a poset is a directed graph with a vertex for each element in poset,
and an edge from $x$ to $y$ if $x \prec y$ and there is no $z$ such that $x \prec z \prec y$.
In other words, all precedences implied by transitivity are suppressed.

\subsection{Sublattice and semi-sublattice}
A \emph{sublattice} $\Lc'$ of a distributive lattice $\Lc$ is subset of $\Lc$ such that 
for any two elements $x,y \in \Lc$, $x \vee y \in \Lc'$ and  $x \wedge y \in \Lc'$ whenever $x,y \in \Lc'$, where $\vee$ and $\wedge$ are the join and meet operations of lattice $\Lc$.

A \emph{join semi-sublattice} $\Lc'$ of a distributive lattice $\Lc$ is subset of $\Lc$ such that for any two elements $x,y \in \Lc$, $x \vee y \in \Lc'$ whenever $x,y \in \Lc'$. 

Similarly, \emph{meet semi-sublattice} $\Lc'$ of a distributive lattice $\Lc$ is subset of $\Lc$ such that for any two elements $x,y \in \Lc$, $x \wedge y \in \Lc'$ whenever $x,y \in \Lc'$. 

Note that $\Lc'$ is a sublattice of $\Lc$ iff $\Lc'$ is both join and meet semi-sublattice of $\Lc$.

\iffalse
We note that in the mathematics literature, two types of semilattices are defined: {\em meet semilattices} and
{\em join semilattices}, which are sets that are closed under the meet and join operation, 
respectivey. For reasons of simplicity of notation, our definition of semi-sublattices has an 
asymmetry, since we only need those subsets of a lattice which are closed under the meet operation.
\fi

\begin{proposition} \label{prop.sublattice}
	Let $A$ be an instance of stable matching and let $B$ be another instance obtained from $A$ by changing the lists of only one side, either workers or firms, but not both. Then the matchings in $\Mc_A \cap \Mc_B$ form a sublattice in each of the two lattices. 
\end{proposition}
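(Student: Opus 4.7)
The plan is to exploit the fact that the meet and join operations on stable matchings admit a dual description: the meet can be computed from the boys' side (each boy keeps his more-preferred partner) or, equivalently, from the girls' side (each girl keeps her less-preferred partner); similarly for the join with the roles reversed. This is a classical fact from \cite{GusfieldI}. The hypothesis of the proposition — that only one gender's lists change — will let us leverage whichever description uses the unchanged gender.

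Without loss of generality, assume $B$ is obtained from $A$ by altering only the boys' preference lists, so the girls have identical preferences in $A$ and $B$ (the other case is symmetric, with the roles of meet and join swapped). Let $M, M' \in \Mc_A \cap \Mc_B$. Write $\wedge_A, \vee_A$ for the meet and join in $\Lc_A$ and $\wedge_B, \vee_B$ for those in $\Lc_B$. The first step is to observe that the matching $N := M \wedge_A M'$, defined by giving each boy his $A$-preferred partner from $\{M,M'\}$, coincides with the matching in which each girl keeps her $A$-less-preferred partner from $\{M, M'\}$. Since the girls' preferences in $B$ are identical to those in $A$, this latter description is precisely $M \wedge_B M'$. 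Hence $N = M \wedge_A M' = M \wedge_B M'$ as matchings; by definition, $N$ is a stable matching for $A$ (being the meet in $\Lc_A$) and for $B$ (being the meet in $\Lc_B$), so $N \in \Mc_A \cap \Mc_B$.

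The same argument applies verbatim to the join: $M \vee_A M'$ is the matching in which every girl gets her more-preferred partner from $\{M,M'\}$ under her (common) preference list, which is the same matching as $M \vee_B M'$, and it belongs to both $\Mc_A$ and $\Mc_B$. Thus $\Mc_A \cap \Mc_B$ is closed under both $\wedge_A, \vee_A$ and $\wedge_B, \vee_B$, and is therefore a sublattice of each of $\Lc_A$ and $\Lc_B$. In the symmetric case where only girls' lists change, the boys' preferences are preserved, and one instead uses the boy-side descriptions of meet and join to conclude the identical statement.

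The only substantive step is the dual description of meet and join — everything else is bookkeeping — so the main obstacle is simply citing or briefly justifying that classical fact; once invoked, the argument is immediate because the operations reduce to pointwise choices under a preference list that is the same in both instances.
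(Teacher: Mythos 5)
Your proof is correct and follows essentially the same route as the paper's: both reduce the claim to the observation that the lattice meet and join depend only on the unchanged gender's preference lists, so $\wedge_A = \wedge_B$ and $\vee_A = \vee_B$ on $\Mc_A \cap \Mc_B$. The only cosmetic difference is that you take the WLOG case where boys' lists change and therefore invoke the girl-side dual description of meet/join, whereas the paper picks the mirror case (girls' lists change) so that the primary boy-side definition applies directly; both hinge on the same classical fact.
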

\begin{proof}
	It suffices to show that $\Mc_A \cap \Mc_B$ is a sublattice of $\Lc_A$. Assume $| \Mc_A \cap \Mc_B| > 1$ and let $M_1$ and $M_2$ be two different matchings in $\Mc_A \cap \Mc_B$. 
	Let $\vee_A$ and $\vee_B$ be the join operations under $A$ and $B$ respectively. Likewise, let $\wedge_A$ and $\wedge_B$ be the meet operations under $A$ and $B$.
	
	By definition of join operation in Section~\ref{subsection.latticeOfSM}, $M_1 \vee_A M_2$ is the matching obtained by assigning each worker to its less preferred partner (or equivalently, each firm to its more preferred partner) from $M_1$ and $M_2$ according to instance $A$.
	Without loss of generality, assume that $B$ is an instance obtained from $A$ by changing the lists of only firms. Since the list of each worker is identical in $A$ and $B$, its less preferred partner 
	from $M_1$ and $M_2$ is also the same in $A$ and $B$. 
	Therefore, $M_1 \vee_A M_2 =  M_1 \vee_B M_2 $. A similar argument can be applied to show that $M_1 \wedge_A M_2 =  M_1 \wedge_B M_2$.
	
	Hence, $M_1 \vee_A M_2$ and $M_1 \wedge_A M_2$ are both in $\Mc_A \cap \Mc_B$ as desired.
\end{proof}

\begin{corollary}
	\label{cor.sublatticeIntersection}
	Let $A$ be an instance of stable matching and let $B_1, \ldots ,B_k$ be other instances obtained from $A$ each by changing the lists of only one side, either workers or firms, but not both. Then the matchings in $\Mc_A \cap \Mc_{B_1} \cap \ldots \cap \Mc_{B_k}$ form a sublattice in $\Mc_A$. 
\end{corollary}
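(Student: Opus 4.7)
The plan is to reduce the corollary to Proposition~\ref{prop.sublattice} by a simple induction/intersection argument. The key observation is that an arbitrary intersection of sublattices of a fixed lattice is itself a sublattice, because if $x, y$ lie in every sublattice of a family, then $x \vee y$ and $x \wedge y$ (computed in the ambient lattice $\Lc_A$) also lie in every sublattice of the family, hence in the intersection.

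Concretely, I would first note that
\[
\Mc_A \cap \Mc_{B_1} \cap \cdots \cap \Mc_{B_k} \;=\; \bigcap_{i=1}^k \bigl( \Mc_A \cap \Mc_{B_i} \bigr).
\]
By Proposition~\ref{prop.sublattice} applied to the pair $(A, B_i)$, each $\Mc_A \cap \Mc_{B_i}$ is a sublattice of $\Lc_A$, since $B_i$ differs from $A$ only in the preference lists of one gender. In particular, for any two matchings $M, M' \in \Mc_A \cap \Mc_{B_i}$, the joins and meets $M \vee_A M'$ and $M \wedge_A M'$ computed in $\Lc_A$ lie in $\Mc_A \cap \Mc_{B_i}$.

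Now take any $M, M' \in \Mc_A \cap \Mc_{B_1} \cap \cdots \cap \Mc_{B_k}$. Then for every $i \in \{1, \ldots, k\}$ we have $M, M' \in \Mc_A \cap \Mc_{B_i}$, so by the previous paragraph $M \vee_A M' \in \Mc_A \cap \Mc_{B_i}$ and $M \wedge_A M' \in \Mc_A \cap \Mc_{B_i}$. Intersecting over all $i$ yields $M \vee_A M', \; M \wedge_A M' \in \Mc_A \cap \Mc_{B_1} \cap \cdots \cap \Mc_{B_k}$, which is exactly the sublattice condition with respect to $\Lc_A$.

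There is essentially no obstacle: the entire content is packaged in Proposition~\ref{prop.sublattice}, and the rest is the trivial fact that the intersection of sublattices is a sublattice. I could alternatively present this as a one-line induction on $k$ (base case $k=1$ is Proposition~\ref{prop.sublattice}, and the inductive step applies Proposition~\ref{prop.sublattice} together with closure under pairwise intersection of sublattices), but the simultaneous-intersection argument above is cleaner and avoids any bookkeeping.
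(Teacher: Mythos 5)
Your proof is correct and follows essentially the same route as the paper: apply Proposition~\ref{prop.sublattice} to each pair $(A, B_i)$, then observe that $M \vee_A M'$ and $M \wedge_A M'$ stay in every $\Mc_A \cap \Mc_{B_i}$ and hence in the intersection. The paper states this more tersely but the argument is the same.
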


\begin{proof}
	Assume $| \Mc_A \cap \Mc_{B_1} \cap \ldots \cap \Mc_{B_k}| > 1$ and let $M_1$ and $M_2$ be two different matchings in $\Mc_A \cap \Mc_{B_1} \cap \ldots \cap \Mc_{B_k}$. Therefore, 
	$M_1$ and $M_2$ are in $\Mc_A \cap \Mc_{B_i}$ for each $1 \leq i \leq k$. By Proposition~\ref{prop.sublattice}, $\Mc_A \cap \Mc_{B_i}$ is a sublattice of $\Lc_A$. Hence, $M_1 \vee_A M_2$ and $M_1 \wedge_A M_2$ are in $\Mc_A \cap \Mc_{B_i}$ for each $1 \leq i \leq k$. 
	The claim then follows. 
\end{proof}

This corollary serves as motivation for the algorithm (Algorithm~\ref{alg:daalgorithm}) given in Appendix~\ref{app:algorithms}. This modified Deferred Algorithm works when errors are only on one side. Extending this algorithm to errors on both sides results in an algorithm (Algorithm~\ref{alg:daalgorithm2}) that has an exponential run time.

This motivates us to characterize sublattices in the lattice of stable matchings. In Section~\ref{subsection.semisublatticeNecessarySufficient}, 
we show that for any instance $B$ obtained by permuting the preference list of one worker or one firm,
$\Mc_{A \setminus B}$ forms a semi-sublattice of $\Lc_A$ (Lemma~\ref{lem:MABsemi}).  
In particular, if the list of a worker is permuted, $\Mc_{A \setminus B}$ forms a join semi-sublattice of $\Lc_A$, and if the list of a firm is permuted, $\Mc_{A \setminus B}$ forms a meet semi-sublattice of $\Lc_A$. 
In both cases, $\Mc_A \intersect \Mc_B$ is a sublattice of $\Lc_A$ and of $\Lc_B$ as shown in Proposition~\ref{prop.sublattice}.
	
	\section{Birkhoff's Theorem on Sublattices}
\label{sec.generalization}

Let $\Pi$ be a finite poset. For simplicity of notation, in this paper we will assume that $\Pi$ must
have {\em two dummy elements} $s$ and $t$; the remaining elements will be called {\em proper
elements} and the term {\em element} will refer to proper as well as dummy elements. The element
$s$ precedes all other elements and $t$ succeeds all other elements in $\Pi$. A {\em proper
closed set} of $\Pi$ is any closed set that contains $s$ and does not contain $t$.
It is easy to see that the set of all proper closed sets of $\Pi$ form a distributive 
lattice under the operations of set intersection and union. We will denote this lattice by 
$L(\Pi)$. The following has also been called {\em the fundamental theorem for finite distributive lattices}.

\begin{theorem}
(Birkhoff \cite{Birkhoff})	Every finite distributive lattice $\Lc$ is isomorphic to $L(\Pi)$, 
for some finite poset $\Pi$.
\end{theorem}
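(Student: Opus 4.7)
The plan is to construct the desired poset $\Pi$ from the join-irreducible elements of $\Lc$. Call an element $x \in \Lc$ \emph{join-irreducible} if $x$ is not the bottom element $\bot$ of $\Lc$ and whenever $x = y \vee z$ we have $y = x$ or $z = x$. Let $P$ be the set of join-irreducibles of $\Lc$, ordered by the restriction of the lattice order, and take $\Pi$ to be $P$ augmented with the two dummy elements $s, t$ required by the paper's conventions, with $s$ below and $t$ above every element of $P$. The candidate isomorphism is the map $\phi : \Lc \to L(\Pi)$ defined by
\[
\phi(x) \;=\; \{s\} \,\cup\, \{\, p \in P \,:\, p \leq x \,\}.
\]
It is immediate that $\phi(x)$ is a proper closed set: it contains $s$, excludes $t$, and is downward closed because $p' \leq p \leq x$ implies $p' \leq x$.

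To show $\phi$ is a bijection, I would exhibit an explicit inverse $\psi : L(\Pi) \to \Lc$ given by $\psi(S) = \bigvee \{\, p \,:\, p \in S \setminus \{s\} \,\}$, with the empty join equal to $\bot$. The identity $\psi(\phi(x)) = x$ amounts to the claim that every $x \in \Lc$ is the join of the join-irreducibles lying below it, which I would establish by downward induction on $\Lc$: if $x$ is itself join-irreducible the claim is trivial, otherwise $x = y \vee z$ with $y, z < x$ and the inductive hypothesis applies to $y$ and $z$. Note that this direction does not yet use distributivity.

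The harder direction, and the crux of the proof, is $\phi(\psi(S)) = S$, which reduces to the following join-prime property of join-irreducibles in a distributive lattice: if $p \in P$ and $p \leq q_1 \vee \cdots \vee q_k$ for $q_1, \dots, q_k \in P$, then $p \leq q_i$ for some $i$. This is exactly where distributivity is indispensable: by distributivity,
\[
p \;=\; p \wedge (q_1 \vee \cdots \vee q_k) \;=\; (p \wedge q_1) \vee \cdots \vee (p \wedge q_k),
\]
and since $p$ is join-irreducible, some summand $p \wedge q_i$ must equal $p$, giving $p \leq q_i$. The same property will also yield the join-preservation $\phi(x \vee y) = \phi(x) \cup \phi(y)$, while meet-preservation $\phi(x \wedge y) = \phi(x) \cap \phi(y)$ is immediate from the definition of $\phi$.

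The main obstacle is precisely the verification of this join-prime step. Without distributivity, $\phi$ remains an injective, meet-preserving map into $L(\Pi)$, but it can fail to be surjective or to preserve joins; non-distributive lattices such as the diamond $M_3$ or pentagon $N_5$ are not recoverable from their posets of join-irreducibles. Distributivity is what turns join-irreducibles into join-primes, and that single implication drives every remaining verification, yielding the desired isomorphism $\Lc \cong L(\Pi)$.
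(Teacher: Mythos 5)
The paper cites this as Birkhoff's classical representation theorem (\cite{Birkhoff}) and does not supply a proof, so there is nothing to compare against. Your argument is the standard and correct proof via the poset of join-irreducibles, with the join-prime lemma correctly isolated as the unique point where distributivity enters; the only cosmetic omission is the base case $x=\bot$ (handled by the empty join) in the downward induction, and you could note that the same distributivity computation in fact proves join-primality against arbitrary $y \vee z$, which is what the join-preservation of $\phi$ actually uses.
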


%\subsection{Statement}

Our application of Birkhoff's Theorem deals with the sublattices of a finite distributive 
lattice. First, in Definition \ref{def:compression} we state the critical operation of 
\emph{compression of a poset}.  

\begin{definition}
	\label{def:compression}
Given a finite poset $\Pi$, first partition its elements; each subset will be called
a \emph{meta-element}. Define the following precedence relations among the meta-elements: 
if $x,y$ are elements of $\Pi$ such that $x$ is in meta-element $X$, $y$ is in meta-element $Y$ 
and $x$ precedes $y$, then $X$ precedes $Y$. Assume that these precedence relations yield a
partial order, say $Q$, on the meta-elements (if not, this particular partition is not useful 
for our purpose). Let $\Pi'$ be any partial order on the meta-elements such that the precedence
relations of $Q$ are a subset of the precedence relations of $\Pi'$. Then $\Pi'$ will be called
a {\em compression} of $\Pi$. Let $A_s$ and $A_t$ denote the meta-elements of $\Pi'$ containing $s$ 
and $t$, respectively.
\end{definition}

\begin{figure}
	\begin{wbox}
		\begin{minipage}[c]{0.5\textwidth}
			\centering
			\def\svgscale{0.4}
			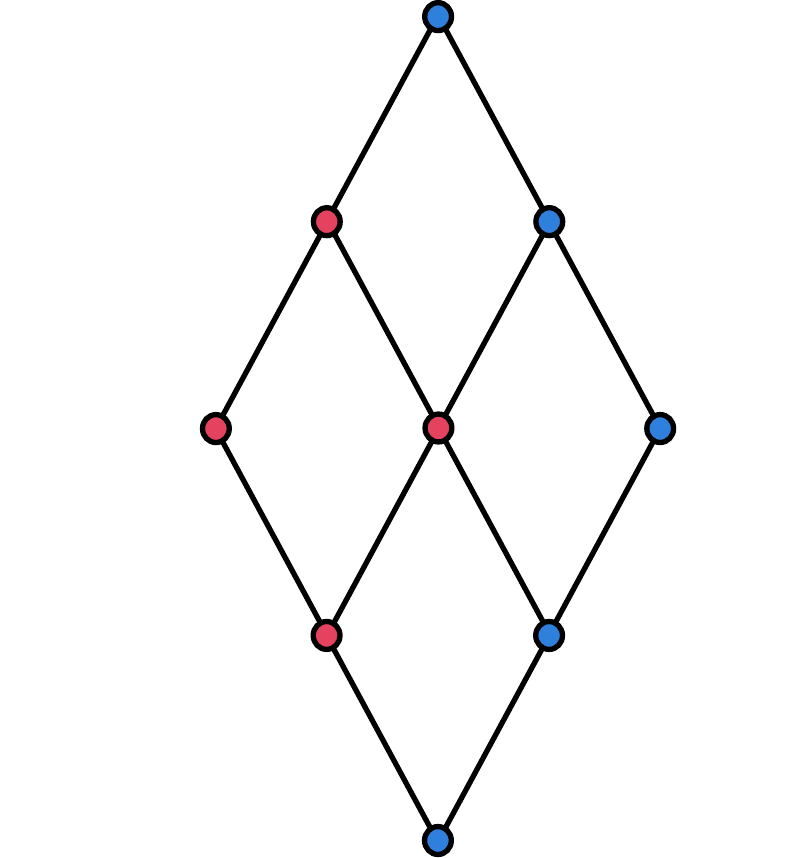
			
			~~~~$\Lc$
		\end{minipage}
		\begin{minipage}[c]{0.5\textwidth}
			\centering
			\def\svgscale{0.4}
			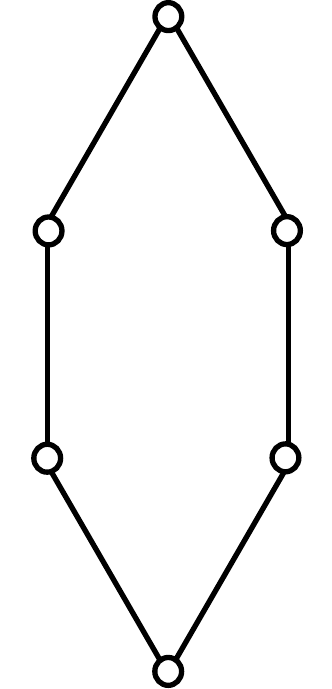
			
			$P$	
		\end{minipage}
		\par  
		\begin{minipage}[c]{0.55\textwidth}
			\centering
			\def\svgscale{0.4}
			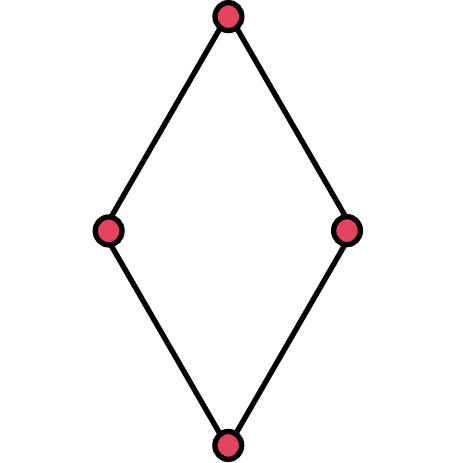
			
			$P_1$	
		\end{minipage}
		\begin{minipage}[c]{0.45\textwidth}
			\centering
			\def\svgscale{0.4}
			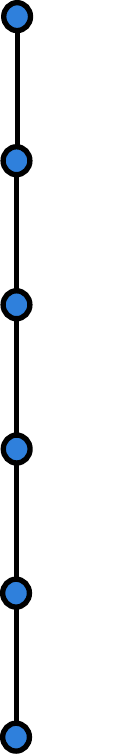
			
			$P_2$	
		\end{minipage}
	\end{wbox}
	\caption{Two examples of compressions. Lattice $\Lc = L(P)$.
$P_1$ and $P_2$ are compressions of $P$, and they generate the sublattices in $\Lc$, of red 
and blue elements, respectively.
	}
	\label{ex:compression} 
\end{figure} 

For examples of compressions see Figure~\ref{ex:compression}.
Clearly, $A_s$ precedes all other meta-elements in $\Pi'$ and $A_t$ succeeds all other meta-elements 
in $\Pi'$. Once again, by a {\em proper closed set of $\Pi'$} we mean a closed set of $\Pi'$ that
contains $A_s$ and does not contain $A_t$. Then the lattice formed by the set of all proper closed 
sets of $\Pi'$ will be denoted by $L(\Pi')$.

\subsection{Proof of Birkhoff's Theorem using Stable Matching Lattices}
\label{gen:proof}

We will prove Theorem \ref{thm:generalization} in the context of stable matching lattices; this is
w.l.o.g. since stable matching lattices are as general as finite distributive lattices.
In this context, the proper elements of partial order $\Pi$ will be rotations, 
%Let $\Lc = L(P)$ be the corresponding stable matching lattice. Let $P_f$ denote a compression of $P$
%and $\Lc'$ denote a sublattice of $\Lc$. 
and meta-elements are called \emph{meta-rotations}.
Let $\Lc = L(\Pi)$ be the corresponding stable matching lattice. 

Clearly it suffices to show that:
\begin{itemize}
	\item Given a compression $\Pi'$, $L(\Pi')$ is isomorphic to a sublattice of $\Lc$.
	\item Any sublattice $\Lc'$ is $L(\Pi')$ for some compression $\Pi'$.
\end{itemize}
These two proofs are given
in Sections \ref{sec.forward} and \ref{sec.backward}, respectively.

\subsubsection{$L(\Pi')$ is isomorphic to a sublattice of $L(\Pi)$}
\label{sec.forward}

Let $I$ be a closed subset of $\Pi'$; clearly $I$ is a set of meta-rotations. Define $\rot(I)$ to
be the union of all meta-rotations in $I$, i.e.,
\[ \rot(I) = \{ \rho \in A: A \text{ is a meta-rotation in } I \} .\]

We will define the process of {\em elimination of a meta-rotation} $A$ of $\Pi'$ to be the
elimination of the rotations in $A$ in an order consistent with partial order $\Pi$. Furthermore,
{\em elimination of meta-rotations in $I$} will mean starting from stable matching $M_0$ in
lattice $\Lc$ and eliminating all meta-rotations in $I$
in an order consistent with $\Pi'$. Observe that this is equivalent to starting from stable matching 
$M_0$ in $\Lc$ and eliminating all rotations
in $\rot(I)$ in an order consistent with partial order $\Pi$. This follows from
Definition~\ref{def:compression}, since
if there exist rotations $x,y$ in $\Pi$ such that $x$ is in meta-rotation $X$, $y$ is in meta-rotation $Y$ and $x$ precedes $y$, then $X$ must also precede $Y$. 
Hence, if the elimination of all rotations in $\rot(I)$ gives matching $M_I$, then elimination
of all meta-rotations in $I$ will also give the same matching.

Finally, to prove the statement in the title of this section, it suffices to observe that if
$I$ and $J$ are two proper closed sets of the partial order $\Pi'$ then
\[ \rot(I \cup J) = \rot(I) \cup \rot(J) \ \ \ \ \mbox{and} \ \ \ \ 
\rot(I \cap J) = \rot(I) \cap \rot(J) . \]
It follows that the set of matchings obtained by elimination of meta-rotations in a proper
closed set of $\Pi'$ are closed under the operations of meet and join and hence form a 
sublattice of $\Lc$.

\subsubsection{Sublattice $\Lc'$ is generated by a compression $\Pi'$ of $\Pi$}
\label{sec.backward}

We will obtain compression $\Pi'$ of $\Pi$ in stages. First, we show how to partition the set of 
rotations of $\Pi$ to obtain the meta-rotations of $\Pi'$. We then find precedence relations among 
these meta-rotations to obtain $\Pi'$. Finally, we show $L(\Pi') = \Lc'$.

Notice that $\Lc$ can be represented by its Hasse diagram $H(\Lc)$. 
Each edge of $H(\Lc)$
{\em contains} exactly one (not necessarily unique) rotation of $\Pi$.
Then, by Lemma~\ref{lem:seqElimination}, for any two stable matchings $M_1, M_2 \in \Lc$ such that
$M_1 \prec M_2$, all paths from $M_1$ to $M_2$ in $H(\Lc)$ contain the same set of rotations.

\begin{definition} 
	For $M_1, M_2 \in \Lc'$, $M_2$ is said to be an {\em $\Lc'$-direct successor} of $M_1$ iff 
	$M_1 \prec M_2$ and there is no $M \in \Lc'$ such that $M_1 \prec M \prec M_2$.
	Let $M_1 \prec \ldots \prec M_k$ be a sequence of matchings in $\Lc'$ such that $M_{i+1}$ 
	is an $\Lc'$-direct successor of $M_{i}$ for all $1 \leq i \leq k-1$. Then any
	path in $H(\Lc)$ from $M_1$ to $M_k$ containing $M_{i}$, for all $1 \leq i \leq k-1$, is called 
	an {\em $\Lc'$-path}. 
\end{definition}

Let $M_{0'}$ and $M_{z'}$ denote the worker-optimal and firm-optimal matchings, respectively, 
in $\Lc'$. For $M_1, M_2 \in \Lc'$ with $M_1 \prec M_2$, let $S_{M_1, M_2}$ denote the set of 
rotations contained on any $\Lc'$-path from $M_1$ to $M_2$. Further,
let $S_{M_0,M_{0'}}$ and $S_{M_{z'},M_z}$ denote the set of rotations contained on any path from 
$M_0$ to $M_{0'}$ and $M_{z'}$ to $M_z$, respectively in $H(\Lc)$. Define the following set
whose elements are sets of rotations.
$$ \mathcal{S} =  \{S_{M_i,M_j} \ | \  M_j \text{ is an $\Lc'$-direct successor of } M_i, \ 
\text{for every pair of matchings} \ M_i, M_j \ \text{in} \ \Lc' \} \Union $$
$$ \ \  \{ S_{M_0,M_{0'}}, \ S_{M_{z'},M_z} \}. $$

\begin{lemma} 
	\label{lem:partition}
	$\mathcal{S}$ is a partition of $\Pi$.
\end{lemma}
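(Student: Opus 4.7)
The plan is to verify the two defining properties of a partition of $\Pi$: \emph{coverage} (every rotation lies in at least one set of $\mathcal{S}$) and \emph{disjointness} (any two distinct sets in $\mathcal{S}$ are disjoint, equivalently, if two sets share a rotation they coincide). Throughout, I would use the Birkhoff correspondence $M \leftrightarrow C(M)$ between stable matchings in $\Lc$ and closed sets of $\Pi$, together with the distributive identities $C(M \wedge N) = C(M) \cap C(N)$ and $C(M \vee N) = C(M) \cup C(N)$, the monotonicity $M_1 \preceq M_2 \Leftrightarrow C(M_1) \subseteq C(M_2)$, and the identity $S_{M_1, M_2} = C(M_2) \setminus C(M_1)$ for $M_1 \preceq M_2$ (which follows from Lemma~\ref{lem:seqElimination}).

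For coverage, I would fix any maximal chain $M_{0'} = N_0 \prec N_1 \prec \cdots \prec N_k = M_{z'}$ in $\Lc'$ consisting of $\Lc'$-direct successor pairs. Concatenating a path in $H(\Lc)$ from $M_0$ to $M_{0'}$, this chain, and a path from $M_{z'}$ to $M_z$ yields a sequence of eliminations from $M_0$ to $M_z$. By Lemma~\ref{lem:seqElimination}, every rotation appears exactly once along the sequence and is therefore contained in exactly one of $S_{M_0, M_{0'}}$, $S_{N_i, N_{i+1}}$ for some $0 \leq i \leq k-1$, or $S_{M_{z'}, M_z}$, each of which is an element of $\mathcal{S}$.

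For disjointness, the cases involving $S_{M_0, M_{0'}}$ or $S_{M_{z'}, M_z}$ are immediate from $C(M_{0'}) \subseteq C(M_i) \subseteq C(M_j) \subseteq C(M_{z'})$ for every $\Lc'$-direct successor pair $(M_i, M_j)$. The heart of the argument is the claim: if $\rho \in S_{M_i, M_j} \cap S_{N_i, N_j}$ for two $\Lc'$-direct successor pairs, then $S_{M_i, M_j} = S_{N_i, N_j}$ (so they are the same element of $\mathcal{S}$). I would prove it in two steps. First, consider $X = M_i \vee (M_j \wedge N_j) \in \Lc'$, which lies between $M_i$ and $M_j$ and so, by the direct-predecessor property, equals one of them; expanding $C(X) = C(M_i) \cup (C(M_j) \cap C(N_j))$ via the distributive identities and using $\rho \in C(M_j \wedge N_j)$ together with $\rho \notin C(M_i)$ rules out $X = M_i$, so $M_j = M_i \vee (M_j \wedge N_j)$. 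The analogous argument applied to $M_i \vee (M_j \wedge N_i)$ yields $M_j \wedge N_i \preceq M_i$; by the symmetric versions (swapping the roles of the pairs), $N_j = N_i \vee (M_j \wedge N_j)$ and $M_i \wedge N_j \preceq N_i$, from which $M_i \wedge N_j = M_j \wedge N_i$. Second, substituting $C(M_j) = C(M_i) \cup C(M_j \wedge N_j)$ and simplifying gives
\[
S_{M_i, M_j} \;=\; C(M_j) \setminus C(M_i) \;=\; C(M_j \wedge N_j) \setminus C(M_i \wedge N_j),
\]
and the analogous computation for $S_{N_i, N_j}$ produces $C(M_j \wedge N_j) \setminus C(M_j \wedge N_i)$; these are equal by the identity $M_i \wedge N_j = M_j \wedge N_i$.

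The main obstacle is the disjointness step for two parallel $\Lc'$-direct successor pairs: the difficulty is extracting enough information from the single shared rotation $\rho$ and the direct-predecessor condition to force both $S$'s to collapse to the common meet-based expression. The coverage argument, by contrast, is essentially immediate once a maximal chain through $\Lc'$ is threaded into a full elimination sequence.
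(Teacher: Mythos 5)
Your proof is correct, and it reaches the disjointness claim by a genuinely different route than the paper. Both you and the paper reduce disjointness to the claim that two $\Lc'$-direct-successor sets $S_{M_i,M_j}$ and $S_{N_i,N_j}$ with a common rotation must coincide, and both rely on the Birkhoff correspondence, the distributive identities for $C(\cdot)$, and the two-element alternative forced by the direct-successor property. But the arguments diverge from there. The paper runs a proof by contradiction: assuming the two sets are distinct with nonempty intersection $X$, it constructs the intermediate matching $M = (M_j \vee N_i) \wedge (M_i \vee N_j)$, computes $S_{M_0, M \wedge M_j}$, and argues that $M \wedge M_j$ is a matching of $\Lc'$ strictly between $M_i$ and $M_j$, contradicting direct-successorship. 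You instead derive the equality directly: you form the matchings $M_i \vee (M_j \wedge N_j)$ and $M_i \vee (M_j \wedge N_i)$ (and their $N$-symmetric partners), use the membership of the shared rotation $\rho$ to pin each of them down as $M_j$ or $M_i$ respectively, deduce the identity $M_i \wedge N_j = M_j \wedge N_i$, and then substitute to show both $S_{M_i,M_j}$ and $S_{N_i,N_j}$ equal $C(M_j \wedge N_j) \setminus C(M_i \wedge N_j)$. Your version is slightly longer, but each intermediate construction has a transparent purpose, all set-algebra is spelled out, and you additionally treat disjointness with the boundary sets $S_{M_0, M_{0'}}$ and $S_{M_{z'}, M_z}$ explicitly, which the paper leaves implicit. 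The coverage halves of the two proofs are essentially the same: thread a maximal $\Lc'$-chain into a full elimination sequence from $M_0$ to $M_z$ and invoke Lemma~\ref{lem:seqElimination}.
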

\begin{proof}
	First, we show that any rotation must be in an element of $\mathcal{S}$. Consider a path $p$ from $M_0$ to $M_z$ in the $H(\Lc)$ such that $p$ goes from $M_{0'}$ to $M_{z'}$ via an $\Lc'$-path. Since $p$ is a path from $M_0$ to $M_z$, all rotations of $\Pi$ are contained on
	$p$ by Lemma~\ref{lem:seqElimination}. Hence, they all appear in the sets in $\mathcal{S}$.
	
	Next assume that there are two pairs $(M_1,M_2) \not = (M_3,M_4)$ of $\Lc'$-direct successors 
	such that $S_{M_1,M_2} \not = S_{M_3,M_4}$ and $X = S_{M_1,M_2} \intersect S_{M_3,M_4} \not = \emptyset$. The set of rotations eliminated from $M_0$ to $M_2$ is  
	\[S_{M_0,M_2} = S_{M_0,M_1} \union S_{M_1,M_2}.\]
	Similarly, 
	\[S_{M_0,M_4} = S_{M_0,M_3} \union S_{M_3,M_4}.\]
	Therefore, 
	\[S_{M_0,M_2 \vee M_3} =S_{M_0,M_3} \union S_{M_1,M_2} \union S_{M_0,M_1}. \]
	\[S_{M_0,M_1 \vee M_4} = S_{M_0,M_3} \union S_{M_3,M_4} \union S_{M_0,M_1}. \]
	Let $M = (M_2 \vee M_3) \wedge (M_1 \vee M_4)$, we have  
	\[S_{M_0, M} = S_{M_0,M_3} \union S_{M_0,M_1} \union X.\]
	Hence, 
	\[S_{M_0, M \wedge M_2} = S_{M_0,M_1} \union X.\]
	Since $X \subset S_{M_1,M_2}$ and $S_{M_1,M_2} \intersect S_{M_0,M_1} = \emptyset$, $X \intersect S_{M_0,M_1} = \emptyset$. Therefore, 
	\[S_{M_0,M_1} \subset S_{M_0, M \wedge M_2} \subset S_{M_0,M_2},\] 
	and hence $M_2$ is not a $\Lc'$-direct successor of $M_1$, leading to a contradiction.
\end{proof}

We will denote $S_{M_0,M_{0'}}$ and $S_{M_{z'},M_{z}}$ by $A_s$ and $A_t$, respectively.
The elements of $\mathcal{S}$ will be the meta-rotations of $\Pi'$.  Next, we need to define  
precedence relations among these meta-rotations to complete the construction of $\Pi'$. 
For a meta-rotation $A \in \mathcal{S}$, $A \neq A_t$, define the following subset of $\Lc'$:
\[ \Mc^A = \{M \in \Lc' \ \mbox{such that} \ A \subseteq S_{M_0,M} \} .\]

\begin{lemma}
	\label{lem:latticeContainingMetaRotaion}
	For each meta-rotation $A \in \mathcal{S}$, $A \neq A_t$, $\Mc^A$ forms a sublattice $\Lc^A$ of $\Lc'$.
\end{lemma}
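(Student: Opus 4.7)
The plan is to show closure of $\mathcal{M}^A$ under both meet and join in $\mathcal{L}'$. Given $M_1, M_2 \in \mathcal{M}^A$, first observe that $M_1, M_2 \in \mathcal{L}'$ and $\mathcal{L}'$ is a sublattice of $\mathcal{L}$, so both $M_1 \vee M_2$ and $M_1 \wedge M_2$ automatically lie in $\mathcal{L}'$. The remaining task is to check that $A \subseteq S_{M_0, M_1 \vee M_2}$ and $A \subseteq S_{M_0, M_1 \wedge M_2}$.

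For this, I will exploit the fact that $S_{M_0, M}$ is precisely the closed set of $\Pi$ corresponding to the matching $M$ under Birkhoff's isomorphism; Lemma~\ref{lem:seqElimination} guarantees that this set is well-defined and does not depend on the order in which rotations are eliminated going from $M_0$ to $M$. Since the meet and join of stable matchings correspond respectively to the intersection and union of the associated closed sets (i.e.\ the join and meet operations of $L(\Pi)$), we have
\begin{align*}
S_{M_0, M_1 \wedge M_2} &= S_{M_0, M_1} \cap S_{M_0, M_2}, \\
S_{M_0, M_1 \vee M_2} &= S_{M_0, M_1} \cup S_{M_0, M_2}.
\end{align*}
By hypothesis $A \subseteq S_{M_0, M_1}$ and $A \subseteq S_{M_0, M_2}$, so $A$ is contained both in their intersection and in their union. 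This places $M_1 \wedge M_2$ and $M_1 \vee M_2$ in $\mathcal{M}^A$, completing the argument.

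I do not anticipate any real obstacle here: the lemma is essentially a formal consequence of Birkhoff's correspondence together with the well-definedness of $S_{M_0,M}$. The more substantive work lies beyond this lemma, where the sublattices $\mathcal{L}^A$ will be used to recover the precedence relations among meta-rotations of $\Pi'$ and thus ultimately identify $L(\Pi')$ with $\mathcal{L}'$.
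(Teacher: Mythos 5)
Your proof is correct and takes essentially the same approach as the paper: both reduce the claim to the identities $S_{M_0, M_1 \wedge M_2} = S_{M_0,M_1} \cap S_{M_0,M_2}$ and $S_{M_0, M_1 \vee M_2} = S_{M_0,M_1} \cup S_{M_0,M_2}$, from which containment of $A$ follows immediately. One harmless slip: your parenthetical reverses the naming (in $L(\Pi)$ union is the join and intersection is the meet), but the displayed identities you actually use are the correct ones and the argument stands.
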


\begin{proof}
	Take two matchings $M_1,M_2$ such that $S_{M_0,M_1}$ and $S_{M_0,M_2}$ are supersets of $A$. Then $S_{M_0,M_1 \wedge M_2} = S_{M_0,M_1} \intersect S_{M_0,M_2}$ and $S_{M_0,M_1 \vee M_2} = S_{M_0,M_1} \union S_{M_0,M_2}$ are also supersets of $A$.
\end{proof}

Let $M^A$ be the worker-optimal matching in the lattice $\Lc^A$. Let $p$ be any $\Lc'$-path from 
$M_{0'}$ to $M^A$ and let $\pre(A)$ be the set of meta-rotations appearing before $A$ on $p$. 

\begin{lemma}
	The set $\pre(A)$ does not depend on $p$. Furthermore, on any $\Lc'$-path from $M_{0'}$ containing 
	$A$, each meta-rotation in $\pre(A)$ appears before $A$.
	\label{lem:preMetaPoset} 
\end{lemma}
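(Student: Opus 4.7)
The plan is to exploit Lemma~\ref{lem:partition} (meta-rotations partition $\Pi$) together with the fact that $M^A$ is the boy-optimal matching in the sublattice $\Lc^A$. The punchline I would aim for is: every $\Lc'$-path from $M_{0'}$ to $M^A$ eliminates exactly the same set of rotations, namely $S_{M_{0'}, M^A}$, and this set has a unique decomposition into meta-rotations of $\mathcal{S}$. Hence the sequence of meta-rotations on any such path is forced set-wise.

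First I would show that the last step on any $\Lc'$-path $p = (M_{0'} = N_1 \prec \cdots \prec N_l = M^A)$ is the meta-rotation $A$. Since $M^A$ is boy-optimal in $\Lc^A$ and $N_{l-1} \prec M^A$, $N_{l-1}$ cannot lie in $\Lc^A$ (otherwise $M^A \preceq N_{l-1} \prec M^A$, a contradiction), so $A \not\subseteq S_{M_0, N_{l-1}}$. Because $A \subseteq S_{M_0, M^A} = S_{M_0, N_{l-1}} \cup S_{N_{l-1}, M^A}$, the step-meta-rotation $S_{N_{l-1}, M^A} \in \mathcal{S}$ shares at least one rotation with $A$, and by the disjointness in Lemma~\ref{lem:partition} it must equal $A$.

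Next I would pin down the full set of meta-rotations traversed by $p$. By Lemma~\ref{lem:seqElimination} the rotations eliminated along any chain from $M_{0'}$ to $M^A$ are exactly $S_{M_{0'}, M^A} = S_{M_0, M^A} \setminus S_{M_0, M_{0'}}$, independent of the chain. The step-meta-rotations on $p$ are pairwise disjoint members of $\mathcal{S}$ whose union is $S_{M_{0'}, M^A}$; because $\mathcal{S}$ partitions $\Pi$, the only such collection is $\{B \in \mathcal{S} : B \subseteq S_{M_{0'}, M^A}\}$. Removing $A$ (the last meta-rotation) yields $\pre(A)$, which is therefore independent of $p$.

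For the second claim, let $q$ be any $\Lc'$-path from $M_{0'}$ containing $A$ as a step $N_j \to N_{j+1}$. Then $A \subseteq S_{M_0, N_{j+1}}$, so $N_{j+1} \in \Lc^A$, and by minimality of $M^A$ in $\Lc^A$ we get $M^A \preceq N_{j+1}$, giving $S_{M_{0'}, M^A} \subseteq S_{M_{0'}, N_{j+1}} = S_{M_{0'}, N_j} \cup A$. Any $B \in \pre(A)$ satisfies $B \subseteq S_{M_{0'}, M^A}$ and $B \cap A = \emptyset$ (Lemma~\ref{lem:partition}), hence $B \subseteq S_{M_{0'}, N_j}$. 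Applying the same partition argument to the prefix sub-path $M_{0'} \to N_j$ of $q$ forces $B$ to occur as one of its step-meta-rotations, i.e., before $A$ on $q$. The main obstacle throughout is the bookkeeping that justifies ``a meta-rotation contained in the rotations eliminated along a sub-path must itself appear intact as one of the steps,'' which is exactly what Lemma~\ref{lem:partition} provides.
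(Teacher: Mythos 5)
Your proof is correct and follows essentially the same route as the paper: both parts hinge on $M^A$ being boy-optimal in $\Lc^A$ together with the partition property from Lemma~\ref{lem:partition}, which forces a unique decomposition of $S_{M_{0'},M^A}$ and of prefix segments into meta-rotations. The only cosmetic difference is that for the second claim the paper argues by contradiction via the meet $M^A \wedge N_{j+1}$, while you derive $M^A \preceq N_{j+1}$ directly and push $B$ into the prefix $S_{M_{0'},N_j}$; these are two phrasings of the same observation.
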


\begin{proof}
	Since all paths from $M_{0'}$ to $M^A$ give the same set of rotations, 
	all $\Lc'$-paths from $M_{0'}$ to $M^A$ give the same set of meta-rotations.
	Moreover, $A$ must appear last in the any $\Lc'$-path from $M_{0'}$ to $M^A$;
	otherwise, there exists a matching in $\Lc^A$ preceding $M^A$, giving a contradiction.
	It follows that $\pre(A)$ does not depend on $p$.
	
	Let $q$ be an $\Lc'$-path from $M_{0'}$ that contains matchings $M', M \in \Lc'$, where $M$ is 
	an $\Lc'$-direct successor of $M'$. Let $A$ denote the meta-rotation that is contained on edge 
	$(M', M)$. Suppose there is a meta-rotation $A' \in \pre(A)$ such that $A'$ does not appear before
	$A$ on $q$. Then $S_{M_0,M^A \wedge M} = S_{M_0,M^A} \intersect S_{M_0,M}$ contains $A$ but 
	not $A'$. Therefore $M^A \wedge M$ is a matching in $\Lc^A$ preceding $M^A$, giving is a 
	contradiction.	Hence all matchings in $\pre(A)$ must appear before $A$ on all such paths $q$.
\end{proof}

Finally, add precedence relations from all meta-rotations in $\pre(A)$ to $A$, for each
meta-rotation in $\mathcal{S} - \{A_t\}$. Also, add precedence relations from all meta-rotations 
in $\mathcal{S} - \{A_t\}$ to $A_t$. This completes the construction of $\Pi'$. Below we show
that $\Pi'$ is indeed a compression of $\Pi$, but first we need to establish that this construction 
does yield a valid poset.

\begin{lemma}
	$\Pi'$ satisfies transitivity and anti-symmetry.
\end{lemma}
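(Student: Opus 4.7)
The plan is to verify transitivity and anti-symmetry of $\Pi'$ separately, in each case carefully isolating the special role of the dummy meta-rotation $A_t$. The central tool will be Lemma~\ref{lem:preMetaPoset}, and in particular its second, stronger assertion: on \emph{any} $\Lc'$-path from $M_{0'}$ that contains $A$, every meta-rotation in $\pre(A)$ appears strictly before $A$ (not only on paths that terminate at $M^A$). A second observation used throughout is that $A_t = S_{M_{z'},M_z}$ consists of rotations lying strictly below $M_{z'}$, hence $A_t$ never appears on any $\Lc'$-path and therefore $A_t \notin \pre(X)$ for every meta-rotation $X$.

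For transitivity, suppose $A \prec_{\Pi'} B$ and $B \prec_{\Pi'} C$. If $C = A_t$, then $A \prec_{\Pi'} A_t$ holds by construction (since $A \neq A_t$, because $A \in \pre(B)$ forces $A$ to appear on an $\Lc'$-path). Otherwise $A \in \pre(B)$ and $B \in \pre(C)$. Take any $\Lc'$-path $p$ from $M_{0'}$ to $M^C$; by Lemma~\ref{lem:preMetaPoset}, $B$ appears strictly before $C$ on $p$. Truncating $p$ at the step just after $B$ is eliminated yields an $\Lc'$-path containing $B$, so applying Lemma~\ref{lem:preMetaPoset} again (with $A \in \pre(B)$) places $A$ strictly before $B$ on $p$, and hence before $C$. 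Therefore $A \in \pre(C)$, establishing $A \prec_{\Pi'} C$.

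For anti-symmetry, suppose for contradiction that $A \prec_{\Pi'} B$ and $B \prec_{\Pi'} A$ with $A \neq B$. Neither can be $A_t$ by the observation above, so $A \in \pre(B)$ and $B \in \pre(A)$ simultaneously. Consider any $\Lc'$-path $p$ from $M_{0'}$ to $M^A$. Since $B \in \pre(A) \subseteq S_{M_0, M^A}$, the meta-rotation $B$ is eliminated along $p$, so $p$ contains both $A$ and $B$. Applying Lemma~\ref{lem:preMetaPoset} to $B \in \pre(A)$ and the fact that $p$ contains $A$ forces $B$ to appear strictly before $A$ on $p$; applying the same lemma to $A \in \pre(B)$ and the fact that $p$ contains $B$ forces $A$ to appear strictly before $B$ on $p$. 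These two conclusions contradict each other, so no such $A,B$ exist.

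The main obstacle I anticipate is ensuring that Lemma~\ref{lem:preMetaPoset} is invoked in its stronger form: the weaker assertion about paths terminating at $M^A$ is not sufficient to derive the anti-symmetry contradiction, because the relevant single $\Lc'$-path must simultaneously witness both $A \in \pre(B)$ and $B \in \pre(A)$. The $A_t$ cases appear at first to require separate argument, but they all collapse immediately once one notes that $A_t$ cannot appear on any $\Lc'$-path, so $A_t$ never figures as a predecessor in the relations added by the construction.
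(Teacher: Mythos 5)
Your proof is correct, but it takes a genuinely different route from the paper's. The paper lifts the meta-rotation precedence to the dominance order on boy-optimal matchings of sublattices: from $A_1 \in \pre(A_2)$ it deduces $S_{M_0, M^{A_2}} \supseteq A_1$, invokes Lemma~\ref{lem:latticeContainingMetaRotaion} to get $M^{A_1} \prec M^{A_2}$, and then both transitivity and anti-symmetry are inherited from the already-known partial order on stable matchings. You instead stay entirely inside a single $\Lc'$-path: for transitivity you show, on a path $p$ from $M_{0'}$ to $M^{C}$, that $B$ precedes $C$ (first part of Lemma~\ref{lem:preMetaPoset}) and then $A$ precedes $B$ (second, stronger part, applied to $p$ which contains $B$); for anti-symmetry you obtain the contradiction that $A$ and $B$ each appear strictly before the other on the same path. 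Your route avoids Lemma~\ref{lem:latticeContainingMetaRotaion} entirely and leans on the second half of Lemma~\ref{lem:preMetaPoset}, which the paper's proof does not explicitly invoke; it also sidesteps the slight delicacy in the paper's argument of passing from $\preceq$ to strict $\prec$ between the $M^{A_i}$. Your handling of the dummy meta-rotation $A_t$ and your observation that it never lies on any $\Lc'$-path (so $A_t \notin \pre(X)$ for any $X$, forcing $A_t$ to be maximal) are both correct and necessary. The truncation of $p$ in your transitivity argument is superfluous — the second part of Lemma~\ref{lem:preMetaPoset} applies directly to $p$ once you know $B$ appears on it — but it does no harm.
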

\begin{proof}
	First we prove that $\Pi'$ satifies transitivity.
	Let $A_1, A_2, A_3$ be meta-rotations such that $A_1 \prec A_2$ and $A_2 \prec A_3$.
	We may assume that $A_3 \not = A_t$.
	Then $A_1 \in \pre(A_2)$ and $A_2 \in \pre(A_3)$. 
	Since $A_1 \in \pre(A_2)$, $S_{M_0,M^{A_2}}$ is a superset of $A_1$.
	By Lemma~\ref{lem:latticeContainingMetaRotaion}, $M^{A_1} \prec M^{A_2}$. 
	Similarly, $M^{A_2} \prec M^{A_3}$. Therefore $M^{A_1} \prec M^{A_3}$, and hence $A_1 \in \pre(A_3)$.
	
	Next we prove that $\Pi'$ satisfies anti-symmetry. Assume that there exist meta-rotations $A_1, A_2$ such that
	$A_1 \prec A_2$ and $A_2 \prec A_1$. Clearly $A_1, A_2 \not = A_t$. Since $A_1 \prec A_2$, $A_1 \in \pre(A_2)$. Therefore, $S_{M_0,M^{A_2}}$ is a superset of $A_1$. It follows that $M^{A_1} \prec M^{A_2}$. Applying a similar argument we get $M^{A_2} \prec M^{A_1}$. Now, we get a contradiction,
	since $A_1$ and $A_2$ are different meta-rotations.
\end{proof}

\begin{lemma}
	$\Pi'$ is a compression of $\Pi$.
\end{lemma}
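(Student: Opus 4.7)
The plan is to verify the two requirements of Definition~\ref{def:compression} for the $\Pi'$ just constructed: first, that the meta-rotations partition the elements of $\Pi$ (with the dummies $s$ and $t$ grouped into $A_s$ and $A_t$ respectively); second, that whenever $x \prec_\Pi y$ with $x \in A$, $y \in B$, and $A \neq B$, we have $A \prec_{\Pi'} B$. Part one is Lemma~\ref{lem:partition}. For part two, the case $B = A_t$ is immediate from the explicit "also" clause of the construction, and the case $A = A_s$ is symmetric and forced in any compression, since $s$ precedes every element of $\Pi$, so $A_s$ must precede every other meta-rotation in $\Pi'$.

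The substantive work is the remaining case $A, B \notin \{A_s, A_t\}$. Let $M^B$ denote the boy-optimal matching of $\Lc^B$. Since $B \subseteq S_{M_0, M^B}$, the rotation $y$ must be eliminated on every sequence from $M_0$ to $M^B$, and since $x \prec_\Pi y$ the same is true of $x$, so $x \in S_{M_0, M^B}$. The central observation is that $S_{M_0, M^B}$ decomposes as a disjoint union of \emph{entire} meta-rotations: choosing any $\Lc'$-path $M_{0'} = N_0, N_1, \ldots, N_k = M^B$,
\[ S_{M_0, M^B} \;=\; A_s \,\cup\, S_{N_0, N_1} \,\cup\, \cdots \,\cup\, S_{N_{k-1}, N_k}, \]
and by Lemma~\ref{lem:partition} these pieces are pairwise disjoint blocks of the partition $\mathcal{S}$. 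Hence a single rotation $x \in A \cap S_{M_0, M^B}$ drags the whole of $A$ into $S_{M_0, M^B}$. Since $A \neq A_s$, $A$ must be one of the $S_{N_t, N_{t+1}}$'s, so $A$ appears on the $\Lc'$-path from $M_{0'}$ to $M^B$. By the same argument used in Lemma~\ref{lem:preMetaPoset}, $B$ must occupy the last edge of that path (otherwise a strict $\Lc^B$-predecessor of $M^B$ would exist), so $A$ appears strictly before $B$ on the path, i.e., $A \in \pre(B)$, and the construction adds $A \prec_{\Pi'} B$.

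The main obstacle is this block-decomposition step: a priori $S_{M_0, M^B}$ is only guaranteed to contain the single rotation $x$ of $A$, and one must exploit both the factorization $M_0 \to M_{0'} \to M^B$ along an $\Lc'$-path and the disjointness furnished by Lemma~\ref{lem:partition} to conclude that the presence of any one element of $A$ pulls in all of $A$. Once this atomicity of meta-rotations inside $S_{M_0, M^B}$ is in hand, the rest is a direct check against the construction of $\pre(B)$ and the boundary clauses for $A_s$ and $A_t$.
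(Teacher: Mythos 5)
Your proof follows the same route as the paper's, which also picks an $\Lc'$-path through $M_{0'}$ to $M^Y$, observes that $x\prec y$ forces $x$ to precede $y$ on that path, concludes that the meta-rotation $X$ appears before $Y$, and appeals to Lemma~\ref{lem:preMetaPoset} to get $X\in\pre(Y)$. The difference is only in the level of detail: the paper's proof is three lines and silently assumes (i) that $\mathcal{S}$ is a valid partition, (ii) that the whole block $X$ appears as a contiguous segment of the path whenever a single rotation of $X$ does, and (iii) the boundary behaviour of $A_s$ and $A_t$. You make (i) explicit by citing Lemma~\ref{lem:partition}, you spell out (ii) via the decomposition $S_{M_0,M^B}=A_s\cup S_{N_0,N_1}\cup\cdots\cup S_{N_{k-1},N_k}$ into disjoint blocks of $\mathcal{S}$ (this is exactly the atomicity the paper asserts in the sentence ``Hence, $X$ also appears before $Y$ in $p$''), and you separate out the cases $B=A_t$ and $A=A_s$. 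The only soft spot in your write-up is the $A=A_s$ case: saying $A_s\prec B$ is ``forced in any compression'' is a statement about what a compression must contain, not a verification that the $\Pi'$ constructed in Section~\ref{sec.backward} actually contains it. It does contain it because the paper stipulates right after Definition~\ref{def:compression} that $A_s$ precedes all meta-elements in any $\Pi'$ (and $\pre(\cdot)$ as defined starts from $M_{0'}$, so $A_s$ would never land in $\pre(B)$ on its own); worth stating that this is the clause being invoked rather than leaving it as ``forced.'' Otherwise the argument is correct and is a clean, careful expansion of the paper's terser proof.
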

\begin{proof}
	Let $x,y$ be rotations in $\Pi$ such that $x \prec y$. 
	Let $X$ be the meta-rotation containing $x$ and $Y$ be the meta-rotation containing $y$.
	It suffices to show that $X \in \pre(Y)$.
	Let $p$ be an $\Lc'$-path from $M_0$ to $M^Y$.
	Since $x \prec y$, $x$ must appear before $y$ in $p$.
	Hence, $X$ also appears before $Y$ in $p$.
	By Lemma~\ref{lem:preMetaPoset}, $X \in \pre(Y)$ as desired.
\end{proof}

Finally, the next two lemmas prove that $L(\Pi') = \Lc'$.

\begin{lemma}
	\label{lem:LPsubsetLbar}
	Any matching in $L(\Pi')$ must be in $\Lc'$.
\end{lemma}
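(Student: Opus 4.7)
My plan is to prove Lemma~\ref{lem:LPsubsetLbar} by induction on the size of the proper closed set $I$ of $\Pi'$ that generates the matching $M_I\in L(\Pi')$. Recall from Section~\ref{sec.forward} that $M_I$ is the matching obtained by eliminating the rotations in $\rot(I)=\bigcup_{A\in I}A$ from $M_0$ in any order consistent with $\Pi$, so that $S_{M_0,M_I}=\rot(I)$. The base case is $I=\{A_s\}$, for which $\rot(I)=A_s=S_{M_0,M_{0'}}$ and hence $M_I=M_{0'}\in\Lc'$.

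For the inductive step I would assume $|I|\ge 2$, pick a maximal meta-rotation $A\in I$, and set $I'=I\setminus\{A\}$. Since $A_s$ is the unique minimum of $\Pi'$ and $I$ is proper, $A\ne A_s$ and $A\ne A_t$, so $A=S_{U,V}$ for some $\Lc'$-direct successor pair $V\succ U$. Maximality of $A$ makes $I'$ a proper closed set, so by the inductive hypothesis $M_{I'}\in\Lc'$; moreover $M^A\in\Lc^A\subseteq\Lc'$ by its definition as the boy-optimal matching of $\Lc^A$. Because $\Lc'$ is closed under the join, it would suffice to prove the identity $M_I=M_{I'}\vee M^A$.

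For this identity I would argue as follows. Since $I$ is closed under $\Pi'$ and $A\in I$, the construction of $\Pi'$ in Section~\ref{sec.backward} forces $\pre(A)\cup\{A_s\}\subseteq I'$. Let $U^*$ be the $\Lc'$-direct predecessor of $M^A$ on some $\Lc'$-path from $M_{0'}$ to $M^A$; the proof of Lemma~\ref{lem:preMetaPoset} already shows that $A$ is the last meta-rotation on this path, whence $S_{M_0,M^A}=S_{M_0,U^*}\cup A$ with $S_{M_0,U^*}=A_s\cup\bigcup_{B\in\pre(A)}B\subseteq\rot(I')$. Therefore $S_{M_0,M_{I'}\vee M^A}=\rot(I')\cup S_{M_0,M^A}=\rot(I')\cup A=\rot(I)=S_{M_0,M_I}$, which gives $M_I=M_{I'}\vee M^A\in\Lc'$ by the sublattice property.

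The main obstacle, as I see it, is spotting the decomposition $M_I=M_{I'}\vee M^A$ in the first place; once one guesses to replace the arbitrary witness $V$ by the boy-optimal matching $M^A\in\Lc'$ in which $A$ has been eliminated, the verification reduces to bookkeeping with closed sets of rotations, using only the facts that $\pre(A)\cup\{A_s\}$ is exactly what must have been eliminated to reach $U^*$ and that $\Lc'$ is closed under joins.
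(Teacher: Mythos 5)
Your proposal is correct and follows essentially the same strategy as the paper's proof: induction on the closed set, peeling off a maximal meta-rotation $A$, and establishing the identity $M_I = M_{I'} \vee M^A$ so that closure of $\Lc'$ under join finishes the argument. You supply more explicit bookkeeping (in particular verifying $S_{M_0,M^A} = A_s \cup \bigcup_{B\in\pre(A)} B \cup A \subseteq \rot(I)$), which is a welcome elaboration of the paper's terse assertion that $M^A \prec M_I$, but the underlying idea is identical.
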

\begin{proof}
	For any proper closed subset $I$ in $\Pi'$, let $M_I$ be the matching generated by eliminating meta-rotations in $I$.	 
	Let $J$ be another proper closed subset in $\Pi'$ such that $J = I \setminus \{A\}$, 
	where $A$ is a maximal meta-rotation in $I$. 
	Then $M_J$ is a matching in $\Lc'$ by induction. 
	Since $I$ contains $A$, $S_{M_0,M_I} \supset A$. Therefore, $M^A \prec M_I$. 
	It follows that $M_I = M_J \vee M^A \in \Lc'$.
\end{proof}

\begin{lemma} Any matching in $\Lc'$ must be in $L(\Pi')$.
\end{lemma}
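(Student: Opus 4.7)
The plan is to exhibit, for each $M \in \Lc'$, a proper closed set $I$ of $\Pi'$ whose elimination from $M_0$ yields $M$. The natural candidate reads off the meta-rotations traversed on a canonical route from $M_0$ to $M$ via $M_{0'}$: I would fix any $\Lc'$-path $M_{0'} = N_0 \prec N_1 \prec \cdots \prec N_k = M$, where each $N_{i+1}$ is an $\Lc'$-direct successor of $N_i$, and set
\[
I \;=\; \{A_s,\; S_{N_0,N_1},\; S_{N_1,N_2},\; \ldots,\; S_{N_{k-1},N_k}\}.
\]
By Lemma~\ref{lem:partition} each element of $I$ belongs to the partition $\mathcal{S}$, so $I$ is indeed a set of meta-rotations of $\Pi'$.

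I would then verify that $I$ is proper and that its elimination produces $M$. For properness, $A_s \in I$ by construction, and $A_t = S_{M_{z'},M_z} \notin I$: since every $N_i \in \Lc'$ satisfies $N_i \preceq M_{z'}$, each $S_{N_{i-1},N_i} \subseteq S_{M_0,M_{z'}}$, while $A_t$ is disjoint from $S_{M_0,M_{z'}}$, and $A_s \neq A_t$ in any nondegenerate lattice. For the elimination, the observation in Section~\ref{sec.forward} equates eliminating the meta-rotations in $I$ (in any topological order consistent with $\Pi'$) with eliminating the rotations in
\[
\rot(I) \;=\; S_{M_0,M_{0'}} \,\cup\, \bigcup_{i=0}^{k-1} S_{N_i,N_{i+1}} \;=\; S_{M_0,M}
\]
from $M_0$ in an order consistent with $\Pi$; by definition this yields exactly $M$.

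The main obstacle, and the heart of the argument, is to show $I$ is closed under $\Pi'$-predecessors. The case $A = A_s$ is immediate: since $M_{0'} \preceq M$ for every $M \in \Lc'$, we have $\Lc^{A_s} = \Lc'$, so $M^{A_s} = M_{0'}$ and hence $\pre(A_s) = \emptyset$. For $A = S_{N_{i-1},N_i} \in I$, the very sequence $M_{0'}, N_1, \ldots, N_k$ is an $\Lc'$-path from $M_{0'}$ that contains $A$; by Lemma~\ref{lem:preMetaPoset} every element of $\pre(A)$ must then appear strictly before $A$ on this path and so coincides with some $S_{N_{j-1},N_j}$ with $j < i$, placing it in $I$. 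The transitivity argument already proven for $\Pi'$ guarantees that every $\Pi'$-predecessor of $A$ actually lies in $\pre(A)$ (a chain $A' \prec A'' \prec A$ collapses into $A' \in \pre(A)$), so the closure of $I$ under $\prec_{\Pi'}$ follows, completing the proof that $M \in L(\Pi')$.
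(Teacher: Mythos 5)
Your proof is correct, but it takes a genuinely different route from the paper's. The paper argues by contradiction with a two-case analysis: assuming $M \in \Lc'$ but $M \notin L(\Pi')$, it distinguishes whether $S_{M_0,M}$ can or cannot be partitioned into meta-rotations, and in the second case it performs an algebraic manipulation using $(M_J \vee M) \wedge M^X$ and invokes the converse inclusion (Lemma~\ref{lem:LPsubsetLbar}). You instead give a direct construction: you name the candidate closed set $I = \{A_s, S_{N_0,N_1}, \ldots, S_{N_{k-1},N_k}\}$ read off an $\Lc'$-path from $M_{0'}$ to $M$, check that $\rot(I) = S_{M_0,M}$ so that elimination yields $M$, and verify closure of $I$ under $\Pi'$-predecessors via Lemma~\ref{lem:preMetaPoset} (which says every element of $\pre(A)$ appears before $A$ on any $\Lc'$-path containing $A$) together with the transitivity lemma. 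This avoids the case analysis and does not need Lemma~\ref{lem:LPsubsetLbar}, which is a modest simplification; the cost is that you lean slightly harder on the precise statement of Lemma~\ref{lem:preMetaPoset}. One minor imprecision in your closing sentence: $A_s$ is a $\Pi'$-predecessor of every other meta-rotation but is not in $\pre(A)$, so it is not literally true that every $\Pi'$-predecessor of $A$ lies in $\pre(A)$; however, you explicitly place $A_s \in I$ from the start, so the closure argument is unaffected.
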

\begin{proof}
	Suppose there exists a matching $M$ in $\Lc'$ such that $M \not \in L(\Pi')$. 
	Then it must be the case that $S_{M_0,M}$ cannot be partitioned into meta-rotations which
	form a closed subset of $\Pi$. Now there are two cases.
	
	First, suppose that $S_{M_0,M}$ can be partitioned into meta-rotations, but they do not form a closed subset of $\Pi'$.
	Let $A$ be a meta-rotation such that $S_{M_0,M} \supset A$, and there exists $B \prec A$ such that $S_{M_0,M} \not \supset B$.
	By Lemma~\ref{lem:latticeContainingMetaRotaion}, $M \succ M^A$ and hence $S_{M_0,M}$ is a superset
	of all meta-rotations in $\pre(A)$, giving is a contradiction. 
	
	Next, suppose that $S_{M_0,M}$ cannot be partitioned into meta-rotations in $\Pi'$. Since the set of meta-rotations partitions $\Pi$, there exists a meta-rotation $X$ such that $Y = X \intersect S_{M_0,M}$ is a non-empty subset of $X$.
	Let $J$ be the set of meta-rotations preceding $X$ in $\Pi$. 
	
	$(M_J \vee M) \wedge M^X$ is the matching generated by 
	meta-rotations in $J \union Y$.
	Obviously, $J$ is a closed subset in $\Pi'$. Therefore, $M_J \in L(\Pi')$.
	By Lemma~\ref{lem:LPsubsetLbar}, $M_J \in \Lc'$.
	Since $M,M^X \in \Lc'$, $(M_J \vee M) \wedge M^X \in \Lc'$ as well. 
	The set of rotations contained on a path from $M_J$ to $(M_J \vee M) \wedge M^X$ in $H(\Lc)$ is exactly $Y$.
	Therefore, $Y$ can not be a subset of any meta-rotation, contradicting the fact that $Y = X \intersect S_{M_0,M}$ is a non-empty subset of $X$.
\end{proof}

%\subsection{Proof of Theorem \ref{thm:generalization}}

	\subsection{An alternative view of compression}
\label{sec.alternative}

In this section we give an alternative definition of compression of a poset; this will be used
in the rest of the paper. The advantage of this definition is that it is much 
easier to work with for the applications presented later. Its drawback
is that several different sets of edges may yield the same compression. Therefore, this definition is not suitable for stating a one-to-one correspondence between sublattices of $\Lc$ and compressions of $\Pi$. Finally we show that any compression $\Pi'$ obtained using the first definition can also be obtained via the second definition and vice versa (Proposition \ref{prop.eq}), hence showing that the two definitions are equivalent for our purposes.

We are given a poset $\Pi$ for a stable matching instance; let
$\Lc$ be the lattice it generates. Let $H(\Pi)$ denote the Hasse diagram of $\Pi$. 
Consider the following operations to derive a new poset $\Pi'$: Choose a set $E$ of directed edges
to add to $H(\Pi)$ and let $H_E$ be the resulting graph. Let $H'$ be the graph obtained by 
shrinking the strongly connected components of $H_E$; each strongly connected component will be
a meta-rotation of $\Pi'$. The edges which are not shrunk will
define a DAG, $H'$, on the strongly connected components. These edges give precedence relations
among meta-rotation for poset $\Pi'$.

Let $\Lc'$ be the sublattice of $\Lc$ generated by $\Pi'$. We will say that the
set of edges $E$ \emph{defines} $\Lc'$. 
It can be seen that each set $E$ uniquely defines a sublattice $L(\Pi')$; however,
there may be multiple sets that define the same sublattice. 
%Observe that given a compression $\Pi'$ of $\Pi$, a set $E$ of edges defining $L(\Pi')$ can easily be obtained.
See Figure~\ref{ex:edgeSets} for examples of sets of edges which define sublattices.

\begin{proposition}
\label{prop.eq}
	The two definitions of compression of a poset are equivalent.
\end{proposition}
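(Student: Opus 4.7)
The plan is to establish both directions of the equivalence: every partial order $\Pi'$ obtainable via Definition~\ref{def:compression} (the partition-based view) is obtainable via the edge-set construction of Section~\ref{sec.alternative}, and vice versa.

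For the direction ``edge-set $\Rightarrow$ partition,'' I would start with an arbitrary set $E$ of directed edges and form $H_E = H(\Pi) \cup E$. Taking the strongly connected components of $H_E$ as the meta-rotations gives a partition of the elements of $\Pi$. Projecting the precedence relations of $\Pi$ onto these meta-rotations produces only inter-SCC edges (since intra-SCC edges collapse), and the resulting relation $Q$ is automatically acyclic because it is a subrelation of the DAG on SCCs of $H_E$. Hence $Q$ is a valid partial order, and the $\Pi'$ produced by Definition~2 (the transitive closure of the DAG on SCCs) is itself a partial order extending $Q$, so it qualifies as a compression under Definition~\ref{def:compression}.

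For the converse ``partition $\Rightarrow$ edge-set,'' suppose we are given a partition into meta-rotations $A_1, \ldots, A_m$ whose projected precedences form a partial order $Q$, together with a partial order $\Pi'$ on the meta-rotations that extends $Q$. I would construct $E$ as follows: for each $A_i$, fix an arbitrary cyclic ordering of its elements and add the corresponding consecutive directed edges (forming a cycle through $A_i$); additionally, for every covering relation $A_i \prec A_j$ of $\Pi'$ that is not already witnessed by an edge of $H(\Pi)$, add a single directed edge from some element of $A_i$ to some element of $A_j$. The internal cycles guarantee that each $A_i$ lies in a single SCC of $H_E$, and the inter-meta-rotation edges (both original and newly added) realize precisely the precedences of $\Pi'$ after collapsing SCCs.

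The main obstacle is the no-new-SCC property: showing that the SCCs of $H_E$ are \emph{exactly} the $A_i$'s, not larger sets that merge distinct meta-rotations. I would argue as follows: every edge of $H_E$ either lies inside some $A_i$, or corresponds to a relation $A_i \prec A_j$ of $\Pi'$ (edges of $H(\Pi)$ respect $Q \subseteq \Pi'$, and the added edges are chosen to respect $\Pi'$ by construction). A cycle in $H_E$ that visits more than one meta-rotation would, when projected to meta-rotations, produce a cycle in $\Pi'$, contradicting antisymmetry of $\Pi'$. Therefore the SCCs of $H_E$ coincide with $A_1, \ldots, A_m$, and the induced DAG on the SCCs is exactly $\Pi'$; hence the edge-set construction reproduces the given compression, completing the equivalence.
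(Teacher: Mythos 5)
Your proof is correct and follows essentially the same approach as the paper: in one direction you take the SCCs of $H_E$ as the meta-rotations, and in the other you realize each meta-rotation as an SCC by adding an internal cycle through it, with extra edges supplying any additional precedences. You work out carefully the point the paper dismisses with ``Clearly,'' namely that the added cycles and extra edges do not merge distinct meta-rotations into a single SCC (via the projected-cycle / antisymmetry argument); the added rigor is welcome, but the underlying strategy is the same.
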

\begin{proof}
	Let $\Pi'$ be a compression of $\Pi$ obtained using the first definition. Clearly, for each
meta-rotation in $\Pi'$, we can add edges to $\Pi$ so the strongly connected component created
is precisely this meta-rotation. Any additional precedence relations introduced among incomparable
meta-rotations can also be introduced by adding appropriate edges.

The other direction is even simpler, since each strongly connected component can be defined to 
be a meta-rotation and extra edges added can also be simulated by introducing new precedence
constraints. 
\end{proof}

\begin{figure}
	\begin{wbox}
		\begin{minipage}[c]{0.33\textwidth}
			\centering
			\def\svgscale{0.4}
			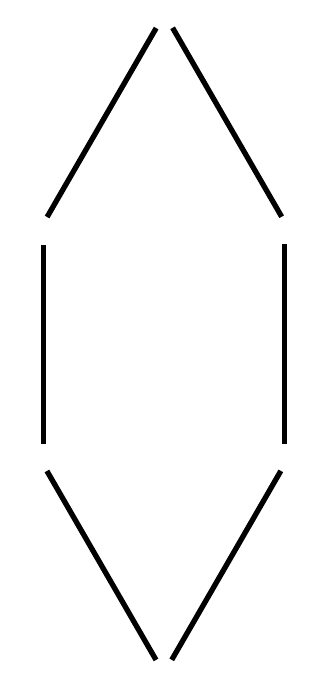
			
			$E_1$
		\end{minipage}
		\begin{minipage}[c]{0.32\textwidth}
			\centering
			\def\svgscale{0.4}
			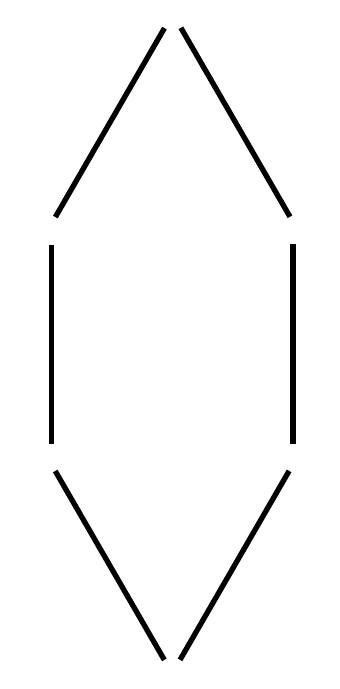
			
			$E_2$
		\end{minipage}
		\begin{minipage}[c]{0.3\textwidth}
			\centering
			\def\svgscale{0.4}
			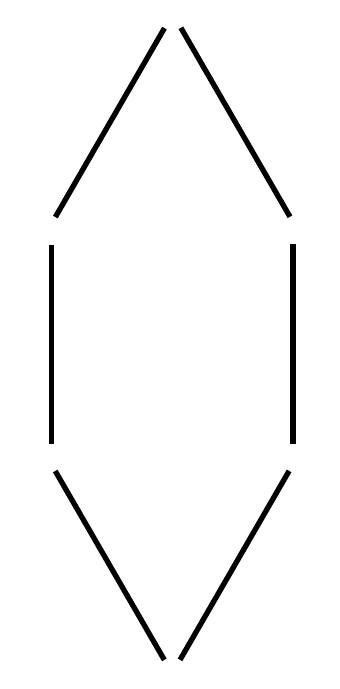
			
			$E_3$
		\end{minipage}
	\end{wbox}
	\caption{$E_1$ (red edges) and $E_2$ (blue edges) define the sublattices in 
Figure~\ref{ex:compression}, of red and blue elements, respectively. $E_2$ and $E_3$ define the same compression and represent the same sublattice. All black edges in $E_1, E_2$ and $E_3$ are directed from top to bottom (not shown in the figure). 	}
	\label{ex:edgeSets} 
\end{figure} 

For a (directed) edge $e = uv \in E$, $u$ is called the \emph{tail} and $v$ is 
called the \emph{head} of $e$.
Let $I$ be a closed set of $\Pi$. Then we say that:
\begin{itemize}
	\item $I$ \emph{separates} an edge $uv \in E$ if $v \in I$ and $u \not \in I$. 
	\item $I$ \emph{crosses} an edge $uv \in E$ if $u \in I$ and $v \not \in I$. 
\end{itemize}
If $I$ does not separate or cross any edge $uv \in E$, $I$ is called a \emph{splitting} set 
w.r.t. $E$.

\begin{lemma}
	\label{lem:separating}
	Let $\Lc'$ be a sublattice of $\Lc$ and $E$ be a set of edges defining $\Lc'$. 
	A matching $M$ is in $\Lc'$ iff the closed subset $I$ generating $M$ does not separate any edge $uv \in E$.
\end{lemma}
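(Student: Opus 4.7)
The plan is to reduce the statement to the observation that a closed set of $\Pi$ is ``closed under'' all edges of $H_E = H(\Pi) \cup E$ precisely when it is a union of meta-rotations forming a closed set of $\Pi'$. First I would reformulate the hypothesis: saying that $I$ does not separate an edge $uv \in E$ means exactly that $v \in I \Rightarrow u \in I$. Since $I$ is already a closed set of $\Pi$, the same implication holds automatically for every edge of $H(\Pi)$. Hence ``$I$ separates no edge of $E$'' is equivalent to ``for every edge $uv$ of $H_E$, $v \in I \Rightarrow u \in I$''; call this property $(\star)$.

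Next, I would establish the structural lemma: $I$ satisfies $(\star)$ if and only if $I = \bigcup_{A \in J} A$ for some closed set $J$ of $\Pi'$. For the forward direction, fix a meta-rotation (SCC) $A$ of $H_E$ meeting $I$ in some vertex $v$. For every $u \in A$ there is a directed path from $u$ to $v$ in $H_E$; iterating $(\star)$ along this path yields $u \in I$, so $A \subseteq I$. Hence $I$ is a union of whole meta-rotations. Now if $A \subseteq I$ and $B \prec A$ in $\Pi'$, there is an edge from some $u \in B$ to some $v \in A$ in $H_E$; then $v \in I$ forces $u \in I$, so by the previous argument $B \subseteq I$. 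Therefore the collection $J$ of meta-rotations contained in $I$ is closed in $\Pi'$. For the converse direction, the same two steps run in reverse: given $I = \bigcup_{A \in J} A$ with $J$ closed in $\Pi'$, any edge $uv \in H_E$ with $v \in I$ places $v$ in some $A \in J$; the vertex $u$ lies either in $A$ (so $u \in I$) or in some $B \neq A$ with $B \prec A$ in $\Pi'$, and closure of $J$ gives $B \in J$, hence $u \in I$.

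Finally, I would combine this with the correspondence proved in Sections~\ref{sec.forward}--\ref{sec.backward} to close both directions. If $I$ satisfies $(\star)$, the structural lemma exhibits $I = \rot(J)$ for some closed $J$ in $\Pi'$, and the argument from Section~\ref{sec.forward} shows that $M(I)$ equals the matching obtained by eliminating the meta-rotations in $J$; hence $M(I) \in L(\Pi') = \Lc'$. Conversely, if $M \in \Lc'$, then $M$ is generated by some closed $J$ of $\Pi'$ via the same elimination process, which produces exactly the closed set $I' = \bigcup_{A \in J} A$ in $\Pi$; by the uniqueness of the closed set of $\Pi$ generating a given stable matching, $I = I'$, and then the structural lemma gives property $(\star)$, i.e., $I$ separates no edge of $E$.

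The only subtle point is this uniqueness plus the need to pick a \emph{single} representative closed set of $\Pi$ for each $M \in \Lc'$; since the bijection between closed subsets of $\Pi$ and stable matchings in $\Lc$ is already established, this is not a real obstacle, and everything else is essentially a reachability argument in $H_E$. Consequently I expect the main work to be the clean statement of the structural lemma about $(\star)$ and SCC-closedness, after which the biconditional follows directly.
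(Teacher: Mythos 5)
Your proof is correct and follows essentially the same route as the paper: both directions reduce to the observation that $I$ separates no edge of $E$ precisely when the rotations of $I$ decompose into whole strongly connected components of $H_E$ forming a closed set $J$ of $\Pi'$, combined with the correspondence $L(\Pi') \cong \Lc'$ established by Theorem~\ref{thm:generalization} and the fact that a stable matching has a unique generating closed set in $\Pi$. One small slip: when $B \prec A$ in $\Pi'$ there need only be a directed \emph{path} (not a single edge) in $H_E$ from $B$ into $A$ — either iterate $(\star)$ along that path, or restrict to immediate predecessors of $A$ and propagate closedness by induction; either fix is harmless.
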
  

\begin{proof}
	Let $\Pi'$ be a compression corresponding to $\Lc'$.
	By Theorem~\ref{thm:generalization}, the matchings in $\Lc'$ are generated by eliminating rotations in closed subsets of $\Pi'$. 
	
	First, assume $I$ separates $uv \in E$. Moreover, assume $M \in \Lc'$ for the sake of contradiction, and let $I'$ be the closed subset of $\Pi'$ corresponding to $M$. Let $U$ and $V$ be the meta-rotations containing $u$ and $v$ respectively.
	Notice that the sets of rotations in $I$ and $I'$ are identical. Therefore, $V \in I'$ and $U \not \in I'$. 
	Since $uv \in E$, there is an edge from $U$ to $V$ in $H'$. Hence, $I'$ is not a closed subset of $\Pi'$.
	
	Next, assume that $I$ does not separate any $uv \in E$. We show that the rotations in $I$ can be partitioned into meta-rotations in a closed subset $I'$ of $\Pi'$. If $I$ cannot be partitioned into meta-rotations, there must exist a meta-rotation $A$ such that $A \intersect I$ is a non-empty proper subset of $A$. Since $A$ consists of rotations in a strongly connected component of $H_E$, there must be an edge $uv$ from $A \setminus I$ to $A \intersect I$ in $H_E$. Hence, $I$ separates $uv$. Since $I$ is a closed subset, $uv$ can not be an edge in $H$. Therefore, $uv \in E$, which is a contradiction. It remains to show that the set of meta-rotations partitioning $I$ is a closed subset of $\Pi'$. Assume otherwise, there exist meta-rotation $U \in I'$ and $V \not \in I'$ such that there exists an edge from $U$ to $V$ in $H'$. Therefore, there exists $u \in U$, $v \in V$ and $uv \in E$, which is a contradiction.
\end{proof}

\begin{remark}
	\label{remark:addedEdge}
We may assume w.l.o.g. that the set $E$ defining $\Lc'$ is {\em minimal} in the following sense:
There is no edge $uv \in E$ such that $uv$ is not separated by any closed set of $\Pi$. Observe that
if there is such an edge, then $E \setminus \{uv\}$ defines the same sublattice $\Lc'$. 
Similarly, there is no edge $uv \in E$ such that each closed set separating $uv$ also separates another edge in $E$.
\end{remark}

\begin{definition}
\label{def.poset}
W.r.t. an element $v$ in a poset $\Pi$, we define four useful subsets of $\Pi$:
\begin{align*}
I_v &= \{r \in \Pi: r \prec v \} \\
J_v & = \{r \in \Pi: r \preceq v\} = I_v \union \{v\} \\
I'_v &= \{r \in \Pi: r \succ v \} \\
J'_v & = \{r \in \Pi: r \succeq v\} = I'_v \union \{v\}
\end{align*} 
Notice that $I_v, J_v, \Pi \setminus I'_v, \Pi \setminus J'_v$ are all closed sets. 
\end{definition}

\begin{lemma}
	\label{lem:prime}
	Both $J_v$ and $\Pi \setminus J'_u$ separate $uv$ for each $uv \in E$.
\end{lemma}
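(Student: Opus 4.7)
The plan is to use the minimality of $E$ recorded in Remark~\ref{remark:addedEdge}: for each $uv \in E$, there must exist at least one closed set $I$ of $\Pi$ that separates $uv$, i.e., $v \in I$ and $u \notin I$---otherwise the edge could be dropped from $E$ without changing the sublattice $\Lc'$ it defines. I will then show that both $J_v$ and $\Pi \setminus J'_v$ themselves separate $uv$ by extracting them as the canonical extremal closed separating sets associated with $uv$.

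For $J_v$, the argument is a one-liner from downward closure. Since $I$ is downward-closed and contains $v$, the whole down-set $J_v$ must be contained in $I$; therefore $u \notin I$ immediately forces $u \notin J_v$, while $v \in J_v$ is automatic. Definition~\ref{def.poset} records that $J_v$ is closed, so $J_v$ separates $uv$.

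For $\Pi \setminus J'_v$, the plan is to dualize the argument to the other endpoint of the edge. The observation is that $u \notin I$ together with downward closure rules every $w \succeq u$ out of $I$ (any such $w$ in $I$ would drag $u$ into $I$), which in particular pins down $v \not\succeq u$, so $u$ and $v$ are incomparable in $\Pi$. Definition~\ref{def.poset} already records that the complement of an up-set is closed, and checking the two membership conditions for ``separates $uv$'' then reduces to a direct unpacking of this incomparability between $u$ and $v$, just as in the first half.

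The main (and really only) obstacle is the invocation of minimality itself: without Remark~\ref{remark:addedEdge}, an edge $uv \in E$ could in principle fail to be separated by any closed set of $\Pi$ at all (for instance, if a precedence between $u$ and $v$ is already present in $\Pi$), and then neither $J_v$ nor $\Pi \setminus J'_v$ would witness the required separation. Once minimality gives us a starting separating $I$, the rest of the argument is a single step of downward closure applied once from the $v$ side (yielding $J_v \subseteq I$) and once from the $u$ side (yielding $I \subseteq \Pi \setminus J'_u$), both of which are standard manipulations in the rotation-poset setting.
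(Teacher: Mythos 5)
Your proof matches the paper's essentially step for step: both invoke the minimality of $E$ from Remark~\ref{remark:addedEdge} to produce a closed set $I$ separating $uv$, deduce $u \not\preceq v$ from the downward closure of $I$, and read off both conclusions from that single fact. Two small notes: the lemma statement (and the paper's own proof sentence) carries a typo---the second set should be $\Pi \setminus J'_u$, not $\Pi \setminus J'_v$, since $v \in J'_v$ always, and your closing line correctly switches to $J'_u$---and your intermediate claim that $u$ and $v$ are ``incomparable in $\Pi$'' overstates the deduction, as the argument only yields $v \not\succeq u$ and not $u \not\succeq v$, though only the former is used.
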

\begin{proof}
Since $uv$ is in $E$, $u$ cannot be in $J_v$; otherwise, there is no closed subset separating 
$uv$, contradicting Remark~\ref{remark:addedEdge}. Hence, $J_v$ separates $uv$ for all $uv$ in $E$.
	
	Similarly, since $uv$ is in $E$, $v$ cannot be in $J'_u$. Therefore, $\Pi \setminus J'_u$ contains $v$ but not $u$, and thus separates $uv$.
\end{proof}

\section{Setting I} 
\label{sec.sublattice}

Under Setting I, the given lattice $\Lc$ has sublattices $\Lc_1$ and $\Lc_2$ such that 
$\Lc_1$ and $\Lc_2$ partition $\Lc$. The main structural fact for this setting is:

\begin{theorem}
	\label{thm:alternating}
	Let $\Lc_1$ and $\Lc_2$ be sublattices of $\Lc$ such that $\Lc_1$ and $\Lc_2$ partition $\Lc$. 
	Then there 
	exist sets of edges $E_1$ and $E_2$ defining $\Lc_1$ and $\Lc_2$ such that they
	form an alternating path from $t$ to $s$. 
	%To be precise, there exists a sequence $r_0 = t, r_1, \ldots, r_k = s$ such that $r_{0}r_{1}, \ldots, r_{k-2}r_{k-1} , r_{k-1}r_k$ are all edges in $E_1$ and $E_2$ and they alternate between $E_1$ and $E_2$.
\end{theorem}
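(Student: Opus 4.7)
The plan is to apply Theorem~\ref{thm:generalization} together with Proposition~\ref{prop.eq} to obtain edge sets $E_1, E_2$ defining $\Lc_1, \Lc_2$ respectively, take them to be minimal via Remark~\ref{remark:addedEdge}, and then show that after possibly replacing some of these edges by equivalent ones (ones separated by the same collection of closed sets), their union $E_1 \cup E_2$, viewed as a directed graph on $\Pi$, forms a simple alternating path from $t$ to $s$. The driver of the proof is a complementarity statement derived from Lemma~\ref{lem:separating} together with the partition hypothesis: for every closed set $I$ of $\Pi$, exactly one of $E_1, E_2$ has at least one of its edges separated by $I$. Which side is separated thus flips as $I$ traverses $\Lc$, and those flips are precisely what create the alternation.

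I would first anchor the two extreme edges of the path. One of $\Lc_1, \Lc_2$ contains $M_0$; WLOG $M_0 \in \Lc_1$, so the closed set $\{s\}$ separates no $E_1$-edge but must separate some $E_2$-edge. The only way $\{s\}$ can separate an edge $uv$ is to have $v = s$, which produces an edge of the form $us \in E_2$ incident to $s$. A symmetric analysis at the top of $\Pi$ using $\Pi \setminus \{t\}$ and whichever sublattice contains $M_z$ yields an edge $tu' \in E_i$ for some $i \in \{1,2\}$ incident to $t$. These become the two extreme edges of the path.

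For the interior of the path I would extend iteratively. Suppose the path ends at a proper element $v$, reached via an edge $uv \in E_j$. By Lemma~\ref{lem:prime}, $J_v$ separates this edge and therefore lies in $\Lc_{3-j}$. Passing to $I_v = J_v \setminus \{v\}$ (also closed) and comparing its separation pattern to that of $J_v$ forces, in either case, the existence of an edge of $E_{3-j}$ incident to $v$: if $I_v \in \Lc_j$ then the transition at $v$ must be witnessed by an $E_{3-j}$-edge with head $v$, while if $I_v \in \Lc_{3-j}$ a symmetric argument using $J'_v$ and $I'_v$ produces an $E_{3-j}$-edge with tail $v$. This new edge is appended to the path, and minimality of $E_1, E_2$ together with the freedom to replace an edge by another separated by the same collection of closed sets lets one align shared endpoints so the growing graph remains a simple path with strictly alternating edge labels.

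The hard part will be showing that this construction actually yields a \emph{simple} alternating path that covers \emph{all} of $E_1 \cup E_2$ and terminates precisely at the opposite anchor. Establishing simplicity/no-branching amounts to showing that every interior vertex has $E_1 \cup E_2$-degree exactly two after the adjustments: any additional edge incident to an interior $v$ would, under minimality and the complementarity property, end up separated by the very same collection of closed sets as an already-chosen edge at $v$ and could thus be removed by Remark~\ref{remark:addedEdge}. Termination at the correct dummy follows from a parity argument based on whether $M_0, M_z$ lie in the same sublattice (odd-length path, same label at both ends) or in different ones (even-length path, opposite labels). I expect the no-branching step to be the technically heaviest, since it is where the abstract minimality of the edge sets must be converted into concrete geometric rigidity of the graph $E_1 \cup E_2$.
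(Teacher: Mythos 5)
Your overall strategy is genuinely different from the paper's, but it has real gaps. The paper's proof proceeds in three clean steps: first it shows, via a reachability argument (Lemma~\ref{lem:path}), that the graph $E_1 \cup E_2$ \emph{already} contains a directed walk from $t$ to $s$ (if not, the closed set $\Pi\setminus R$, where $R$ is the set of vertices reachable from $t$, lies in one $\Lc_\alpha$ and hence must separate an edge of the other color, extending reachability --- a contradiction); then it contracts each maximal monochromatic sub-walk to a single fresh edge with the same separation pattern (Lemma~\ref{lem:replace}); and finally it discards every remaining edge of $E_1\cup E_2$ that is not on the chosen walk (Lemma~\ref{lem:remove}). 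You instead try to grow the path from the two anchors, and you never prove the global existence fact that makes the paper's argument go through.

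Your local extension step also does not work as stated. With $uv\in E_j$ incoming and $J_v\in\Lc_{3-j}$ forced by Lemma~\ref{lem:prime}, the case $I_v\in\Lc_j$ yields an $E_{3-j}$-edge with \emph{tail} $v$ (not head $v$, as you wrote): since $I_v$ must separate some $E_{3-j}$-edge $xy$ while $J_v=I_v\cup\{v\}$ separates none, one forces $x=v$. In the case $I_v\in\Lc_{3-j}$, passing to $\Pi\setminus J'_v$ and $\Pi\setminus I'_v$ only produces an edge at $v$ (with head $v$) when $\Pi\setminus I'_v\in\Lc_j$; the remaining sub-case $\Pi\setminus I'_v\in\Lc_{3-j}$ produces no edge incident to $v$ at all and is not ruled out, so the growth can stall. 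Finally, your ``every interior vertex has degree exactly two after the adjustments'' claim is speculative and does not match the paper's mechanism: the paper does not prove a degree bound; it fixes one walk, contracts monochromatic blocks, and deletes every off-walk edge, using the fact that any closed set separating a deleted edge $e\in E_\alpha$ also separates exactly one on-walk edge, which must again be in $E_\alpha$ by complementarity. The anchoring step and the complementarity observation in your write-up are correct, but without the reachability lemma and a correct case analysis the proposal does not constitute a proof.
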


We will prove this theorem in the context of stable matchings. 
Let $E_1$ and $E_2$ be any two sets of edges defining $\Lc_1$ and $\Lc_2$,
respectively. 
We will show that $E_1$ and $E_2$ can be adjusted so that they form an alternating path from $t$ to $s$,
without changing the corresponding compressions. 

\begin{lemma}
	\label{lem:path}
	There must exist a path from $t$ to $s$ composed of edges in $E_1$ and $E_2$.
\end{lemma}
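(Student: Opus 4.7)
I would argue by contradiction: assume there is no path from $t$ to $s$ composed of edges in $E_1 \cup E_2$, and deduce that some matching lies in both $\Lc_1$ and $\Lc_2$, contradicting the fact that $\Lc_1$ and $\Lc_2$ partition $\Lc$. So the target is to exhibit a proper closed set $I$ of $\Pi$ that separates no edge of $E_1 \cup E_2$; by Lemma~\ref{lem:separating} applied to both $E_1$ and $E_2$, such an $I$ would give $M(I) \in \Lc_1 \cap \Lc_2$.

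The natural candidate is as follows. Consider the directed graph obtained by augmenting the Hasse diagram of $\Pi$ with all arcs of $E_1 \cup E_2$, and let $R$ denote the set of vertices reachable from $t$ in this augmented digraph. Since Hasse arcs of $\Pi$ are oriented from smaller to larger elements, $R$ is automatically upward-closed in $\Pi$, and consequently $I := \Pi \setminus R$ is a downward-closed, i.e.\ closed, set of $\Pi$. Clearly $t \in R$, so $t \notin I$. Moreover, for every arc $u \to v \in E_1 \cup E_2$, if $u \notin I$ then $u \in R$ and the arc puts $v \in R$ as well, so $v \notin I$; hence $I$ does not separate $u \to v$. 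What remains is to establish that $s \in I$, equivalently that $s \notin R$, which would complete the proof.

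The subtle point is that the hypothesis only forbids directed paths using arcs of $E_1 \cup E_2$, whereas $R$ is defined via reachability that also allows Hasse arcs. My plan to bridge this gap is to show that any directed $t \to s$ path in the augmented digraph admits a shortcut to a path using only arcs of $E_1 \cup E_2$. Take such a path of minimum length; for each Hasse arc $v \to w$ it uses, I would invoke the minimality of $E_1, E_2$ from Remark~\ref{remark:addedEdge} and the separating closed sets $J_v$ and $\Pi \setminus J'_v$ provided by Lemma~\ref{lem:prime} to locate an $E_1 \cup E_2$ arc out of $v$ that bypasses $v \to w$. The partition hypothesis, which forces every proper closed set to separate an edge of exactly one of $E_1, E_2$, should be the key input making this bypass always available. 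Iterating strictly decreases the number of Hasse arcs on the path, producing an $E_1 \cup E_2$-only path from $t$ to $s$ and contradicting the initial assumption. Making this bypass step precise, and verifying it terminates in every case, is what I expect to be the main obstacle.
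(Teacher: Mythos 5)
Your approach matches the paper's in spirit — both define a reachability set $R$ from $t$, pass to the complement, and invoke Lemma~\ref{lem:separating} on both $E_1$ and $E_2$ to contradict the partition hypothesis — and you are right to worry about whether the complement is a closed set, a point the paper's one-line proof glosses over. But the two treatments of that worry diverge in a way that matters. The paper takes $R$ to be reachability via arcs of $E_1\cup E_2$ \emph{only}. Under the contradiction hypothesis $s\notin R$, the paper treats $\Pi\setminus R$ as a proper closed set, applies Lemma~\ref{lem:separating} to get a separated arc $uv\in E_2$ (say) with $u\in R$, $v\in\Pi\setminus R$, and then observes that $v$ would be reachable from $u$ by a \emph{single} $E_2$-arc, contradicting $v\notin R$. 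Because reachability is by $E$-arcs only, the contradiction is obtained immediately and there is nothing further to convert. (Whether $\Pi\setminus R$ is automatically downward-closed here is indeed left implicit; your instinct that this needs justification is sound.)

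You instead enlarge $R$ to reachability via Hasse arcs \emph{and} $E$-arcs. That does buy you closedness of $\Pi\setminus R$ for free, and your deduction $s\in R$ is correct. But now $s\in R$ only yields a $t\to s$ walk that may use Hasse arcs, and the ``bypass'' step you flag as the remaining obstacle is a genuine gap, not a routine cleanup. Hasse arcs are oriented upward ($x\prec y$), whereas $E$-arcs satisfy $u\not\preceq v$ by Lemma~\ref{lem:prime}; replacing an upward step by $E$-only steps is exactly the content you would need to prove, and neither Remark~\ref{remark:addedEdge} nor Lemma~\ref{lem:prime} gives you an $E$-arc leaving a given vertex on demand. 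In effect you have shifted the difficulty rather than removed it: the paper must justify closedness of $\Pi\setminus R$, while you must justify the bypass, and the latter looks at least as hard (it amounts to re-proving the lemma in disguise). A cleaner route is to stay with the paper's $E$-only $R$ and instead argue explicitly that $\Pi\setminus R$ is closed, or replace $\Pi\setminus R$ with a carefully chosen proper closed set $I$ such that $\Pi\setminus I\subseteq R$ and $I\cap R=\emptyset$ simultaneously — which again forces $I=\Pi\setminus R$, so that closedness is really the crux — rather than introducing Hasse arcs and then having to strip them out.
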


\begin{proof}
	Let $R$ denote the set of vertices reachable from $t$ by a path of edges in $E_1$ and $E_2$. Assume by contradiction that $R$ does not contain $s$. Consider the matching $M$ generated by rotations in $\Pi \setminus R$. Without loss of generality, assume that $M \in \Lc_1$. 
	By Lemma~\ref{lem:separating}, $\Pi \setminus R$ separates an edge $uv \in E_2$. Therefore, $u \in R$ and $v \in \Pi \setminus R$. Since $uv \in E_2$, $v$ is also reachable from $t$ by a path of edges in $E_1$ and $E_2$.
\end{proof}

Let $Q$ be a path from $t$ to $s$ according to Lemma~\ref{lem:path}. 
Partition $Q$ into subpaths $Q_1, \ldots, Q_k$ such that each $Q_i$ consists of edges in either $E_1$ or $E_2$ and 
$E(Q_i) \intersect E(Q_{i+1}) = \emptyset$ for all $1 \leq i \leq k-1$.
Let $r_i$ be the rotation at the end of $Q_i$ except for $i = 0$ where $r_0 = t$. Specifically, $t = r_0 \rightarrow r_1 \rightarrow \ldots \rightarrow r_k = s$ in $Q$.
We will show that each $Q_i$ can be replaced by a direct edge from $r_{i-1}$ to $r_i$, and furthermore,
all edges not in $Q$ can be removed.

\begin{lemma}
	\label{lem:replace}
	Let $Q_i$ consist of edges in $E_\alpha$ ($\alpha$ = 1 or 2).
	$Q_i$ can be replaced by an edge from $r_{i-1}$ to $r_i$ where $r_{i-1}r_i \in E_\alpha$.
\end{lemma}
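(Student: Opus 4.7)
The plan is to show that the modified set $E'_\alpha = (E_\alpha \setminus E(Q_i)) \cup \{r_{i-1} r_i\}$ still defines the sublattice $\Lc_\alpha$. By Lemma~\ref{lem:separating}, it suffices to verify that for every proper closed set $I$ of $\Pi$, $I$ separates some edge of $E_\alpha$ if and only if $I$ separates some edge of $E'_\alpha$. Since $E_\alpha$ and $E'_\alpha$ share $E_\alpha \setminus E(Q_i)$, this reduces to showing: assuming $I$ does not separate any edge of $E_\alpha \setminus E(Q_i)$, $I$ separates some edge of $E(Q_i)$ if and only if $I$ separates $r_{i-1} r_i$.

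The direction $(\Leftarrow)$ is a short intermediate-value argument along the directed subpath $Q_i$: if $r_i \in I$ but $r_{i-1} \notin I$, some consecutive pair of vertices on $Q_i$ must transition from ``not in $I$'' to ``in $I$'', producing an edge of $E(Q_i) \subseteq E_\alpha$ separated by $I$. The direction $(\Rightarrow)$ is the substantive step. Here I would invoke the partition $\Lc = \Lc_1 \cup \Lc_2$: since $I$ separates some edge of $E_\alpha$ but none of $E_\alpha \setminus E(Q_i)$, we have $M_I \notin \Lc_\alpha$, hence $M_I \in \Lc_\beta$ with $\beta = 3 - \alpha$, and therefore by Lemma~\ref{lem:separating} applied to $E_\beta$, $I$ separates no edge of $E_\beta$.

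The key structural observation is then: for every $j \neq i$, each edge of $Q_j$ lies either in $E_\beta$ or in $E_\alpha \setminus E(Q_i)$, and in either case is not separated by $I$. Defining the $0/1$ indicator $f(v) = \mathbf{1}[v \in I]$ along the vertices of $Q$, this forces $f$ to be non-increasing along every $Q_j$ with $j \neq i$. I can then propagate forward from $f(r_0) = f(t) = 0$ through $Q_1, \ldots, Q_{i-1}$ to conclude $r_{i-1} \notin I$, and backward from $f(r_k) = f(s) = 1$ through $Q_k, \ldots, Q_{i+1}$ to conclude $r_i \in I$. Hence $I$ separates $r_{i-1} r_i$, closing the hard direction. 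The main obstacle is precisely this propagation argument, which crucially combines the $\Lc_1, \Lc_2$ partition property with the alternation structure of $Q$ to pin down the states of the endpoints $r_{i-1}$ and $r_i$ from information only about an intermediate separated edge.
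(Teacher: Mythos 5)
Your proof is correct and rests on the same two ideas the paper uses: the intermediate-value observation along $Q_i$ for the $(\Leftarrow)$ direction, and a monotonicity/propagation argument along $Q$ for the $(\Rightarrow)$ direction. The paper's version is far terser — it simply asserts that ``any closed subset must separate exactly one of $r_0r_1,\ldots,r_{k-1}r_k$'' without justification — and your explicit chain of deductions (deriving $M_I \in \Lc_\beta$ via Lemma~\ref{lem:separating}, observing that every edge of every $Q_j$ with $j\neq i$ lies in $E_\beta$ or $E_\alpha\setminus E(Q_i)$ and hence is unseparated, then propagating $f=0$ forward and $f=1$ backward to pin down $f(r_{i-1})=0$ and $f(r_i)=1$) supplies exactly the missing justification, while also matching the lemma's single-$Q_i$ formulation more faithfully than the paper's global ``exactly one'' phrasing.
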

\begin{proof}
	A closed subset separating $r_{i-1}r_i$ must separate an edge in $Q_i$.
	Moreover, any closed subset must separate exactly one of $r_{0}r_{1}, \ldots, r_{k-2}r_{k-1} , r_{k-1}r_k$.
	Therefore, the set of closed subsets separating an edge in $E_1$ (or $E_2$) remains unchanged.
\end{proof}

\begin{lemma}
	\label{lem:remove}
	Edges in $E_1\union E_2$ but not in $Q$ can be removed.
\end{lemma}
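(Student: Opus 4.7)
The plan is to argue by contradiction, removing edges of $E_1 \cup E_2$ that lie outside $Q$ one at a time. Fix such an edge $uv$, and without loss of generality assume $uv \in E_1$. If removing $uv$ from $E_1$ changed the defined sublattice $\Lc_1$, then by Lemma~\ref{lem:separating} there would have to exist a proper closed set $I$ of $\Pi$ that separates $uv$ yet separates no other edge of $E_1$; indeed, exactly such an $I$ would permit $M(I)$ to enter $\Lc_1$ only after the deletion.

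I would then extract structure from this $I$. Since $I$ separates an edge of $E_1$, Lemma~\ref{lem:separating} gives $M(I) \notin \Lc_1$; the hypothesis that $\Lc_1$ and $\Lc_2$ partition $\Lc$ forces $M(I) \in \Lc_2$, and a second application of Lemma~\ref{lem:separating} then shows that $I$ separates no edge of $E_2$ at all. Next I would invoke the directed path $Q \colon t = r_0 \to r_1 \to \cdots \to r_k = s$ guaranteed by Lemma~\ref{lem:path}. Because $I$ is a proper closed set, $t \notin I$ and $s \in I$, so the indicator of membership in $I$ along $Q$ must flip from $0$ to $1$ at some index $i$, giving $r_{i-1} \notin I$ and $r_i \in I$. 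In other words, $I$ separates the directed path edge $r_{i-1}r_i$.

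A case split on which of $E_1, E_2$ contains this edge finishes the argument. If $r_{i-1}r_i \in E_2$, this contradicts the conclusion that $I$ separates no edge of $E_2$. If $r_{i-1}r_i \in E_1$, then $uv$ is by hypothesis the only edge of $E_1$ separated by $I$, so $r_{i-1}r_i = uv$, contradicting the assumption $uv \notin Q$. The symmetric argument removes non-path edges from $E_2$, and the removals compose since $Q$ and its edges are untouched at each step.

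I do not anticipate any substantive obstacle: all the real structural work lives in Lemmas~\ref{lem:separating}, \ref{lem:path}, and \ref{lem:replace}. The only point requiring care is the elementary discrete intermediate-value observation that a directed walk from $t$ (outside $I$) to $s$ (inside $I$) must cross into $I$ at some step, which is exactly what supplies the separated on-path edge needed to force the contradiction.
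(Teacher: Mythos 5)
Your proposal is correct and follows essentially the same route as the paper: both arguments use Lemma~\ref{lem:separating} to show that any closed set separating the off-path edge must also separate a $Q$-edge from the same side, so deleting the off-path edge cannot change the defined sublattice. The paper states this directly for every $I$ separating $e$, whereas you cast it as a contradiction and make explicit the discrete intermediate-value observation (that a directed walk from $t \notin I$ to $s \in I$ must cross into $I$) which the paper only implicitly invokes; that is a welcome bit of extra rigor but not a different method.
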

\begin{proof}
	Let $e$ be an edge in $E_1\union E_2$ but not in $Q$. 
	Suppose that $e \in E_1$.
	Let $I$ be a closed subset separating $e$.
	By Lemma~\ref{lem:separating}, the matching generated by $I$ belongs to $\Lc_2$.
	Since $e$ is not in $Q$ and $Q$ is a path from $t$ to $s$, $I$ must separate another edge $e'$ in $Q$. 
	By Lemma~\ref{lem:separating}, $I$ can not separate edges in both $E_1$ and $E_2$. 
	Therefore, $e'$ must also be in $E_1$. 
	Hence, the matching generated by $I$ will still be in $\Lc_2$ after removing $e$ from $E_1$.
	The argument applies to all closed subsets separating $e$.
\end{proof}

By Lemma~\ref{lem:replace} and Lemma~\ref{lem:remove}, $r_{0}r_{1}, \ldots, r_{k-2}r_{k-1} , r_{k-1}r_k$ are all edges in $E_1$ and $E_2$ and they alternate between $E_1$ and $E_2$. Therefore, we have Theorem~\ref{thm:alternating}. An illustration of such a path is given in Figure~\ref{ex:canonicalPathAndBouquet}(a).

%The path from $t$ to $s$ in Theorem~\ref{thm:alternating} is called a \emph{canonical path}.

\begin{proposition}
\label{prop.sub}
	There exists a sequence of rotations $r_0, r_1, \ldots , r_{2k}, r_{2k+1}$ such that 
	a closed subset generates a matching in  $\Lc_1$ iff it
	contains $r_{2i}$ but not $r_{2i+1}$ for some $0 \leq i \leq k$.
\end{proposition}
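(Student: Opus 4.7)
The plan is to read off the desired sequence directly from the alternating path furnished by Theorem~\ref{thm:alternating}, and then invoke Lemma~\ref{lem:separating} to obtain the iff characterization.

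First, I would apply Theorem~\ref{thm:alternating} to fix sets of edges $E_1$ and $E_2$ defining $\Lc_1$ and $\Lc_2$, respectively, so that $E_1 \cup E_2$ forms a single alternating path
\[ t = v_0 \to v_1 \to v_2 \to \cdots \to v_m = s, \]
with consecutive edges alternating between $E_1$ and $E_2$. By Lemmas~\ref{lem:replace} and~\ref{lem:remove}, I may further assume $E_1 \cup E_2$ consists of precisely these path edges. Let $e_0, e_1, \ldots, e_k$ enumerate the edges of $E_2$ in the order they appear along the path, and write each $e_j = u_j \to w_j$, with tail $u_j$ and head $w_j$. Define the candidate sequence by
\[ r_{2j} := w_j, \qquad r_{2j+1} := u_j, \qquad j = 0, 1, \ldots, k. \]

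To verify the equivalence, I would fix an arbitrary proper closed set $I$ of $\Pi$ and let $M_I$ denote the matching it generates. Applying Lemma~\ref{lem:separating} to $E_2$, one has $M_I \in \Lc_2$ if and only if $I$ separates no edge of $E_2$. Since $\Lc_1$ and $\Lc_2$ partition $\Lc$, this is equivalent to: $M_I \in \Lc_1$ if and only if $I$ separates some $e_j$. Unfolding the definition of separation, $I$ separates $e_j$ precisely when its head $w_j = r_{2j}$ belongs to $I$ while its tail $u_j = r_{2j+1}$ does not. Chaining these equivalences yields exactly the characterization asserted by the proposition.

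The main bookkeeping subtlety is the parity: depending on whether the first and last edges of the alternating path lie in $E_1$ or in $E_2$, one or both of the dummies $s, t$ can appear among the $r_i$. This is harmless, since $t \notin I$ and $s \in I$ for every proper closed $I$, so the membership tests at those indices become automatic and the substantive content reduces to the adjacent proper rotation. I anticipate no deeper obstacle, as all the real work has already been done in Theorem~\ref{thm:alternating} and Lemma~\ref{lem:separating}; this proposition is essentially a repackaging of those facts into a form convenient for the algorithmic applications that follow.
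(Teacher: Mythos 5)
Your proposal is correct and follows the route the paper clearly intends: the paper states Proposition~\ref{prop.sub} immediately after Theorem~\ref{thm:alternating} with no separate proof, so it is meant to be read off from the alternating path together with Lemma~\ref{lem:separating}, exactly as you do. Reading the $E_2$ edges along the path, setting $r_{2j}$ and $r_{2j+1}$ to the head and tail of the $j$-th one, and using that ``$M_I\in\Lc_1$ iff $M_I\notin\Lc_2$ iff $I$ separates some $E_2$ edge'' is the right chain of equivalences, and your remark about the dummy endpoints $s,t$ possibly absorbing some indices correctly disposes of the only bookkeeping wrinkle.
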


\begin{figure}
	\begin{wbox}
		\begin{minipage}[c]{0.5\textwidth}
			\centering
			\def\svgscale{0.4}
			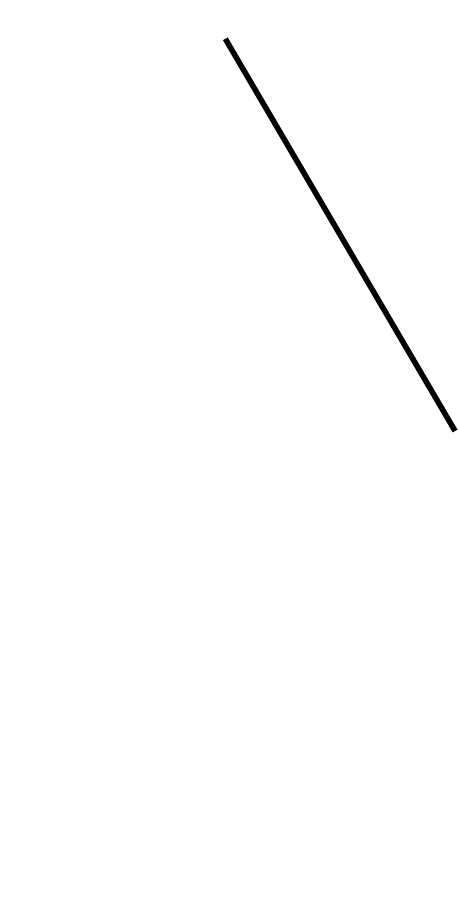
			
			(a)
		\end{minipage}
		\begin{minipage}[c]{0.5\textwidth}
			\centering
			\def\svgscale{0.4}
			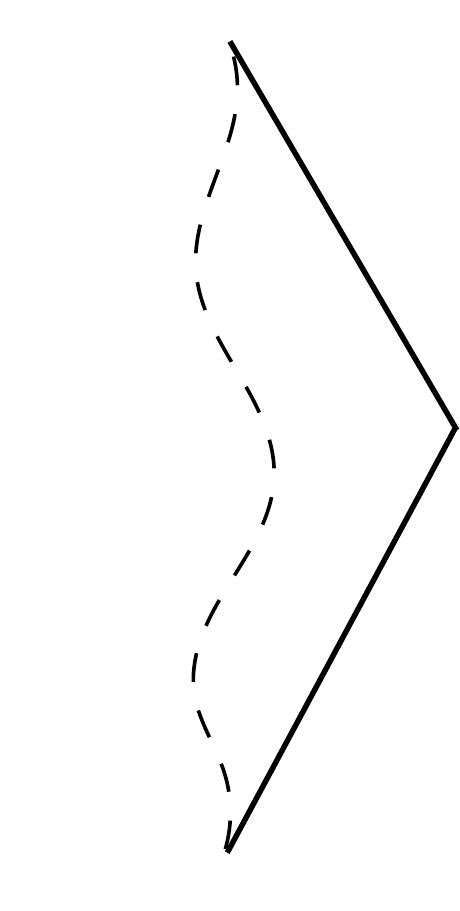
			
			(b)	
		\end{minipage}
	\end{wbox}
	\caption{Examples of: (a) canonical path, and (b) bouquet. 
	}
	\label{ex:canonicalPathAndBouquet} 
\end{figure}

\section{Setting II}
\label{sec.semi}

Under Setting II, the given lattice $\Lc$ can be partitioned into a sublattice $\Lc_1$ and a 
semi-sublattice $\Lc_2$. 
We assume that $\Lc_2$ is a join semi-sublattice. 
Clearly by reversing the order of $\Lc$, the case of meet semi-sublattice is also covered.
The next theorem, which generalizes Theorem \ref{thm:alternating},
gives a sufficient characterization of a set of edges $E$ defining $\Lc_1$.

\begin{theorem}
\label{thm:semi}
There exists a set of edges $E$ defining sublattice $\Lc_1$ such that:
\begin{enumerate}
	\item The set of tails $T_E$ of edges in $E$ forms a chain in $\Pi$.
	\item There is no path of length two consisting of edges in $E$.
	\item For each $r \in T_E$, let
	\[F_r = \{ v \in \Pi: rv \in E \}.\]
	Then any two rotations in $F_r$ are incomparable.
	\item For any $r_i,r_j \in T_E$ where $r_i \prec r_j$, there exists a splitting set containing all rotations in $F_{r_i} \union \{r_i\}$ and no rotations in $F_{r_j} \union \{r_j\}$.  
\end{enumerate}
\end{theorem}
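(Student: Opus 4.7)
The plan is to construct the edge set $E$ directly from the structure of the partition $(\Lc_1,\Lc_2)$, rather than starting from an arbitrary edge set and massaging it. The key asymmetry is that $\Lc_2$ is only a join semi-sublattice (not a full sublattice), and this is precisely what forces the ``bouquet'' shape --- chain of tails with fanning heads --- rather than the symmetric alternating-path structure of Setting~I.

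First I would identify a candidate set $T_E$ of tails. The natural candidate is the set of rotations $r$ that label a Hasse edge $(M',M)$ in $H(\Lc)$ from some $M'\in\Lc_2$ to some $M\in\Lc_1$, i.e., a re-entry into $\Lc_1$. To prove property~1, I argue any two such rotations $r_1,r_2$, from re-entries $(M'_1,M_1),(M'_2,M_2)$, must be comparable in $\Pi$. In closed-set language $M_i$ corresponds to $I'_i\cup\{r_i\}$ where $I'_i$ generates $M'_i$. By the join semi-sublattice hypothesis, $M'_1\vee M'_2\in\Lc_2$, and by the sublattice property, $M_1\vee M_2\in\Lc_1$. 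The transition between these two joins in $H(\Lc)$ goes through a sequence of Hasse edges whose labels include the rotations needed to move from $I'_1\cup I'_2$ to $I'_1\cup I'_2\cup\{r_1,r_2\}$. Using Lemma~\ref{lem:seqElimination} to track which rotations appear on this sequence, and incomparability of $r_1,r_2$ so they can be inserted in either order, one locates a re-entry Hasse edge whose label is a rotation strictly preceding both $r_1$ and $r_2$ in $\Pi$, contradicting the choice of $r_1,r_2$ as re-entry rotations with the stated roles. This forces $T_E$ to be a chain.

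With the chain of tails in hand, I would define $F_r$ for each $r\in T_E$ by the following property: $v\in F_r$ iff a Hasse edge labelled $v$ leaving an $\Lc_1$-matching into $\Lc_2$ has its first re-entry into $\Lc_1$ at a tail labelled $r$. Property~3 (incomparable heads) is then a pruning argument via Remark~\ref{remark:addedEdge}: a comparable pair $v_1\prec v_2$ inside $F_r$ would make $rv_1$ redundant (any closed set separating $rv_2$ already separates $rv_1$ trivially), so it may be removed. Property~2 (no length-two paths) follows from the chain structure: a rotation $b$ simultaneously being a head of some $ab$ and a tail of some $bc$ would place $b$ into $T_E$ while also attached to a different tail $a$, an incompatibility the chain prevents. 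For property~4, the required splitting set is $S_{ij}=J_{r_i}\cup\bigcup_{v\in F_{r_i}}J_v$ (taking predecessor closure); it contains $F_{r_i}\cup\{r_i\}$ by construction, misses $r_j$ by the chain property ($r_j\not\preceq r_i$), misses $F_{r_j}$ by the uniqueness of the attached tail for each head, and is shown to be splitting because for every $uv\in E$ either $v\notin S_{ij}$ or $u\in S_{ij}$ by the way $S_{ij}$ was assembled.

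The principal obstacle is property~1. The subtlety is that a lattice join $M'_1\vee M'_2$ may already contain $r_1$ or $r_2$ in its closed-set representation --- if, say, $r_1$ was applied en route to $M'_2$ --- so the rotation labelling the next ``step out'' of $M'_1\vee M'_2$ need not be $r_1$ or $r_2$ itself, and one must chase the Hasse path more carefully. I expect this will require a preliminary lemma normalizing the choice of re-entry edge, for example by restricting attention to re-entries that are minimal in a suitable sense along canonical paths from $M_0$ to $M_z$, so that the comparability argument on rotations can be carried out cleanly in terms of the poset $\Pi$ rather than in $\Lc$ directly.
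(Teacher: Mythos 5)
Your approach is genuinely different from the paper's: you try to build the bouquet $E$ directly out of the Hasse diagram of $\Lc$, by declaring the tails to be the labels of ``re-entry'' Hasse edges from $\Lc_2$ into $\Lc_1$ and assigning heads to tails by path-chasing. The paper instead starts from an \emph{arbitrary} edge set $E$ defining $\Lc_1$ (which exists by Theorem~\ref{thm:generalization} and the alternative view of compression) and massages it, via Lemmas~\ref{lem:uniqueMaximal}, \ref{lem:semi-replace}, \ref{lem:flower-separating}, and~\ref{lem:semi-replace2}, into bouquet form in a controlled way that preserves the sublattice it defines at every step. The key structural engine in the paper is Lemma~\ref{lem:uniqueMaximal}: inside any splitting set there is a \emph{unique} maximal tail, and the proof builds sets $S_i = J_{u_i}\cup J_{v_j}$ whose union would produce an $\Lc_1$-matching violating the join semi-sublattice hypothesis.

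There is a genuine gap in your argument, and it is exactly where you flagged it. Your proof of property~1 attempts a join-based contradiction from two incomparable re-entry rotations $r_1,r_2$, but you do not handle the case $r_1\in I'_2$ (or symmetrically), and even in the clean case you do not reach a contradiction: the chain $I'_1\cup I'_2 \to I'_1\cup I'_2\cup\{r_1\} \to I'_1\cup I'_2\cup\{r_1,r_2\}$ runs from $\Lc_2$ to $\Lc_1$, so \emph{one} of the two intermediate steps is a re-entry with label $r_1$ or $r_2$ --- but that is perfectly consistent with $r_1,r_2\in T_E$ and gives nothing. You need a sharper object to compare (the paper's trick is to take a \emph{union} of $k$ carefully chosen closed sets, each of which lands in $\Lc_2$, and whose union lands in $\Lc_1$, contradicting closure of $\Lc_2$ under join); it is not clear your definition of $T_E$ admits an analogous argument without the normalization you yourself say is needed. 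Beyond this, you never verify the foundational claim that your constructed $E$ actually \emph{defines} $\Lc_1$ in the sense of Lemma~\ref{lem:separating}; in the paper this is free because the starting $E$ is assumed to define $\Lc_1$ and every modification is proved to preserve the set of closed subsets separating an edge. Finally, your verification of property~4 via $S_{ij}=J_{r_i}\cup\bigcup_{v\in F_{r_i}}J_v$ is only a sketch --- that $S_{ij}$ is a splitting set (does not separate \emph{or} cross any edge of $E$) requires the very structural facts (unique maximal tail, flowers separable by nested splitting sets) that are established during the paper's iterative construction, so it cannot be invoked independently of a completed proof of properties~1--3.
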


A set $E$ satisfying Theorem~\ref{thm:semi} will be called a \emph{bouquet}. 
For each $r \in T_E$, let $L_r = \{rv \ | \ v \in F_r \}$. Then $L_r$ will be called a 
\emph{flower}. Observe that the bouquet $E$ is partitioned into flowers. 
These notions are illustrated in Figure~\ref{ex:canonicalPathAndBouquet}(b). The black path,
directed from $s$ to $t$, is the chain mentioned in Theorem \ref{thm:semi} and
the red edges constitute $E$. Observe that the tails of edges $E$ lie on the chain. For each such
tail, the edges of $E$ outgoing from it constitute a flower.

Let $E$ be an arbitrary set of edges defining $\Lc_1$. We will show that $E$ can be modified so 
that the conditions in Theorem~\ref{thm:semi} are satisfied.
Let $S$ be a splitting set of $\Pi$. In other words, $S$ is a closed subset such that for all $uv \in E$, either $u,v$ are both in $S$ or $u,v$ are both in $\Pi \setminus S$.

\begin{lemma}
	\label{lem:uniqueMaximal}
	There is a unique maximal rotation in $T_E \intersect S$.
\end{lemma}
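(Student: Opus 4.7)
My plan is to derive a contradiction from the assumption that two distinct maximal rotations exist in $T_E \cap S$, by constructing two matchings in $\Lc_2$ whose join is forced to also lie in $\Lc_1$.

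First I would suppose, for contradiction, that $r_1 \neq r_2$ are both maximal in $T_E \cap S$; being maximal and distinct, they must be incomparable in $\Pi$. For $i = 1, 2$, I introduce the closed subset $S_i := S \setminus J'_{r_i}$. Each $S_i$ is indeed closed: if $y \prec x$ and $x \in S_i$, then $y \in S$ since $S$ is closed, and $r_i \preceq y$ would force $r_i \prec x$, contradicting $x \notin J'_{r_i}$. Next I would verify $M_{S_i} \in \Lc_2$: since $r_i \in T_E$, there is some edge $r_i v \in E$, and the minimality convention of Remark~\ref{remark:addedEdge} gives $r_i \not\preceq v$, so $v \notin J'_{r_i}$; combined with $v \in S$ (from the splitting property applied to $r_i \in S$), this shows $S_i$ separates $r_i v$, so by Lemma~\ref{lem:separating} the matching $M_{S_i}$ lies outside $\Lc_1$, hence in $\Lc_2$.

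Next I would consider $S_1 \cup S_2 = S \setminus (J'_{r_1} \cap J'_{r_2})$, which is closed as the union of two closed sets. My claim is that it separates no edge of $E$, which would yield $M_{S_1 \cup S_2} \in \Lc_1$ by Lemma~\ref{lem:separating}. If some edge $uv \in E$ were separated, then $v \in S_1 \cup S_2 \subseteq S$ and the splitting property would give $u \in S$; since $u$ is the tail of an edge of $E$, we would then have $u \in T_E \cap S$. The condition $u \notin S_1 \cup S_2$ forces both $r_1 \preceq u$ and $r_2 \preceq u$ simultaneously; the maximality of $r_1$ in $T_E \cap S$ then forces $u = r_1$, and symmetrically $u = r_2$, contradicting the incomparability of $r_1$ and $r_2$.

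Finally, since the union of closed sets of $\Pi$ corresponds to the join of the associated matchings, $M_{S_1 \cup S_2} = M_{S_1} \vee M_{S_2}$. Because $\Lc_2$ is a join semi-sublattice and both $M_{S_1}, M_{S_2}$ lie in $\Lc_2$, this join must also lie in $\Lc_2$, contradicting the partition of $\Lc$ into $\Lc_1$ and $\Lc_2$. The subtle step will be the ``no edge separated'' argument for $S_1 \cup S_2$: it crucially uses that any tail of an edge in $E$ automatically belongs to $T_E$, so the splitting property of $S$ promotes it into $T_E \cap S$, where the maximality of $r_1$ and $r_2$ rules out any candidate $u$.
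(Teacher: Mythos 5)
Your proof is correct and uses essentially the same strategy as the paper: build two closed subsets that each separate an edge of $E$ (so generate matchings in $\Lc_2$ by Lemma~\ref{lem:separating}) and show that their union separates no edge (so generates a matching in $\Lc_1$), contradicting the join semi-sublattice property of $\Lc_2$. The only difference is in the construction of the auxiliary sets: you take $S_i = S \setminus J'_{r_i}$, carving down from the splitting set $S$, whereas the paper builds up with $S_i = J_{u_i} \cup J_{v_j}$; both then exploit maximality and incomparability of the two rotations in $T_E \cap S$ in the same way.
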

\begin{proof}
	Suppose there are at least two maximal rotations $u_1,u_2, \ldots u_k$ ($k \geq 2$) in $T_E \intersect S$.
	Let $v_1, \ldots v_k$ be the heads of edges containing  $u_1,u_2, \ldots u_k$.
	For each $1 \leq i \leq k$, let $S_i = J_{u_i} \union J_{v_j}$ where $j$ is any index such that $j \not = i$.
	Since $u_i$ and $u_j$ are incomparable, $u_j \not \in J_{u_i}$. 
	Moreover, $u_j \not \in J_{v_j}$ by Lemma~\ref{lem:prime}.  
	Therefore, $u_j \not \in S_i$. 
	It follows that $S_i$ contains $u_i$ and separates $u_j v_j$.
	Since $S_i$ separates $u_jv_j \in E$, the matching generated by $S_i$ is in $\Lc_2$ according to Lemma~\ref{lem:separating}.
	
	Since $\Union_{i=1}^k S_i$ contains all maximal rotations in $T_E \intersect S$ and $S$ does not separate any edge in $E$,
	$\Union_{i=1}^k S_i$ does not separate any edge in $E$ either. Therefore, the matching generated by $\Union_{i=1}^k S_i$ 
	is in $\Lc_1$, and hence not in $\Lc_2$. This contradicts the fact that $\Lc_2$ is a join semi-sublattice. 
\end{proof}

%Let $S = P$ in Lemma~\ref{lem:uniqueMaximal}, and denote $r_1$ by the unique maximal rotation in $T_E$.
Denote by $r$ the unique maximal rotation in $T_E \intersect S$. Let
\begin{align*}
R_r &= \{ v \in \Pi: \text{ there is a path from $r$ to $v$ using edges in $E$}\}, \\
E_r &= \{ uv \in E: u,v \in R_r \}, \\
G_r &= \{R_r, E_r \}.
\end{align*}
%and $G_r$ be the graph whose vertices are in $R_r$ and edges are in $E_r$. 
Note that $r \in R_r$.
For each $v \in R_r$ there exists a path from $r$ to $v$ and $r \in S$. 
Since $S$ does not cross any edge in the path,  $v$ must also be in $S$.
Therefore, $R_r \subseteq S$. 

\begin{lemma}
	\label{lem:semi-replace}
	Let $u \in (T_E \intersect S) \setminus R_{r}$ such that $u \succ x$ for $x \in R_{r}$. Then we can replace each $uv \in E$ with $rv$.
\end{lemma}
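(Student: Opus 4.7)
The plan is to show the replacement preserves the sublattice defined, by verifying via Lemma~\ref{lem:separating} that the set of closed subsets separating at least one edge is unchanged. Let $L_u = \{uv : uv \in E\}$ and let $E'$ be the modified edge set obtained by deleting $L_u$ and adding $L_u' = \{rv : uv \in E\}$. Since $E$ and $E'$ agree outside of $L_u$ and $L_u'$, it suffices to show that for every closed subset $I$ of $\Pi$, $I$ separates some edge of $L_u$ or some edge of $E \setminus L_u$ if and only if $I$ separates some edge of $L_u'$ or some edge of $E' \setminus L_u'$. Equivalently, for any closed $I$, $I$ separates some edge in $E$ iff $I$ separates some edge in $E'$.

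The first ingredient I would establish is that $u \prec r$. This follows from Lemma~\ref{lem:uniqueMaximal}: $r$ is the unique maximal element of $T_E \cap S$, which in a finite poset forces every other element of $T_E \cap S$ to lie strictly below $r$. The second ingredient is that $x \prec u$ holds by hypothesis ($u \succ x$ with $x \in R_r$), so by closure of $I$, any $I$ containing $u$ also contains $x$; and any $I$ not containing $r$ also fails to contain $u$ (again by closure, since $u \prec r$).

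For the forward direction, suppose $I$ separates an edge of $E$. If the edge lies outside $L_u$, it lies in $E'$ and we are done. Otherwise the edge is $uv \in L_u$, meaning $u \notin I$ and $v \in I$. Closure of $I$ together with $u \prec r$ yields $r \notin I$, so $I$ separates $rv \in L_u' \subset E'$. For the reverse direction, suppose $I$ separates an edge of $E'$. Edges outside $L_u'$ lie in $E$ and we are done. Otherwise the separated edge is $rv \in L_u'$, giving $r \notin I$ and $v \in I$. If $u \notin I$, then $I$ separates $uv \in L_u \subset E$. If instead $u \in I$, then $x \in I$ by closure. Now take any path $r = v_0 \to v_1 \to \cdots \to v_k = x$ in $E_r$ (which exists by definition of $R_r$). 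Since $v_0 = r \notin I$ and $v_k = x \in I$, there must be an index $i$ with $v_i \notin I$ and $v_{i+1} \in I$, so $I$ separates $v_i v_{i+1} \in E_r \subseteq E$; note $v_i \neq u$ because $u \notin R_r$, so this edge indeed survives in both $E$ and $E'$.

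The main obstacle is the reverse direction when $u \in I$: a priori, after replacement the head $v$ is ``attached'' only to $r$ and not to $u$, so one might worry that $I$ separates a new edge $rv$ in $E'$ without separating any edge of $E$. The hypothesis $u \succ x$ for some $x \in R_r$ is exactly what rescues this case: it forces $x \in I$, and the path in $E_r$ from $r$ to $x$ then guarantees some separated edge in $E$. Once this forward/reverse equivalence is in place, Lemma~\ref{lem:separating} immediately yields that $E'$ still defines $\Lc_1$, which is the statement of the lemma.
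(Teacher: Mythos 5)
Your proposal is correct and follows essentially the same approach as the paper: show that the set of closed subsets separating at least one edge is unchanged, handling the forward direction via $u \prec r$ (from Lemma~\ref{lem:uniqueMaximal}) and the reverse direction by splitting on whether $u \in I$, using the path in $E_r$ from $r$ to $x$ when $u \in I$. Your write-up is a bit more explicit (pinning down the separated edge on the path, and correctly invoking $r \succ u$ where the paper's proof has an apparent typo ``since $r \succ v$''), but the underlying argument is identical.
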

\begin{proof}
	We will show that the set of closed subsets separating an edge in $E$ remains unchanged. 
	
	Let $I$ be a closed subset separating $uv$. Then $I$ must also separate $rv$ since $r \succ v$. 
	
	Now suppose $I$ is a closed subset separating $rv$. We consider two cases:
	\begin{itemize}
		\item If $u \in I$, $I$ must contain $x$ since $u \succ x$. Hence, $I$ separates an edge in the path from $r$ to $x$.
		\item If $u \not \in I$, $I$ separates $uv$. 
	\end{itemize}
\end{proof}

Keep replacing edges according to Lemma~\ref{lem:semi-replace} until there is no $u \in (T_E \intersect S) \setminus R_{r}$ such that $u \succ x$ for some $x \in R_{r}$.

\begin{lemma}
	\label{lem:flower-separating}
	Let
	\[ X = \{v \in S: v \succeq x \text{ for some } x \in R_{r}\}. \]
	\begin{enumerate}
		\item $S \setminus X$ is a closed subset. 
		\item $S \setminus X$ contains $u$ for each $u \in (T_E \intersect S) \setminus R_{r}$.
		\item $(S \setminus X) \intersect R_{r} = \emptyset$.
		\item $S \setminus X$ is a splitting set.
	\end{enumerate}
\end{lemma}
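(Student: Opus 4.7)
The plan is to check the four items in sequence, using the closedness and splitting property of $S$, the termination of the repeated applications of Lemma~\ref{lem:semi-replace}, Lemma~\ref{lem:prime} on edge incomparability, and one additional structural fact about $E$ that I would prove along the way.

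Parts~1--3 are short. For Part~1, take $v \in S \setminus X$ and $w \prec v$ in $\Pi$: closedness of $S$ puts $w \in S$, and if $w \in X$ via $w \succeq x \in R_r$ then transitivity gives $v \succ w \succeq x$ and so $v \in X$, contradicting $v \in S \setminus X$. For Part~2, take $u \in (T_E \intersect S) \setminus R_r$; if $u \in X$ then $u \succeq x$ for some $x \in R_r$ with $u \neq x$ since $u \notin R_r$, so $u \succ x$, matching the hypothesis of Lemma~\ref{lem:semi-replace} and contradicting the fact that no further replacements are possible. Part~3 is immediate from the definition of $X$, since $v \in R_r$ yields $v \succeq v \in R_r$ and thus $v \in X$.

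Part~4 is where the real work lies. Fix an edge $uv \in E$. Because $S$ is splitting, either $u, v \notin S$ (and so neither lies in $S \setminus X$, since $X \subseteq S$) or $u, v \in S$. In the latter case I would split on whether $u \in R_r$: if yes, then $v \in R_r$ by forward-reachability along $E$-edges from $r$, so both $u$ and $v$ lie in $X$ and neither in $S \setminus X$; if not, Part~2 already places $u \in S \setminus X$, and the task reduces to showing $v \in S \setminus X$, equivalently $v \notin X$.

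For this last sub-case, the key additional ingredient I would establish is that each vertex of $\Pi$ is the head of at most one edge of $E$. If $uv, wv \in E$ with $u \neq w$, pick minimal closed sets $I_1, I_2$ separating respectively only $uv$ and only $wv$ (available by Remark~\ref{remark:addedEdge}); a short case check shows $I_1 \union I_2$ separates no edge of $E$ at all, so the matching it generates lies in $\Lc_1$ by Lemma~\ref{lem:separating}, yet as the join of two $\Lc_2$-matchings it also lies in $\Lc_2$ by the join-semi-sublattice assumption, contradicting the partition $\Lc = \Lc_1 \cup \Lc_2$. Combined with Lemma~\ref{lem:prime} (which forces $u$ and $v$ to be incomparable in $\Pi$) and the invariant ($u \not\succ x$ for any $x \in R_r$), this rules out $v \in R_r$ whenever $v \neq r$, because then the $E$-path from $r$ to $v$ would furnish a second $E$-edge $wv$ into $v$. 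The hardest residual cases — and the main obstacle — are the boundary $v = r$, which I would exclude using the maximality of $r$ in $T_E \intersect S$ together with the single-head property applied to $r$, and the case $v \notin R_r$ with $v \succeq y$ for some $y \in R_r$, which I would exclude by comparing the closed set $J_v \intersect S$ (which separates $uv$ and contains $y$) with the path from $r$ to $y$ inside $R_r$ to produce a second separated edge, contradicting minimality of $E$. Packaging these last two arguments without slack is the delicate part of the proof.
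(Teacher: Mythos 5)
Your Parts 1--3 are correct and essentially identical to the paper's arguments. Part~4 is where the issues lie.

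First, a minor but worth-flagging misquote: Lemma~\ref{lem:prime} does \emph{not} force $u$ and $v$ to be incomparable for $uv \in E$. Unpacking $u \notin J_v$ and $v \notin J'_u$, both give only $u \not\preceq v$; the edge may well point backward in the poset ($v \prec u$), and typically does. Fortunately you only seem to use $u \not\preceq v$, so this does not break anything, but the stated justification is wrong.

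The genuine gap is in your last sub-case, $v \in X \setminus R_r$. You exhibit \emph{one} closed set, $J_v \cap S$, that separates both $uv$ and some edge on the $E$-path from $r$ to $y$, and then invoke Remark~\ref{remark:addedEdge} (``minimality of $E$''). But that remark only removes an edge $uv$ if \emph{every} closed set separating $uv$ also separates another edge of $E$. Producing a single such set is not a contradiction. The paper's proof closes exactly this hole by arguing for an \emph{arbitrary} closed set $J$ separating $uv$: since $u \notin J$, $u \prec r$ (from Lemma~\ref{lem:uniqueMaximal}) and $J$ is downward-closed, $r \notin J$; since $v \in J$, $v \succeq x$ for some $x \in R_r$, and $J$ is downward-closed, $x \in J$; hence $J$ separates some edge of $E_r$ on the path from $r$ to $x$, and that edge is distinct from $uv$ because $u \notin R_r$. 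As $J$ was arbitrary, Remark~\ref{remark:addedEdge} applies. Your reasoning for $J_v \cap S$ is exactly this argument instantiated at one $J$, so the fix is just to run it for all $J$ --- but as written it is incomplete, and you flagged this yourself.

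Once you make that fix, notice that the uniform ``for all $J$'' argument already covers $v \in R_r$ and $v = r$ too (taking $y = v$ and $y$ any rotation in $R_r$ strictly below $v$ respectively, with $r \not\preceq v$ guaranteed because $u \prec r$ and $u \not\preceq v$), so the ``at most one head'' detour becomes unnecessary. That said, the ``at most one head'' claim itself is correct --- the $I_1 \cup I_2$ argument via the join-semi-sublattice property is sound --- and it is a clean structural fact not stated in the paper; it is simply heavier machinery than this lemma requires.
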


\begin{proof}
The lemma follows from the claims given below: 

	\paragraph{Claim 1.}  $S \setminus X$ is a closed subset.
	\begin{proof}
		Let $v$ be a rotation in $S \setminus X$ and $u$ be a predecessor of $v$.
		Since $S$ is a closed subset, $u \in S$.
		Notice that if a rotation is in $X$, all of its successor must be included.
		Hence, since $v \notin X$, $u \notin X$.
		Therefore, $u \in S \setminus X$.
	\end{proof}
	
	\paragraph{Claim 2.} $S \setminus X$ contains $u$ for each $u \in (T_E \intersect S) \setminus R_{r}$.
	\begin{proof}
		After replacing edges according to Lemma~\ref{lem:semi-replace}, for each $u \in (T_E \intersect S) \setminus R_{r}$ we must have that $u$ does not succeed any $x \in R_r$. Therefore, $u \notin X$ by the definition of $X$.		
	\end{proof}

	\paragraph{Claim 3.} $(S \setminus X) \intersect R_{r} = \emptyset$.
	\begin{proof}
		Since $R_{r} \subseteq X$, $(S \setminus X) \intersect R_{r} = \emptyset$.
	\end{proof}
	
	\paragraph{Claim 4.} $S \setminus X$ does not separate any edge in $E$.
	\begin{proof}
		Suppose $S \setminus X$ separates $uv \in E$. Then $u \in X$ and $v \in  S \setminus X$.
		By Claim 2, $u$ can not be a tail vertex, which is a contradiction.  
	\end{proof}
	
	\paragraph{Claim 5.} $S \setminus X$ does not cross any edge in $E$.
	\begin{proof}
		Suppose $S \setminus X$ crosses $uv \in E$. Then $u \in S \setminus X$ and $v \in X$.
		Let $J$ be a closed subset separating $uv$. Then $v \in J$ and $u \notin J$.
		
		Since $uv \in E$ and $u \in S$, $u \in T_E \intersect S$. Therefore, $r \succ u$ by Lemma~\ref{lem:uniqueMaximal}.
		Since $J$ is a closed subset, $r \notin J$.
		
		Since $v \in X$, $v \succeq x$ for $x \in R_r$. Again, as $J$ is a closed subset, $x \in J$.  
		
		Therefore, $J$ separates an edge in the path from $r$ to $x$ in $G_r$. 
		Hence, all closed subsets separating $uv$ must also separate another edge in $E_r$.
		This contradicts the assumption made in Remark~\ref{remark:addedEdge}.
	\end{proof}
\end{proof}

\begin{lemma}
	\label{lem:semi-replace2}
	$E_r$ can be replaced by the following set of edges:
	\[E'_r = \{rv: v \in R_r \}.\]
\end{lemma}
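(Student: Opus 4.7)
The plan is to prove the lemma by showing that the family of closed subsets $I$ of $\Pi$ that separate at least one edge of $E$ is unchanged when $E_r$ is replaced by $E'_r$. By Lemma~\ref{lem:separating}, this equality of separating families is exactly what we need for $(E \setminus E_r) \cup E'_r$ to define the same sublattice $\Lc_1$ as $E$, which is precisely the content of the lemma.

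The easier direction will be to show that if a closed subset $I$ separates some edge of $(E \setminus E_r) \cup E'_r$, then it separates an edge of $E$. Edges in $E \setminus E_r$ already lie in $E$, so only the case of an edge $rv \in E'_r$ needs argument. From $r \notin I$ and $v \in I$ and the definition of $R_r$, I pick any directed path from $r$ to $v$ in $G_r$, whose edges all lie in $E_r$; walking from $r$ (outside $I$) to $v$ (inside $I$), some consecutive pair $pq$ on the path satisfies $p \notin I$ and $q \in I$, giving a separation of $pq \in E_r \subseteq E$.

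For the converse direction, suppose $I$ separates some $uv \in E_r$, so $u \notin I$, $v \in I$, and $u, v \in R_r$. The goal will be to produce the separated edge $rv \in E'_r$, for which I only need to show $r \notin I$. Since $u$ is a tail of an edge of $E$ we have $u \in T_E$, and since $R_r \subseteq S$ (as observed in the paragraph immediately preceding Lemma~\ref{lem:semi-replace}), we have $u \in T_E \cap S$. By Lemma~\ref{lem:uniqueMaximal}, $r$ is the unique maximal element of $T_E \cap S$, so $u \preceq r$. Now if $r$ were in $I$, then because $I$ is a closed subset of $\Pi$, the relation $u \preceq r$ would force $u \in I$, contradicting $u \notin I$. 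Hence $r \notin I$, and since $v \in I$ forces $v \neq r$, the edge $rv \in E'_r$ is separated by $I$, as required.

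The main obstacle I expect is exactly this last implication: ruling out the a priori possibility that $r \in I$ while $I$ still separates some edge of $E_r$. In general $E_r$ may support long paths or even cycles in $H_E$, so no purely graph-theoretic argument internal to $G_r$ suffices. The resolution is to exploit the fact that $I$ is downward-closed in the original order $\Pi$ (not in $H_E$), together with Lemma~\ref{lem:uniqueMaximal}, which together imply that every tail of an edge in $E_r$ is a $\Pi$-predecessor of $r$ and therefore must lie in $I$ whenever $r$ does.
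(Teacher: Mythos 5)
Your proof is correct and follows essentially the same route as the paper's: both directions are argued by comparing the families of closed subsets that separate an edge, using the path from $r$ to $v$ inside $G_r$ for one direction and Lemma~\ref{lem:uniqueMaximal} together with downward-closedness of $I$ for the other. Your write-up is slightly more explicit than the paper's in the converse direction, spelling out why $r \notin I$ (and handling $u = r$ via $u \preceq r$), but the underlying argument is the same.
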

\begin{proof}
	\iffalse
	First consider the set of vertices 
	\[ X = \{u \in R_r: \exists uv \in E\}. \]
	Then for $u \in X$, $u$ is also in $T_E \intersect S$.
	By Lemma~\ref{lem:uniqueMaximal}, $u \prec r$. 
	Recall that $H_E$ is the graph obtained by adding $E$ to the Hasse diagram of $P'$.
	Since $u \prec r$ in $P$ and $u \in R_r$, $u$ and $r$ are in the same strongly connected component of $H_E$. 
	Therefore, we can remove all edges among vertices in $X$ and connect $r$ to each $u \in X$.
	After such modification, the strongly connected component containing $r$ is preserved, and no edges are added outside the component. 
	
	Now let $u$ be a rotation in $R_r \setminus X$. Then after the above replacements, $u$ must be the end of a length-two path from $r$. Let the path be composed of $rv$ and $vu$ in $E$. 
	Since $vu \in E$, $v \in T_E \intersect S$. By Lemma~\ref{lem:uniqueMaximal}, $r \succ v$. 
	Thus, any closed subset separating $vu$ must also separate $ru$. 
	Moreover, any closed subset separating $ru$ must either separate $rv$ or $vu$.
	Therefore, $vu$ can be replaced with $ru$.
	\fi
	
	We will show that the set of closed subsets separating an edge in $E_r$ and 
	the set of closed subset separating an edge in $E'_r$ are identical. 
	
	Consider a closed subset $I$ separating an edge in $rv \in E'_r$. 
	Since $v \in R_r$, $I$ must separate an edge in $E$ in a path from $r$ to $v$.
	By definition, that edge is in $E_r$.
	
	Now let $I$ be a closed subset separating an edge in $uv \in E_r$.
	Since $uv \in E$, $u \in T_E \intersect S$. By Lemma~\ref{lem:uniqueMaximal}, $r \succ u$. 
	Thus, $I$ must also separate $rv \in E'_r$. 
\end{proof}

\begin{proof} [Proof of Theorem~\ref{thm:semi}]
	To begin, let $S_1 = \Pi$ and let $r_1$ be the unique maximal rotation according to Lemma~\ref{lem:uniqueMaximal}.
	Then we can replace edges according to Lemma~\ref{lem:semi-replace} and Lemma~\ref{lem:semi-replace2}.
	After replacing, $r_1$ is the only tail vertex in $G_{r_1}$. 
	By Lemma~\ref{lem:flower-separating}, there exists a set $X$ such that $S_1 \setminus X$ does not contain any vertex in $R_{r_1}$ and contains all other tail vertices in $T_E$ except $r_1$. 
	Moreover, $S_1 \setminus X$ is a splitting set.
	Hence, we can set $S_2 = S_1 \setminus X$ and repeat. 
	
	Let $r_1, \ldots , r_k$ be the rotations found in the above process. Since $r_i$ is the unique maximal rotation in $T_E \intersect S_i$ for all $1 \leq i \leq k$ and $S_1 \supset S_2 \supset \ldots \supset S_k$, we have $r_1 \succ r_2 \succ \ldots \succ r_k$. 
	By Lemma~\ref{lem:semi-replace2}, for each $1 \leq i \leq k$, $E_{r_i}$ consists of edges $r_iv$ for $v \in R_{r_i}$.
	Therefore, there is no path of length two composed of edges in $E$ and condition 2 is satisfied. Moreover, $r_1, \ldots , r_k$ are exactly the tail vertices in $T_E$, which gives condition 1.
	
	Let $r$ be a rotation in $T_E$ and consider $u,v \in F_r$. Moreover, assume that $u \prec v$. A closed subset $I$ separating $rv$ contains $v$ but not $r$. Since $I$ is a closed subset and $u \prec v$, $I$ contains $u$. Therefore, $I$ also separates $ru$, contradicting the assumption in Remark~\ref{remark:addedEdge}. The same argument applies when $v \prec u$. Therefore, $u$ and $v$ are incomparable as stated in condition 3. 
	
	Finally, let $r_i, r_j \in T_E$ where $r_i \prec r_j$. By the construction given above, 
	$S_j \supset S_{j-1} \supset \ldots \supset S_i$, $R_{r_j} \subseteq S_j \setminus S_{j-1}$ and $R_{r_i} \subseteq S_i$.
	Therefore, $S_i$ contains all rotations in $R_{r_i}$ but none of the rotations in $R_{r_j}$, giving condition 4. 
\end{proof}

\begin{proposition}
\label{prop.gen}
	There exists a sequence of rotations $r_1 \prec \ldots \prec r_{k}$ 
	and a set $F_{r_i}$ for each $1 \leq i \leq k$ such that 
	a closed subset generates a matching in $\Lc_1$ if and only if
	whenever it contains a rotation in $F_{r_i}$, it must also contain $r_i$.
\end{proposition}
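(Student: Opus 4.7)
The plan is to derive this proposition essentially as a restatement of Theorem~\ref{thm:semi} combined with Lemma~\ref{lem:separating}. Since the heavy lifting has already been done in the course of establishing the bouquet structure, the proof should be quite short, amounting to unpacking what "does not separate an edge of $E$" means in the bouquet representation.

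First, I would invoke Theorem~\ref{thm:semi} to obtain a set $E$ of edges defining $\Lc_1$ that forms a bouquet. Take the sequence $r_1 \prec \ldots \prec r_k$ to be the tails $T_E$ arranged in order along the chain given by condition~1 of the theorem, and for each $r_i$ take $F_{r_i}$ to be exactly the set of heads defined in condition~3. Note that by the flower structure, $E$ decomposes as the disjoint union $\bigcup_{i=1}^{k} \{ r_i v : v \in F_{r_i}\}$.

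Next, I would apply Lemma~\ref{lem:separating}: the matching generated by a closed set $I$ lies in $\Lc_1$ if and only if $I$ separates no edge $uv \in E$. By definition of "separates," this means that for no $uv \in E$ do we have $v \in I$ and $u \notin I$. Substituting the bouquet decomposition, this becomes: for every $i$ and every $v \in F_{r_i}$, it is not the case that $v \in I$ and $r_i \notin I$; equivalently, $v \in I \Rightarrow r_i \in I$. Quantifying over $v \in F_{r_i}$ gives exactly the statement that whenever $I$ contains some rotation in $F_{r_i}$, it must also contain $r_i$, which is the desired conclusion.

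The only subtlety is making sure the logical reformulation is symmetric in both directions (if-and-only-if), but that follows straight from the iff in Lemma~\ref{lem:separating}. There is no real obstacle: properties 2 and 4 of Theorem~\ref{thm:semi} are not needed for this proposition (they are structural refinements used elsewhere), and the sequence $r_1 \prec \ldots \prec r_k$ is provided directly by property~1. Thus the proof reduces to a few lines of logical rewriting once Theorem~\ref{thm:semi} is in hand.
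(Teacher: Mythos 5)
Your proposal is correct and follows exactly the route the paper intends: the paper states Proposition~\ref{prop.gen} without a separate proof precisely because it is the direct unpacking of Lemma~\ref{lem:separating} applied to the bouquet $E$ from Theorem~\ref{thm:semi}, with the chain of tails giving the sequence $r_1 \prec \ldots \prec r_k$ and the flowers giving the sets $F_{r_i}$. Your observation that conditions~2 and~4 of Theorem~\ref{thm:semi} are not needed here is accurate, and the logical rewriting of ``$I$ separates no edge of $E$'' into ``$v \in I \Rightarrow r_i \in I$ for all $v \in F_{r_i}$'' is exactly right.
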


	\section{Algorithm for Finding a Bouquet}
\label{sec.alg}
In this section, we give an algorithm for finding a bouquet. Let $\Lc$
be a distributive lattice that can be partitioned into a sublattice $\Lc_1$ and a 
semi-sublattice $\Lc_2$.
Then given a poset $\Pi$ of $\Lc$ and a membership oracle, which determines if a matching
of $\Lc$ is in $\Lc_1$ or not, the algorithm returns a bouquet defining $\Lc_1$.

By Theorem~\ref{thm:semi}, the set of tails $T_E$ forms a chain $C$ in $\Pi$.
The idea of our algorihm, given in Figure~\ref{alg:flowerSet}, is to find 
the flowers according to their order in $C$.
Specifically, a splitting set $S$ is maintained such that at any point, all flowers outside of $S$ are found.
At the beginning, $S$ is set to $\Pi$ and becomes smaller as the algorithm proceeds. 
Step 2 checks if $M_z$ is a matching in $\Lc_1$ or not. 
If $M_z \not\in \Lc_1$, the closed subset $\Pi \setminus \{t\}$ separates an edge in 
$E$ according to Lemma~\ref{lem:separating}. Hence, the first tail on $C$ must be $t$.
Otherwise, the algorithm jumps to Step 3 to find the first tail. 
Each time a tail $r$ is found, Step 5 immediately finds the flower $L_r$ corresponding to $r$. 
The splitting set $S$ is then updated so that $S$ no longer contains $L_r$ 
but still contains the flowers that have not been found yet. 
Next, our algorithm continues to look for the next tail inside the updated $S$.
If no tail is found, it terminates. 

\begin{figure}
	\begin{wbox}
		\textsc{FindBouquet}$(\Pi)$: \\
		\textbf{Input:} A poset $\Pi$. \\
		\textbf{Output:} A set $E$ of edges defining $\Lc_1$. 
		\begin{enumerate}
			\item Initialize: Let $S = \Pi, E = \emptyset$.
			\item If $M_z$ is in $\Lc_1$: go to Step 3. Else: $r = t$, go to Step 5.
			\item $r$ = \textsc{FindNextTail}$(\Pi,S)$. 
			\item If $r$ is not \textsc{Null}: Go to Step 5. Else: Go to Step 7. 
			\item $F_r$ = \textsc{FindFlower}$(\Pi,S,r)$.
			\item Update:
			\begin{enumerate}
				\item For each $u \in F_r$: $E \leftarrow E \union \{ru\}$.
				\item $ S \leftarrow S \setminus  \Union_{u \in F_r \union \{ r \}} J'_u $.
				\item Go to Step 3.
			\end{enumerate}
			\item Return $E$. 
		\end{enumerate}
	\end{wbox}
	\caption{Algorithm for finding a bouquet.}
	\label{alg:flowerSet} 
\end{figure}

First we prove a simple observation.

\begin{lemma}
	\label{lem:headTailCondition}
	Let $v$ be a rotation in $\Pi$. Let $S \subseteq \Pi$ such that both $S$ and $S \union \{v\}$ are closed subsets.
	If $S$ generates a matching in $\Lc_1$ and $S \union \{v\} $ generates a matching in $\Lc_2$, 
		$v$ is the head of an edge in $E$.
	If $S$ generates a matching in $\Lc_2$ and $S \union \{v\} $ generates a matching in $\Lc_1$, 
		$v$ is the tail of an edge in $E$.

\end{lemma}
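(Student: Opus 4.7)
The plan is to prove both implications by a direct application of Lemma~\ref{lem:separating}, which characterizes membership in $\Lc_1$ exactly by non-separation of edges in $E$. Throughout, I assume $v \notin S$ (otherwise $S = S \union \{v\}$, and both hypotheses are vacuous since $S$ cannot simultaneously generate matchings in $\Lc_1$ and $\Lc_2$, as these partition $\Lc$).

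First I would handle the case where $S$ generates a matching in $\Lc_1$ and $S \union \{v\}$ generates one in $\Lc_2$. By Lemma~\ref{lem:separating}, $S$ separates no edge of $E$, while $S \union \{v\}$ separates at least one edge, say $u'w \in E$, meaning $w \in S \union \{v\}$ and $u' \notin S \union \{v\}$. Since $u' \notin S$ and $w$ would be in $S$ unless $w = v$, the only way $S$ can fail to separate $u'w$ is if $w \notin S$, forcing $w = v$. Thus $u'v \in E$ and $v$ is the head of an edge in $E$.

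For the second case, $S$ generates a matching in $\Lc_2$, so $S$ separates some edge $u'w \in E$, i.e.\ $w \in S$ and $u' \notin S$. Since $S \union \{v\}$ generates a matching in $\Lc_1$, it separates no edge of $E$, and in particular does not separate $u'w$. Since $w$ remains in $S \union \{v\}$, we must have $u' \in S \union \{v\}$, forcing $u' = v$ (because $u' \notin S$). Hence $vw \in E$ and $v$ is the tail of an edge in $E$.

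There is no real obstacle here: the argument is a one-step unpacking of the separation definition combined with Lemma~\ref{lem:separating}. The only subtlety worth mentioning is noting that the assumption that both $S$ and $S \union \{v\}$ are closed subsets guarantees we are legitimately comparing two matchings in $\Lc$, so that the membership oracle can be invoked meaningfully, and that we only need to exhibit one separated edge to identify $v$ as a head or tail in the corresponding case.
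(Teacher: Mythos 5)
Your proof is correct and takes essentially the same route as the paper's: both invoke Lemma~\ref{lem:separating} to translate membership in $\Lc_1$ versus $\Lc_2$ into non-separation versus separation of an edge in $E$, and then observe that since $S$ and $S \cup \{v\}$ differ only in $v$, the offending edge must have $v$ as its head (resp.\ tail). The paper compresses this last step into a single sentence, whereas you unpack it explicitly, but the argument is the same.
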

\begin{proof}
	Suppose that $S$ generates a matching in $\Lc_1$ and $S \union \{v\} $ generates a matching in $\Lc_2$.
	By Lemma~\ref{lem:separating}, $S$ does not separate any edge in $E$, and 
	$S \union \{v\}$ separates an edge $e \in E$. 
	This can only happen if $u$ is the head of $e$.
	
	A similar argument can be given for the second case.
\end{proof}

\begin{figure}
	\begin{wbox}
		\textsc{FindNextTail}$(\Pi,S)$: \\
		\textbf{Input:} A poset $\Pi$, a splitting set $S$. \\
		\textbf{Output:} The maximal tail vertex in $S$, or \textsc{Null} if there is no tail vertex in $S$.
		\begin{enumerate}
			\item Compute the set $V$ of rotations $v$ in $S$ such that:  
			\begin{itemize}
				\item $\Pi \setminus I'_v$ generates a matching in $\Lc_1$.
				\item $\Pi \setminus J'_v$ generates a matching in $\Lc_2$.
			\end{itemize}
			\item If $V \not = \emptyset$ and  there is a unique maximal element $v$ in $V$: Return $v$. \\
			Else: Return \textsc{Null}.
		\end{enumerate}
	\end{wbox}
	\caption{Subroutine for finding the next tail.}
	\label{alg:findTail} 
\end{figure} 

\begin{lemma}
	\label{lem:correctnessFindNextTail}
	Given a splitting set $S$,
	\textsc{FindNextTail}$(\Pi,S)$ (Figure~\ref{alg:findTail}) returns the maximal tail vertex in $S$, 
	or \textsc{Null} if there is no tail vertex in $S$.
\end{lemma}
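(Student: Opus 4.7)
The plan is to establish two facts: first, every element of $V$ is necessarily a tail of some edge in $E$, so that $V \subseteq T_E \cap S$; and second, when $T_E \cap S$ is nonempty, its unique maximum element is the unique maximal element of $V$.

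For the first fact, observe that $\Pi \setminus I'_v = (\Pi \setminus J'_v) \cup \{v\}$, so the two conditions defining $V$ express that adjoining $v$ to the closed subset $\Pi \setminus J'_v$ transitions its matching from $\Lc_2$ to $\Lc_1$. Applying the second case of Lemma \ref{lem:headTailCondition}, $v$ must be the tail of an edge in $E$; combined with $v \in S$ this gives $v \in T_E \cap S$. This immediately handles the case $T_E \cap S = \emptyset$: $V$ is empty and the algorithm correctly returns \textsc{Null}.

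For the second fact, let $r$ be the maximum tail in $T_E \cap S$ (unique by bouquet property~1, since $T_E$ is a chain). I first verify $r \in V$. The condition $\Pi \setminus J'_r \in \Lc_2$ follows from Lemma \ref{lem:prime}: for any $v \in F_r$, the edge $rv \in E$ satisfies $v \not\succeq r$ by minimality of $E$ (Remark \ref{remark:addedEdge}), so $\Pi \setminus J'_r$ separates $rv$ and lies in $\Lc_2$ by Lemma \ref{lem:separating}. For the condition $\Pi \setminus I'_r \in \Lc_1$, I must show no edge $r_j v' \in E$ is separated by $\Pi \setminus I'_r$. If $r_j \not\succ r$, then $r_j \in \Pi \setminus I'_r$ and separation is impossible. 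In the remaining case $r_j \succ r$, maximality of $r$ in $T_E \cap S$ forces $r_j \notin S$, then the splitting property of $S$ forces $v' \notin S$; combining this with bouquet property~4 of Theorem \ref{thm:semi} applied to the pair $r \prec r_j$ (which provides a splitting set demonstrating $v' \not\preceq r$) would let me conclude $v' \succ r$, so that $v'$ lies outside $\Pi \setminus I'_r$ and no separation occurs.

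For uniqueness of $r$ as the maximum of $V$: every $v \in V$ lies in $T_E \cap S$ by the first fact, and since $T_E$ is a chain and $r$ is the maximum tail in $S$, we have $v \preceq r$. Hence $r$ is the unique maximal element of $V$ and is returned by the algorithm. The main obstacle is the last step of the previous paragraph: bridging the gap between the readily available $v' \not\preceq r$ and the stronger conclusion $v' \succ r$, which appears to require a careful combination of the splitting-set hypothesis on $S$, the chain structure of $T_E$, and the join semi-sublattice property of $\Lc_2$, and may in places lean on already-established structural facts about the bouquet rather than purely on the poset geometry.
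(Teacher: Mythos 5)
Your proof tracks the paper's own argument almost line for line: both use Lemma~\ref{lem:headTailCondition} to conclude $V \subseteq T_E \cap S$ (hence \textsc{Null} is returned correctly when $T_E \cap S = \emptyset$, and $r$ is the unique maximal element of $V$ otherwise), both apply Lemma~\ref{lem:prime} together with Lemma~\ref{lem:separating} to get $\Pi \setminus J'_r \in \Lc_2$, and both reduce the remaining condition $\Pi \setminus I'_r \in \Lc_1$ to showing that $\Pi \setminus I'_r$ separates no edge of $E$, splitting into the case of tails inside $S$ (which both resolve via the chain property and maximality of $r$) and tails outside $S$.

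Where you diverge is in honesty about the second case. The paper disposes of edges with tails in $\Pi \setminus S$ with the single sentence ``Since $S$ is a splitting set, $\Pi \setminus I'_r$ does not separate any edge whose tails are in $\Pi \setminus S$,'' which requires (but does not supply) an argument that for an edge $r_j v'$ with $r_j \succ r$, the head $v'$ satisfies $v' \succ r$ and not merely $v' \notin S$. You correctly observe that the tools directly at hand --- the splitting property of $S$, bouquet condition 4, and Remark~\ref{remark:addedEdge} --- only get you to $v' \not\preceq r$; if $v'$ were incomparable to $r$, then $\Pi \setminus I'_r$ would contain $v'$ but not $r_j$, i.e. would separate $r_j v'$ and land in $\Lc_2$, so $r \notin V$. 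The gap you flag is real and is the same gap the paper's terse sentence glosses over. Closing it requires an additional invariant on $S$ (e.g. that $\Pi \setminus S = \bigcup_{u} J'_u$ over elements $u$ in the already-processed flowers, with all those $u$ forced to lie above $r$) rather than the bare hypothesis that $S$ is a splitting set. So your proposal is not a complete proof, but it is a faithful reconstruction of the paper's approach with the missing step identified rather than asserted; completing it requires strengthening the hypothesis or proving the stronger structural fact about $S$ that the algorithm maintains.
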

\begin{proof}
	Let $r$ be the maximal tail vertex in $S$.

	First we show that $r \in V$.
	By Theorem~\ref{thm:semi}, the set of tails of edges in $E$ forms a chain in $\Pi$.
	Therefore $\Pi \setminus I'_r$ contains all tails in $S$. 
	Hence, $\Pi \setminus I'_r$ does not separate any edge whose tails are in $S$. 
	Since $S$ is a splitting set, $\Pi \setminus I'_r$ does not separate any edge whose tails are in $\Pi \setminus S$.   
	Therefore, by Lemma~\ref{lem:separating}, $\Pi \setminus I'_r$ generates a matching in $\Lc_1$.
	By Lemma~\ref{lem:prime}, $\Pi \setminus J'_r$ must separate an edge in $E$, 
	and hence generates a matching in $\Lc_2$ according to Lemma~\ref{lem:separating}. 
	
	By Lemma~\ref{lem:headTailCondition}, any rotation in $V$ must be the tail of an edge in $E$.
	Hence, they are all predecessors of $r$ according to Theorem~\ref{thm:semi}.
\end{proof}

\begin{figure}
	\begin{wbox}
		\textsc{FindFlower}$(\Pi,S,r)$: \\
		\textbf{Input:} A poset $\Pi$, a tail vertex $r$ and a splitting set $S$ containing $r$. \\
		\textbf{Output:} The set $F_r = \{ v \in \Pi: rv \in E \}$.
		\begin{enumerate}
			\item Compute $X = \{ v \in I_r: J_v \text{ generates a matching in } \Lc_1 \}$.
			\item Let $Y = \Union_{v \in X} J_v$.
			\item If $Y = \emptyset$ and $M_0 \in \Lc_2$: Return $\{s\}$. 
			\item Compute the set $V$ of rotations $v$ in $S$ such that:
			\begin{itemize}
				\item $Y \union I_v$ generates a matching in $\Lc_1$.
				\item $Y \union J_v$ generates a matching in $\Lc_2$.
			\end{itemize} 
			\item Return $V$.
		\end{enumerate}
	\end{wbox}
	\caption{Subroutine for finding a flower.}
	\label{alg:findflower} 
\end{figure}

\begin{lemma}
	\label{lem:correctnessFindFlower}
	Given a tail vertex $r$ and a splitting set $S$ containing $r$, \textsc{FindFlower}$(\Pi,S,r)$ (Figure~\ref{alg:findflower})
	correctly returns $F_r$.
\end{lemma}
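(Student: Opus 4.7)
The plan is to verify that the returned set $V$ equals $F_r$, after first handling the corner case in Step~3. I start by collecting basic properties of $Y$. Each $J_v$ (for $v \in X$) is closed in $\Pi$, so $Y = \bigcup_{v \in X} J_v$ is closed; since $X \subseteq I_r$ we have $Y \subseteq I_r$ and thus $r \notin Y$. Because each $M_{J_v} \in \Lc_1$ and the union of closed sets corresponds to the join of matchings in a sublattice, $M_Y$ lies in $\Lc_1$, so by Lemma~\ref{lem:separating} $Y$ separates no edge of $E$. For Step~3, the hypothesis $Y = \emptyset$ in particular forces $s \notin X$, i.e., $M_0 \in \Lc_2$; in this regime I argue $F_r = \{s\}$ because every $v \in I_r$ must have $M_{J_v} \in \Lc_2$ (else $Y \neq \emptyset$), which combined with Remark~\ref{remark:addedEdge}'s minimality of $E$ and the chain structure of $T_E$ pins down the unique edge from $r$ as $rs$.

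For the main case I verify both inclusions. Take $v \in F_r$, so $rv \in E$. By Lemma~\ref{lem:prime} $r \notin J_v$; combined with $r \notin Y$ this gives $r \notin Y \cup J_v$, while $v \in J_v$, so $Y \cup J_v$ separates $rv$ and $M_{Y \cup J_v} \in \Lc_2$ by Lemma~\ref{lem:separating}. For $Y \cup I_v$: since $J_v$ separates $rv$, $M_{J_v} \in \Lc_2$, hence $v \notin X$, so $v \notin Y$; combined with $v \notin I_v$, this gives $v \notin Y \cup I_v$ and so $rv$ is not separated. Any other edge $u'v' \in E$ potentially separated by $Y \cup I_v$ must lie in some flower $F_{r_i}$ with $r_i \prec r$, but then condition~4 of Theorem~\ref{thm:semi} provides a splitting set containing $F_{r_i} \cup \{r_i\}$, whose relevant downward closures are already captured by $Y$ through $X$, contradicting separation. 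Thus $M_{Y \cup I_v} \in \Lc_1$ and $v \in V$. Conversely, if $v \in V$, then by Lemma~\ref{lem:headTailCondition} $v$ is the head of some edge $uv \in E$; the tail $u$ lies in the chain $T_E$, and combining the input property $r \in S$, the chain structure of $T_E$, and the fact that $Y$ already resolves every flower strictly below $r$, the edge must be $rv$, i.e.\ $v \in F_r$.

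The main obstacle is precisely the ``other edges'' analysis in both directions: ruling out that $Y \cup I_v$ accidentally separates an edge belonging to a flower other than the one at $r$, and showing that the edge $uv$ extracted from Lemma~\ref{lem:headTailCondition} has its tail equal to $r$ rather than some other element of $T_E$. Both rest on the structural fact that $Y$ is large enough to encode all flower activity strictly below $r$ (by construction from $X \subseteq I_r$) while remaining strictly below $r$ itself, so the only available ``toggle'' between $\Lc_1$ and $\Lc_2$ when extending by $v$ comes from the flower at $r$; making this precise requires carefully combining condition~4 of Theorem~\ref{thm:semi} with the minimality of $E$ from Remark~\ref{remark:addedEdge}.
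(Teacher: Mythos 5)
Your proposal follows the same overall structure as the paper's proof (establish properties of $Y$, dispatch the Step~3 corner case, then prove $V = F_r$ by two inclusions), but there is a genuine gap at exactly the point you yourself flag as ``the main obstacle.'' The key structural fact you need --- and which the paper proves as its Claim~1 --- is that $Y$ contains \emph{every} tail $r_i \in T_E$ with $r_i \prec r$ on the chain $C$. You invoke this in both directions: in the forward direction you write that edges in flowers $F_{r_i}$ with $r_i \prec r$ have their ``relevant downward closures already captured by $Y$ through $X$,'' and in the converse direction you say ``$Y$ already resolves every flower strictly below $r$.'' Neither sentence is a proof; it is precisely the assertion that each such $r_i$ lies in $Y$, and that assertion is not obvious from $Y = \bigcup_{v \in X} J_v$ alone, since $X$ is defined by a membership-oracle test, not by any reference to the chain.

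The paper closes this gap by an argument you never carry out: given the direct predecessor $r'$ of $r$ in $C$, take the splitting set $I$ from condition~4 of Theorem~\ref{thm:semi} (with $R_{r'} \subseteq I$ and $R_r \cap I = \emptyset$), let $v$ be the maximal element of $C \cap I$, and show that $J_v$ separates no edge of $E$ --- using that all tails inside $I$ lie on the chain and hence below $v$, and that $I$ is a splitting set so edges leaving $I$ are not touched --- whence $v \in X$ and $r' \in J_v \subseteq Y$. Without this step your ``other edges'' analysis is circular: you are assuming the conclusion that $Y$ dominates all lower flowers in order to show $Y \cup I_v$ separates nothing outside $F_r$, and likewise assuming it to show the edge produced by Lemma~\ref{lem:headTailCondition} has tail $r$. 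Your Step~3 argument has the same defect: you observe $Y = \emptyset \Rightarrow M_0 \in \Lc_2$, which is fine, but you then assert $F_r = \{s\}$ is ``pinned down'' without the intermediate step (again Claim~1) that $Y = \emptyset$ forces $r$ to be the first tail on $C$, so that $\{s\}$ must separate some $E$-edge whose tail is $r$, and then incomparability of heads (condition~3) forces $F_r = \{s\}$. In short: the plan and the target inequalities are right, but the proof is missing the one nontrivial lemma on which everything else rests.
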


\begin{proof}
	First we give two crucial properties of the set $Y$.
	By Theorem~\ref{thm:semi}, the set of tails of edges in $E$ forms a chain $C$ in $\Pi$.
	
	\paragraph{Claim 1.} $Y$ contains all predecessors of $r$ in $C$.
	\begin{proof}
		Assume that there is at least one predecessor of $r$ in $C$, and denote by $r'$ the direct predecessor. 
		It suffices to show that $r' \in Y$. 
		By Theorem~\ref{thm:semi}, there exists a splitting set $I$ such that $R_{r'} \subseteq I$ and $R_r \intersect I = \emptyset$.
		Let $v$ be the maximal element in $C \intersect I$.
		Then $v$ is a successor of all tail vertices in $I$.
		It follows that $J_v$ does not separate any edges in $E$ inside $I$.
		Therefore, $v \in X$.
		Since $J_v \subseteq Y$, $Y$ contains all predecessors of $r$ in $C$.	
	\end{proof} 
	
	\paragraph{Claim 2.}  $Y$ does not contain any rotation in $F_r$.
	\begin{proof}
		Since $Y$ is the union of closed subset generating matching in $\Lc_1$, $Y$ also generates a matching in $\Lc_1$.
		By Lemma~\ref{lem:separating}, $Y$ does not separate any edge in $E$. 
		Since $r \not \in Y$, $Y$ must not contain any rotation in $F_r$.
	\end{proof}
	
	By Claim 1, if $Y = \emptyset$, $r$ is the last tail found in $C$. 
	Hence, if $M_0 \in \Lc_2$, $s$ must be in $F_r$. 
	By Theorem~\ref{thm:semi}, the heads in $F_r$ are incomparable. 
	Therefore, $s$ is the only rotation in $C$.
	\textsc{FindFlower} correctly returns $\{s\}$ in Step 3.
	Suppose such a situation does not happen, we will show that the returned set is $F_r$.

	\paragraph{Claim 3.} $V = F_r$.
	\begin{proof}
		Let $v$ be a rotation in $V$. 
		By Lemma~\ref{lem:headTailCondition}, $v$ is a head of some edge $e$ in $E$.
		Since $Y$ contains all predecessors of $r$ in $C$, the tail of $e$ must be $r$. 
		Hence, $v \in  F_r$.
		
		Let $v$ be a rotation in $F_r$. 
		Since $Y$ contains all predecessors of $r$ in $C$, $Y \union I_v$ can not separate any edge 
		whose tails are predecessors of $r$. 
		Moreover, by Theorem~\ref{thm:semi}, the heads in $F_r$ are incomparable. 
		Therefore, $I_v$ does not contain any rotation in $F_r$. 
		Since $Y$ does not contain any rotation in $F_r$ by the above claim, 
		$Y \union I_v$ does not separate any edge in $E$.
		It follows that $Y \union I_v$ generates a matching in $\Lc_1$.
		Finally, $Y \union J_v$ separates $rv$ clearly, and hence generates a matching in $\Lc_2$.
		Therefore, $v \in V$ as desired. 
	\end{proof}
\end{proof}

\begin{theorem}
	\label{thm:algFindFlower}
	\textsc{FindBouquet}$(\Pi)$, given in Figure~\ref{alg:flowerSet}, 
	returns a set of edges defining $\Lc_1$.
\end{theorem}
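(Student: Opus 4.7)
The plan is to proceed by induction on the iterations of the main loop in \textsc{FindBouquet}, with the loop invariant that at the beginning of every iteration there is a bouquet $E^\star$ defining $\Lc_1$ (guaranteed to exist by Theorem~\ref{thm:semi}) such that (i) $S$ is a splitting set with respect to $E^\star$, (ii) $E$ consists exactly of those edges of $E^\star$ whose tail lies in $\Pi \setminus S$, and (iii) every tail of $E^\star$ not yet discovered, together with its entire flower, lies inside $S$. The base case $S=\Pi$, $E=\emptyset$ is immediate.

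The initial Step~2 correctly handles whether the maximum tail is $t$: by Lemma~\ref{lem:separating}, $M_z \in \Lc_1$ iff the closed set $\Pi \setminus \{t\}$ separates no edge of $E^\star$, i.e.\ iff no edge has tail $t$; in that case we proceed to \textsc{FindNextTail}, and otherwise we correctly begin the flower-finding subroutine with $r = t$. At every subsequent call, Lemma~\ref{lem:correctnessFindNextTail} ensures that \textsc{FindNextTail}$(\Pi,S)$ returns either the unique maximal tail $r$ lying in $S$ or \textsc{Null} when no such tail exists, and Lemma~\ref{lem:correctnessFindFlower} ensures that \textsc{FindFlower}$(\Pi,S,r)$ then returns exactly the set $F_r$. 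Thus appending $\{rv : v \in F_r\}$ to $E$ faithfully adds the flower $L_r$ of $E^\star$.

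The heart of the induction is verifying that the updated set $S' = S \setminus \Union_{u \in F_r \union \{r\}} J'_u$ preserves the invariant. First, $S'$ is still a closed subset of $\Pi$ since we subtract an upward-closed set. Second, $S'$ no longer contains $r$ or any rotation of $F_r$, so the flower we just recorded has been removed exactly once. Third, by condition~4 of Theorem~\ref{thm:semi} there exists, for every tail $r' \prec r$ in $T_{E^\star}$, a splitting set containing $R_{r'}$ and avoiding $R_r$, and an analysis mirroring Lemma~\ref{lem:flower-separating} shows that the specific shrinkage $\Union_{u \in F_r \union \{r\}} J'_u$ neither crosses nor separates any edge of $E^\star$ whose tail is a predecessor of $r$; hence $S'$ is still a splitting set, and every tail of $E^\star$ strictly below $r$ together with its flower still lies in $S'$. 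Termination is immediate since $S$ strictly shrinks at each iteration and $\Pi$ is finite, and when \textsc{FindNextTail} returns \textsc{Null}, Lemma~\ref{lem:correctnessFindNextTail} combined with the invariant forces $E = E^\star$.

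The main obstacle is the splitting-set preservation step: one must rule out that the aggressive removal of all successors of rotations in $F_r \union \{r\}$ inadvertently crosses or separates an edge of $E^\star$ attached to a smaller tail. The argument uses that any head $v'$ of such an edge is incomparable with the rotations of $F_r$ (by condition~3 of Theorem~\ref{thm:semi} applied within a single flower, and by condition~4 across flowers), hence $v' \notin J'_u$ for $u \in F_r \union \{r\}$, so $v'$ and its tail remain in $S'$ together. Once this is nailed down, the theorem follows from the invariant at termination.
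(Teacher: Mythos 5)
Your proof is correct and takes essentially the same route as the paper's: reduce correctness of \textsc{FindBouquet} to Lemmas~\ref{lem:correctnessFindNextTail} and~\ref{lem:correctnessFindFlower}, then verify that the Step~6(b) update preserves the splitting-set invariant via Lemma~\ref{lem:flower-separating}. The paper is just a bit crisper at the update step, observing directly that $\Union_{u \in F_r \union \{r\}} J'_u$ coincides with the set $X$ of Lemma~\ref{lem:flower-separating} (since $R_r = F_r \union \{r\}$ in a bouquet with no length-two paths), so that lemma applies verbatim instead of being partially re-derived through the incomparability discussion.
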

\begin{proof}
	From Lemmas~\ref{lem:correctnessFindNextTail} and \ref{lem:correctnessFindFlower}, it suffices to show that
	$S$ is udpated correctly in Step 6(b). To be precised, we need that
	\[ S \setminus  \Union_{u \in F_r \union \{ r \}} J'_u \]
	must still be a splitting set, and contains all flowers that have not been found. 
	This follows from Lemma~\ref{lem:flower-separating} by noticing that 
	\[ \Union_{u \in F_r \union \{ r \}} J'_u  = \{v \in \Pi: v \succeq u \text{ for some } u \in R_r\}.\]
\end{proof}

Clearly, a sublattice of $\Lc$ must also be a semi-sublattice. Therefore, \textsc{FindBouquet} can be used to find a canonical path described in Section~\ref{sec.sublattice}. The same algorithm can be used to check if $M_A \cap M_B = \emptyset$. Let $E$ be the edge set given by the \textsc{FindBouquet} algorithm and $H_E$ be the corresponding graph obtained by adding $E$ to the Hasse diagram of the original rotation poset $\Pi$ of $\Lc_A$. If $H_E$ has a single strongly connected component, the compression $\Pi'$ has a single meta-element and represents the empty lattice.

	\section{Finding an Optimal Fully Robust Stable Matching}
\label{sec.robust}
Consider the setting given in the Introduction, with $S$ being the domain of errors, one of
which is introduced in instance $A$.
We show how to use the algorithm in Section~\ref{sec.alg} to 
find the poset generating all fully robust matchings w.r.t. $S$. We 
then show how this poset can yield a fully robust matching that maximizes, or minimizes, 
a given weight function. 

\subsection{Studying semi-sublattices is necessary and sufficient}
\label{subsection.semisublatticeNecessarySufficient}

Let $A$ be a stable matching instance, and $B$ be an instance obtained by permuting
the preference list of one worker or one firm. 
Lemma \ref{lem.eg} gives an example
of a permutation so that $\Mc_{A \setminus B}$ is not a sublattice of $\Lc_A$, hence showing that
the case studied in Section \ref{sec.sublattice} does not suffice to solve the problem at hand.
On the other hand, for all such instances $B$, Lemma \ref{lem:MABsemi} shows that
$\Mc_{A \setminus B}$ forms a semi-sublattice of $\Lc_A$ and hence the case studied in Section
\ref{sec.semi} does suffice.

The next lemma pertains to the example given in Figure~\ref{ex:notSublattice},
in which the set of workers is  $\mathcal{B} = \{a,b,c,d\}$ and the set of firms is 
$\mathcal{G} = \{1,2,3,4\}$. Instance $B$ is obtained from instance $A$ by permuting firm 1's 
list.

\begin{lemma} 
\label{lem.eg}
	$\Mc_{A \setminus B}$ is not a sublattice of $\Lc_A$.
\end{lemma}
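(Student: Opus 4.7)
The goal is to exhibit a concrete instance $A$ on four boys and four girls together with a permutation of girl $1$'s list yielding $B$, then to point at two matchings $M_1, M_2 \in \Mc_{AB}$ whose meet or join (taken in $\Lc_A$) lands back in $\Mc_A \cap \Mc_B$, thereby witnessing failure of closure. Since by Proposition~\ref{prop.sublattice} the set $\Mc_A \cap \Mc_B$ is itself a sublattice of $\Lc_A$, any $M_1, M_2 \in \Mc_{AB}$ whose meet or join is stable in $B$ automatically falsifies the sublattice property for $\Mc_{AB}$.

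The plan is: first read off from Figure~\ref{ex:notSublattice} the preference lists defining instance $A$ and the permuted list for girl $1$ defining $B$. Then enumerate $\Mc_A$ by computing the boy-optimal matching $M_0$ via Gale--Shapley and successively applying exposed rotations until reaching $M_z$; this produces the full lattice $\Lc_A$ and its rotation poset $\Pi$. Next, for each $M \in \Mc_A$, check stability under $B$: the only way a matching in $\Mc_A$ can become unstable in $B$ is for a pair involving girl $1$ (the agent whose list changed) to become blocking, so the check reduces to a simple scan of girl $1$'s new list against her partner in $M$. This partitions $\Mc_A$ into $\Mc_A \cap \Mc_B$ and $\Mc_{AB}$.

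Now pick two matchings $M_1, M_2$ lying in $\Mc_{AB}$ whose join $M_1 \vee_A M_2$ (computed via the rule that each boy takes his less preferred of the two partners, per Section~\ref{subsection.latticeOfSM}) assigns girl $1$ a partner she ranks high enough on her $B$-list to defuse all blocking pairs; symmetrically one may instead use the meet. Verify directly that $M_1 \vee_A M_2 \in \Mc_A \cap \Mc_B$ while $M_1, M_2 \notin \Mc_B$. This single witness suffices to conclude that $\Mc_{AB}$ is not closed under join in $\Lc_A$ and hence is not a sublattice.

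The main obstacle is essentially computational rather than conceptual: one must lay out the preference lists so that $\Lc_A$ is large enough to contain the required configuration (two incomparable stable matchings whose join's girl-$1$ partner is strictly better for girl $1$ in the $B$-list than her partner in either $M_1$ or $M_2$), yet small enough that the enumeration of $\Mc_A$ and the stability check for $B$ can be carried out by hand. The figure referenced in the statement is presumably designed precisely to exhibit such a configuration, so the proof consists of a finite and mechanical verification against the data in Figure~\ref{ex:notSublattice}.
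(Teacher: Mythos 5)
Your overall strategy matches the paper's: the claim is established by pointing at concrete witnesses drawn from the data in Figure~\ref{ex:notSublattice}, and your observation that the stability check under $B$ reduces to scanning pairs involving girl~$1$ is correct. But the proposal never actually performs the step it correctly identifies as the crux. You say ``pick two matchings $M_1, M_2$ lying in $\Mc_{AB}$'' and then ``the figure referenced in the statement is presumably designed precisely to exhibit such a configuration,'' which is a description of a proof rather than a proof. The paper's argument is exactly the finite verification you gesture at: it names $M_1 = \{1a,2b,3d,4c\}$ and $M_2 = \{1b,2a,3c,4d\}$, verifies that $1c$ blocks $M_1$ and $1a$ blocks $M_2$ under $B$ (so $M_1, M_2 \in \Mc_{AB}$), and then checks that $M_1 \wedge_A M_2 = \{1a,2b,3c,4d\}$ has no blocking pair under $B$.

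There is also a directional error you should be alert to. You propose to look first at the \emph{join} $M_1 \vee_A M_2$ and say the meet works ``symmetrically,'' but for this example the two are not interchangeable. Because $B$ is obtained from $A$ by permuting a \emph{girl's} list, Lemma~\ref{lem:MABsemi} shows $\Mc_{AB}$ is a \emph{join} semi-sublattice of $\Lc_A$: the join of any two matchings in $\Mc_{AB}$ stays in $\Mc_{AB}$, so no counterexample can come from the join. (You can confirm this directly: $M_1 \vee_A M_2 = \{1b,2a,3d,4c\}$, and $1a$ blocks it under $B$.) Only the \emph{meet} can escape $\Mc_{AB}$ here, and that is what the paper uses. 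If the change had instead been to a boy's list, the roles would be swapped. So once you commit to a concrete instance, the operation to test is determined, not a free choice.
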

\begin{proof}
	$M_1 = \{1a,2b,3d,4c\}$ and $M_2 = \{1b,2a,3c,4d\}$ are stable matching with respect to instance $A$.
	Clearly, $M_1 \wedge_A M_2 = \{1a,2b,3c,4d\}$ is also a stable matching under $A$.
	
	In going from $A$ to $B$, the positions of workers $b$ and $c$ are swapped in firm 1's list. 
	Under $B$, $1c$ is a blocking pair for $M_1$ and $1a$ is a blocking pair for $M_2$.
	Hence, $M_1$ and $M_2$ are both in $\Mc_{A \setminus B}$.
However, $M_1 \wedge_A M_2$ is a stable matching under $B$, and therefore is it not in $\Mc_{A \setminus B}$.
Hence, $\Mc_{A \setminus B}$ is not closed under the $\wedge_A$ operation.
\end{proof}

\begin{figure}
	\begin{wbox}
	\begin{minipage}{.33\linewidth}
		\centering
		\begin{tabular}{l|llll}
			1  & b & a & c & d \\
			2  & a & b & c & d \\
			3  & d & c & a & b \\
			4  & c & d & a & b
		\end{tabular}
		
		\hspace{1cm}
		
		firms' preferences in $A$ 
		
		\hspace{1cm}
		
	\end{minipage}%
	\begin{minipage}{.34\linewidth}
	\centering
	\begin{tabular}{l|llll}
		\color{red}{1}  & \color{red}{c} & \color{red}{a} & \color{red}{b} & \color{red}{d} \\
		2  & a & b & c & d \\
		3  & d & c & a & b \\
		4  & c & d & a & b
	\end{tabular}
	
	\hspace{1cm}
	
	firms' preferences in $B$
	 
	\hspace{1cm}
\end{minipage}%
	\begin{minipage}{.33\linewidth}
		\centering
		\begin{tabular}{l|llll}
		a  & 1 & 2 & 3 & 4 \\
		b  & 2 & 1 & 3 & 4 \\
		c  & 3 & 1 & 4 & 2 \\
		d  & 4 & 3 & 1 & 2 
		\end{tabular}

		\hspace{1cm}

		workers' preferences in both instances
	\end{minipage} 
	\end{wbox}
	\caption{An example in which $\Mc_{A \setminus B}$ is not a sublattice of $\Lc_A$.}
	\label{ex:notSublattice} 
\end{figure}

\begin{lemma}
	\label{lem:MABsemi}
	For any instance $B$ obtained by permuting the preference list of one worker or one firm,
	$\Mc_{A \setminus B}$ forms a semi-sublattice of $\Lc_A$. 
\end{lemma}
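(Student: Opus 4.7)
The plan is to construct, for $M_1, M_2 \in \Mc_{AB}$, a blocking pair of the appropriate lattice combination in $B$ by transferring a blocking pair from one of $M_1, M_2$. By the boy-girl symmetry of the stable matching lattice it suffices to treat the case where boy $b$'s preference list is permuted: I will show that $\Mc_{AB}$ is then closed under $\wedge_A$, so it is a meet semi-sublattice of $\Lc_A$. The symmetric argument, using $\vee_A$ in place of $\wedge_A$ and the blocking pair of the other matching, handles a permuted girl's list and yields a join semi-sublattice.

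So assume boy $b$'s list is permuted. Let $M_1, M_2 \in \Mc_{AB}$ and set $M := M_1 \wedge_A M_2$. Since $\Mc_A$ is a lattice, $M \in \Mc_A$, and it remains to show $M \notin \Mc_B$. Because $A$ and $B$ agree except on $b$'s preference list, any blocking pair of an $A$-stable matching under $B$ must involve $b$ (otherwise it would already block that matching in $A$). Write $\alpha_i := M_i(b)$, and without loss of generality assume $b$ prefers $\alpha_2$ to $\alpha_1$ in $A$. Recall that $\wedge_A$ assigns each boy his $A$-more-preferred of the two partners and each girl her $A$-less-preferred of the two, so $M(b) = \alpha_2$ and, for every girl $g$, $g$ weakly prefers $M_2(g)$ to $M(g)$ in her (unchanged) preference list.

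Since $M_2 \notin \Mc_B$, there is a blocking pair $(b, g)$ for $M_2$ in $B$, meaning $g$ prefers $b$ to $M_2(g)$ and $b$ prefers $g$ to $\alpha_2$ in $B$. I will verify that the same pair $(b, g)$ blocks $M$ in $B$. From $b$'s side, $b$ prefers $g$ to $\alpha_2 = M(b)$ in $B$, which is inherited directly. From $g$'s side, $g$ prefers $b$ to $M_2(g)$ and weakly prefers $M_2(g)$ to $M(g)$, so $g$ prefers $b$ to $M(g)$. Hence $(b, g)$ blocks $M$ in $B$, so $M \notin \Mc_B$ and $M \in \Mc_{AB}$. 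No substantial obstacle arises; the argument turns on the monotonicity of $\wedge_A$, which forces a blocking pair of the matching $M_i$ with $M_i(b) = M(b)$ (namely $M_2$) to automatically block the combined matching as well. The key definitional fact that makes the transfer work cleanly is that in $\wedge_A$ each girl is weakly worse off than in either input, so her willingness to block with $b$ is only strengthened.
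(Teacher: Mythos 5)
Your proof is correct and takes essentially the same approach as the paper's: identify the input matching $M_i$ whose assignment to the permuted agent agrees with the lattice combination's, and transfer that $M_i$'s blocking pair (which must involve the permuted agent) to the combined matching using the fact that the combination makes the other side of that pair weakly worse off. The paper treats the permuted-girl/$\vee_A$ case directly and defers the boy case to symmetry, whereas you do the permuted-boy/$\wedge_A$ case directly; one tiny nit is that your WLOG should read ``$b$ weakly prefers $\alpha_2$ to $\alpha_1$'' so as to cover $\alpha_1 = \alpha_2$, but this is cosmetic.
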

\begin{proof}
	Assume that the preference list of a firm $f$ is permuted. 
	We will show that $\Mc_{A \setminus B}$ is a join semi-sublattice of $\Lc_A$.
	By switching the role of workers and firms, 
	permuting the list of a worker will result in $\Mc_{A \setminus B}$ 
	being a meet semi-sublattice of $\Lc_A$.
	
	Let $M_1$ and $M_2$ be two matchings in $\Mc_{A \setminus B}$. 
	Hence, neither of them are in $\Mc_{B}$. 
	In other words, each has a blocking pair under instance $B$.
	
	Let $w$ be the partner of $f$ in $M_1 \vee_A M_2$.
	Then $w$ must also be matched to $f$ in either $M_1$ or $M_2$ (or both).
	We may assume that $w$ is matched to $f$ in $M_1$. 
	
	Let $xy$ be a blocking pair of $M_1$ under $B$. 
	We will show that $xy$ must also be a blocking pair of $M_1 \vee_A M_2$ under $B$.
	To begin, the firm $y$ must be $f$ since other preference lists remain unchanged. 
	Since $xf$ is a blocking pair of $M_1$ under $B$, $x >_f^B w$.
	Similarly, $f >_x f'$ where $f'$ is the $M_1$-partner of $x$. 
	Let $f''$ be the partner of $x$ in $M_1 \vee_A M_2$.
	Then $f' \geq_x f''$. It follows that $f >_x f''$.
	Since $x >_f^B w$ and $f >_x f''$,
	$xf$ must be a blocking pair of $M_1 \vee_A M_2$ under $B$.
\end{proof}

%Recall that $\Mc_A \intersect \Mc_B$ forms a sublattice, say $\Lc'$, of $\Lc_A$.

\begin{proposition}
	\label{prop:MABcompute}
A set of edges defining the sublattice $\Lc'$, consisting of matchings in $\Mc_A \intersect \Mc_B$,
	can be computed efficiently.
\end{proposition}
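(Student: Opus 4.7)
The plan is to apply the algorithm \textsc{FindBouquet} from Section~\ref{sec.alg} with input $\Pi = \Pi_A$, the rotation poset of $\Lc_A$. First I would verify that the hypotheses of Setting II are satisfied. By Proposition~\ref{prop.sublattice}, $\Lc' = \Mc_A \intersect \Mc_B$ is a sublattice of $\Lc_A$, and by Lemma~\ref{lem:MABsemi}, $\Mc_{AB} = \Mc_A \setminus \Mc_B$ is either a join or a meet semi-sublattice of $\Lc_A$, depending on whether the permuted preference list belongs to a boy or a girl. In the latter case I would reverse the order of $\Lc_A$, as Section~\ref{sec.semi} explicitly permits, so as to reduce to the join semi-sublattice case. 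Thus $\Lc_A$ is partitioned into the sublattice $\Lc_1 = \Lc'$ and the join semi-sublattice $\Lc_2 = \Mc_{AB}$, which is exactly the situation covered by Theorem~\ref{thm:algFindFlower}.

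Next I would supply the two polynomial-time ingredients that \textsc{FindBouquet} requires. The poset $\Pi$ is computable in polynomial time by Lemma~\ref{lem:computePoset}. The membership oracle, which must decide for a matching $M \in \Lc_A$ whether $M \in \Lc_1$, is implemented as follows: given a closed subset $I$ of $\Pi$, eliminate its rotations from $M_0$ in a topological order to obtain $M_I \in \Mc_A$, then scan all pairs $(b,g) \notin M_I$ and check whether any is a blocking pair under the preference lists of $B$. If no such pair exists then $M_I \in \Mc_B$ and hence $M_I \in \Lc_1$; otherwise $M_I \in \Mc_{AB} = \Lc_2$. Each oracle query therefore runs in $O(n^2)$ time.

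Correctness then follows directly from Theorem~\ref{thm:algFindFlower}: \textsc{FindBouquet}$(\Pi)$ returns a set of edges defining $\Lc_1 = \Lc'$. I do not anticipate any substantial obstacle, since the correctness of the two subroutines is already established by Lemmas~\ref{lem:correctnessFindNextTail} and \ref{lem:correctnessFindFlower}; the only remaining task is routine running-time bookkeeping. Since $|\Pi| = O(n^2)$ and, by Theorem~\ref{thm:semi}, the bouquet contains at most one flower per distinct tail (hence at most $O(n^2)$ flowers in total), the outer loop of \textsc{FindBouquet} iterates polynomially many times, each call to \textsc{FindNextTail} or \textsc{FindFlower} performs polynomially many membership queries, and each query runs in $O(n^2)$ time. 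Consequently the overall computation is polynomial, which is what the proposition asserts.
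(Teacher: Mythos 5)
Your proposal matches the paper's proof: both invoke Lemma~\ref{lem:MABsemi} to place the problem in Setting~II, run \textsc{FindBouquet} on the rotation poset from Lemma~\ref{lem:computePoset}, cite Theorem~\ref{thm:algFindFlower} for correctness, and observe that the membership oracle and each step are polynomial because $|\Pi| = O(n^2)$. The only difference is that you spell out the oracle implementation (blocking-pair scan) and the order-reversal for the meet semi-sublattice case, both of which the paper leaves implicit.
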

\begin{proof}
	 We have that $\Lc'$ and $\Mc_{A \setminus B}$ partition $\Lc_A$,
	 with $\Mc_{A \setminus B}$ being a semi-sublattice of $\Lc_A$, by Lemma~\ref{lem:MABsemi}.
	 Therefore,	\textsc{FindBouquet}$(\Pi)$
	 finds a set of edges defining $\Lc'$ by Theorem~\ref{thm:algFindFlower}.
	
	By Lemma~\ref{lem:computePoset}, the input $\Pi$ to \textsc{FindBouquet} can be computed in polynomial time. 
	Clearly, a membership oracle checking if a matching is in $\Lc'$ or 
	not can also be implemented efficiently. 
	Since $\Pi$ has $O(n^2)$ vertices (Lemma~\ref{lem:computePoset}), any step of \textsc{FindBouquet} takes polynomial time.	
\end{proof}

\subsection{Proof of Theorem \ref{thm:main}}
\label{sec.fullyRobust} 

In this section, we will prove Theorem \ref{thm:main} as well as a slight extension; the latter uses ideas from \cite{MV.robust}. Let $B_1, \ldots, B_k$ be polynomially many instances in the domain $D \subset T$,
as defined in the Introduction.
Let $E_{i}$ be the set of edges defining $\Mc_A \intersect \Mc_{B_i}$ for all $1 \leq i \leq k$.
By Corollary~\ref{cor.sublatticeIntersection}, $\Lc' = \Mc_A \intersect \Mc_{B_1} \intersect \ldots \intersect \Mc_{B_k}$ is a sublattice of $\Lc_A$.

\begin{lemma}
	\label{lem:poset}
	$E = \Union_i E_{i}$ defines $\Lc'$.
\end{lemma}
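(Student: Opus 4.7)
The plan is to apply Lemma~\ref{lem:separating} in both directions, using the fact that ``defining'' a sublattice is characterized there purely in terms of which closed sets of $\Pi$ separate no edge of the edge set. Since $E$ is just a union, separation behaves compatibly with intersection of the sublattices involved.

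More concretely, I would take an arbitrary closed set $I$ of $\Pi$ (recall $\Pi = \Pi_A$ generates $\Lc_A$) and let $M = M(I)$ be the corresponding stable matching. The first step is the easy direction: if $M \in \Lc'$, then $M \in \Mc_A \cap \Mc_{B_i}$ for every $i$, so by Lemma~\ref{lem:separating} applied to the edge set $E_i$ (which defines $\Mc_A \cap \Mc_{B_i}$), $I$ separates no edge in $E_i$. Taking the union over $i$, $I$ separates no edge in $E$.

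For the converse, suppose $I$ separates no edge of $E = \bigcup_i E_i$. Then for each fixed $i$, $I$ separates no edge in $E_i \subseteq E$, so by Lemma~\ref{lem:separating} applied to $E_i$ we get $M \in \Mc_A \cap \Mc_{B_i}$. Intersecting over all $i$ yields $M \in \Lc'$. Combining both directions with Lemma~\ref{lem:separating} (this time applied to the edge set $E$ and the sublattice $\Lc'$, whose status as a sublattice is guaranteed by Corollary~\ref{cor.sublatticeIntersection}) shows that $E$ defines $\Lc'$.

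There is essentially no obstacle here: the only conceptual point worth verifying is that the notion ``$E$ defines $\Lc'$'' is indeed equivalent to the separation condition, which is exactly what Lemma~\ref{lem:separating} provides. One minor remark is that $E$ need not be a \emph{minimal} edge set in the sense of Remark~\ref{remark:addedEdge} --- redundant edges may appear across different $E_i$'s --- but defining a sublattice does not require minimality, so this does not affect the argument.
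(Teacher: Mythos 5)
Your proposal is correct and follows essentially the same route as the paper: both reduce the claim to showing that a closed set $I$ separates no edge of $E=\bigcup_i E_i$ iff $M(I)\in\Lc'$, and both establish this equivalence component-wise by applying Lemma~\ref{lem:separating} to each $E_i$ and $\Mc_A\cap\Mc_{B_i}$. Your closing remark that minimality of $E$ (in the sense of Remark~\ref{remark:addedEdge}) is not needed for ``defining'' a sublattice is a fair observation that the paper leaves implicit.
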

\begin{proof}
	By Lemma~\ref{lem:separating}, it suffices to show that for any closed subset $I$,
	$I$ does not separate an edge in $E$ iff $I$ generates a matching in $\Lc'$.
	
	$I$ does not separate an edge in $E$ iff
	$I$ does not separate any edge in $E_i$  for all $1 \leq i\leq k$ iff
	the matching generated by $I$ is in 
	$\Mc_A \intersect \Mc_{B_i}$ for all $1 \leq i \leq k$ by Lemma~\ref{lem:separating}. 
\end{proof}

By Lemma~\ref{lem:poset}, a compression $\Pi'$ generating $\Lc'$ can be constructed from $E$ as described in Section~\ref{sec.alternative}.
By Proposition~\ref{prop:MABcompute}, we can compute each $E_{i}$, and hence, $\Pi'$ efficiently. 
Clearly, $\Pi'$ can be used to check if a fully robust stable matching exists. To be precise, 
a fully robust stable matching exists iff there exists a proper closed subset of $\Pi'$.
This happens iff $s$ and $t$ belong to different meta-rotations in $\Pi'$, an easy to check
condition. Hence, we have Theorem~\ref{thm:main}. 

We can use $\Pi'$ to obtain a fully robust stable matching $M$ maximizing
$ \sum_{wf \in M} W_{wf}$
by applying the algorithm of \cite{MV.weight}.
Specifically, let $H(\Pi')$ be the Hasse diagram of $\Pi'$. 
Then each pair $wf$ for $w \in \mathcal{W}$ and $f \in \mathcal{F}$ 
can be associated with two vertices $u_{wf}$ and $v_{wf}$ in $H(\Pi')$ as follows:
\begin{itemize}
	\item If there is a rotation $r$ moving $w$ to $f$, $u_{wf}$ is the meta-rotation containing $r$. Otherwise, $u_{wf}$ is the meta-rotation containing $s$.
	\item If there is a rotation $r$ moving $w$ from $f$, $v_{wf}$ is the meta-rotation containing $r$. Otherwise, $v_{wf}$ is the meta-rotation containing $t$.
\end{itemize}

By Lemma~\ref{lem:pre2} and the definition of compression, $u_{wf} \prec v_{wf}$.
Hence, there is a path from $u_{wf}$ to $v_{wf}$ in $H(\Pi')$.
We can then add weights to edges in $H(\Pi')$, as stated in \cite{MV.weight}.
Specifically, we start with weight 0 on all edges and increase weights of edges in a path from 
$u_{wf}$ to $v_{wf}$ by $w_{wf}$ for all pairs $wf$.
A fully robust stable matching maximizing
$ \sum_{wf \in M} W_{bwf}$ can be obtained by 
finding a maximum weight ideal cut in the constructed graph. An efficient algorithm for the 
latter problem is given in \cite{MV.weight}.

    \section{Discussion}
\label{sec.discussion}

A number of new questions arise: give a polynomial time algorithm for the problem mentioned in the Introduction, of finding a robust stable matching as defined in \cite{MV.robust}  --- given a probability distribution on the domain of errors --- even when the error is an arbitrary permutation; extend the results to multiple agents simultaneously changing their preference lists; and extend to the stable matching problem with incomplete preference lists, the stable roommate problem \cite{GusfieldI, Manlove-book}, and popular matchings \cite{2popMatching, kavitha2021matchings}.

Next, we give a hypothetical setting to show potential application of our work to the issue of incentive compatibility. Let $A$ be an instance of stable matching over $n$ workers and $n$ firms. Assume that all $2n$ agents have a means of making their preference lists public simultaneously and a dominant firm, say $f$, is given the task of computing and announcing a stable matching. Once the matching is announced, all agents can verify that it is indeed stable. It turns out that firm $f$ can cheat and improve its match as follows: $f$ changes its preference list to obtain instance $B$ which is identical to $A$ for all other agents, and computes a matching that is stable for $A$ as well as $B$ using Theorem \ref{thm:main}. The other agents will be satisfied that this matching is indeed stable for instance $A$ and $f$'s cheating may go undetected.  

The problem solved in this paper appears to be a basic one and therefore ought to have an impactful application. One avenue that may lead to it is the number of new and interesting matching markets being defined on the Internet, e.g., see \cite{Simons}.

	\bibliographystyle{alpha}
	\bibliography{refs}
	
	\appendix
	
	\section{Modified Deferred Acceptance Algorithms}
\label{app:algorithms}

\subsection{Errors on only the Firms Side}

In the case that only one side changes preferences, (Corollary~\ref{cor.sublatticeIntersection}) shows that the set of fully robust stable matchings is a sublattice of the original stable matching lattice ($L_A$). 

Firstly, assume only firms are allowed to change preferences. Then, the modification of the Deferred Acceptance Algorithm (Algorithm~\ref{alg:daalgorithm}) can be used to find a fully robust stable matching. This algorithm is equivalent to the Algorithm 1 in \cite{genPreferences} which finds a strongly stable matching when one side has partial order preferences and the other side has complete total order preferences. A matching $M$ is strongly stable if it has not strong blocking pairs. $(w, f)$ is a strong blocking pair for $M$ under $X$ if $w$ strictly prefers $f$ to his current partner and $f$ either strictly prefers $w$ to her current partner of is indifferent between them. Given original instance $A$ and error instances $S = \{B_1, \dots B_k\}$, we can construct compound instance $X$ (Algorithm~\ref{alg:compoundinstance}) with  partial order preferences for firms and complete total order preferences for workers. 

\begin{lemma}
    A matching is stable under $A$ and all instances in $S$ if and only if it is a strongly stable matching under $X$.
\end{lemma}
\begin{proof}
Let $M$ be a matching.
\begin{claim}
A strong blocking pair $(w, f)$ of $M$ in $X$  is a blocking pair of $M$ in at least one of  $\{A\} \cup S$. 
\end{claim}
Let $(w, f)$ be a strong blocking pair of $M$ under $X$. Since $w$ doesn't change its preference from $A$ to $S$, $w$ must strictly prefer $f$ to $M(w)$. Then $f$ must either strictly prefer $w$ to its partner in $M$ or is indifferent between them. In both cases, $f$ must prefer $w$ to its partner $M(f)$ in at least one instance in $\{A\} \cup S$. Hence $(w, f)$ is a blocking pair for that instance.

\begin{claim}
A blocking pair of $M$ in one of $\{A\} \cup S$ is a strong blocking pair of $M$ in $X$. 
\end{claim}
Without loss of generality, assume $(w, f)$ is a blocking pair for $A$. Then $w$ and $f$ both prefer each other to their partners in $M$. In $X$, $w$ must still prefer $f$ to its partner since it does not change preferences. In addition, $f$ must either strictly prefer $w$ to its partner or be indifference between them. Hence $(w, f)$ is a strong blocking pair for $X$. 

 \end{proof}

\begin{figure}
	\begin{wbox}
		\textsc{CompoundInstance}($A, S = \{B_1, \dots B_k\}$): \\
		\textbf{Input:} Stable matching instance $A$, Set of instances with errors on firms side $S = \{B_1, \dots B_k\}$. \\
		\textbf{Output:} Instance $X$.
		\begin{enumerate}
			\item $\forall w \in W, w$'s preferences in $X$ and $A$ are the same.
			\item $\forall f \in F, \forall w_1, w_2 \in W, w_1 >_f^X w_2$ if and only if $w_1 >_f^A w_2$ and $\forall B_i \in S, w_1 >_f^{B_i} w_2$ 
			\item Return $X$.
		\end{enumerate}
	\end{wbox}
	\caption{Subroutine for constructing a compound instance.}
	\label{alg:compoundinstance} 
\end{figure} 

\begin{figure}[h]
	\begin{wbox}
		\textsc{AlgorithmForErrorsOnFirmsSide}($A, S = \{B_1, \dots B_k\}$): \\
		\textbf{Input:} Stable matching instance $A$, Set of instances with errors on firms side $S = \{B_1, \dots B_k\}$. \\
		\textbf{Output:} Stable matching $M$ or $\boxtimes$.
		\begin{enumerate}
			\item Construct instance $X$ = \textsc{Compound Instance($A,  S)$}
			
			\item Workers maintain a list of firms in accordance with their preference order in $X$.
			\item Firms maintain a set of all worker proposals received so far, initialized to $\emptyset$. 
			
			\item Until all firms receive a proposal or a worker can’t propose anymore, do
			\begin{itemize}
				\item $\forall w \in W, w$ proposes to its best uncrossed firm in its list.
				\item $\forall f \in F, f$ tentatively accepts the best proposal in its set and rejects the rest.
				\item $\forall w \in W,$ if $w$ is rejected by a firm $f$, cross $f$ off its list. 
				
			\end{itemize} 
			\item Return perfect matching $M$ or  $\boxtimes$.
		\end{enumerate}
	\end{wbox}
	\caption{Algorithm for finding a fully robust stable matching with errors only on firms' side.}
	\label{alg:daalgorithm} 
\end{figure} 

Every step in the algorithm can be executed in polynomial time (note that $|S|$ is polynomial). There are at most $n^2$ iterations as at least one worker should cross off a firm in each iteration. So, Algorithm~\ref{alg:daalgorithm} runs in polynomial time.

\subsection{Errors on both the Workers and Firms Sides}

\begin{figure}[h]
	\begin{wbox}
		\textsc{AlgorithmForErrorsOnWorkersAndFirmsSide}($A, S_1 = \{B_1, \dots B_k\}, S_2 = \{C_1, \dots C_k\}$): \\
		\textbf{Input:} Stable matching instance $A$, Set of instances with errors on firms side $S_1 = \{B_1, \dots B_k\}$ and errors on workers side $S_2 = \{C_1, \dots C_k\}$. \\
		\textbf{Output:} Stable matching $M$ or $\boxtimes$.
		\begin{enumerate}
			\item Construct instance $X$ = \textsc{Compound Instance($A,  S_1)$}
			
			\item Workers maintain a list of firms in accordance with their preference order in $A$.
			\item Firms maintain a set of all worker proposals received so far, initialized to $\emptyset$. 
			
			\item Until all firms receive a proposal or a worker can’t propose anymore, do
			\begin{enumerate}
				\item $\forall w \in W, w$ proposes to its best uncrossed firm in its list.
				\item $\forall f \in F, f$ tentatively accepts the best proposal in its set and rejects the rest.
				\item $\forall w \in W,$ if $w$ is rejected by a firm $f$, cross $f$ off its list. 
			\end{enumerate} 
			
			\item If a worker is rejected by all firms, return $\boxtimes$. Else every worker is matched. Let the perfect matching be $M$.
			
			\item Check if $M$ is stable under all instances $A, S_1$ and $S_2$. If so, return $M$. Else, let $(w', f)$ be a blocking pair for $M$ under one of the instances with $M(f) = w$ and $M(w') = f'$. Non-deterministically choose one of steps 7 and 8.
			\label{op6}
			
			\item Let $f$ reject $w$ and $w$ cross off $f$ from its list. GOTO step (4c).
			
			\item Let $f'$ reject $w'$ and $w'$ cross off $f'$ from its list. GOTO step (4c).
		\end{enumerate}
	\end{wbox}
	\caption{Non-deterministic algorithm for finding a fully robust stable matching with errors on workers and firms sides.}
	\label{alg:daalgorithm2} 
\end{figure}

Algorithm~\ref{alg:daalgorithm2}, an extension of Algorithm~\ref{alg:daalgorithm}, can find a stable matching when both workers and firms are allowed to change their preferences. The motivation for this algorithm is based on the lattice structure formed by the set of stable matchings. These lattices can be traversed using the rotations explained in section Section~\ref{sec.lattice}. For a given matching, a rotation involving a firm, if it exists, can be obtained by asking the firm to reject their partner and continuing the Deferred Acceptance algorithm from that matching. 

The algorithm first finds a stable matching under all instances in $\{A\} \cup S_1$. It checks if the matching is also stable under all of the instances in $S_2$. If not, it traverses the lattice of stable matching by looking at the blocking pairs. There are two ways in which a blocking pair can be removed - by ``breaking" the pairings of either of the involved parties. Hence for each blocking pair, there are two rotations that can lead to a stable matching. 
% We need to take a non-deterministic step at operation~\ref{op6}.

% The Deferred Acceptance algorithm (despite being non-deterministic in some versions) is known to produce the same (worker-optimal) matching. 
The instance given in Figure~\ref{ex:example1} shows that choosing Step 8 alone does not suffice. Similarly, Figure~\ref{ex:example2} shows that choosing Step 7 alone does not suffice. The output of the current algorithm depends on which rotation is performed; for it to be sure that no stable matching solution exists, it has to check all possible ways of breaking up all blocking pairs. The deterministic version of this algorithm will have \textbf{exponential} runtime - a larger example using $O(n)$ 'units' of example 1 and 2 each would need to check $O(2^n)$ paths. Finding a better way of removing blocking pairs poses an interesting open problem. 
% Alternatively, there might exist other ways of modifying the Deferred Acceptance Algorithm that can solve the current problem without using sublattices and semi-sublattices.

% \textbf{\textit{Example 1:}} 
% We consider the simple case in which firms do not alter their preferences. Let $\mathcal{W}=\{1,2,3\}$ and $\mathcal{F}=\{a,b,c\}$ with the preferences as listed in table \ref{table:1}, \ref{table:2} and \ref{table:3}. Person 2 altered its preferences from $A$ to $C_1$. This example shows that choosing only operation~\ref{op30} doesn't suffice.

% Algorithm~\ref{alg:daalgorithm2} would first find the perfect matching $M_1=\{1a,2b,3c\}$ which is stable w.r.t. $A$ but has $2c$ as a blocking pair w.r.t $C_1$. At this point, the algorithm can either choose to make $c$ reject 3 (operation~\ref{op29}) or $w$ reject 2(operation~\ref{op30}). Choosing the former leads to a stable matching $M_2={1c,2b,3a}$ while the latter terminates without finding a stable matching.

\begin{figure}
	\begin{wbox}
	\begin{minipage}{.33\linewidth}
		\centering
		\begin{tabular}{l|lll}
			a&3&1&2 \\
            b&1&2&3 \\
            c&1&2&3 \\
		\end{tabular}
		
		\hspace{1cm}
		
		firm preferences in $A$
		
		\hspace{1cm}
		
	\end{minipage}%
	\begin{minipage}{.34\linewidth}
	\centering
	\begin{tabular}{l|lll} 
		1&a&c&b\\
        2&b&c&a\\
        3&c&a&b\\
	\end{tabular}
	
	\hspace{1cm}
	
	worker preferences in $A$
	 
	\hspace{1cm}
\end{minipage}%
	\begin{minipage}{.33\linewidth}
		\centering
		\begin{tabular}{l|lll}
		1&a&c&b\\
        2&c&b&a\\
        3&c&a&b\\
		\end{tabular}

		\hspace{1cm}

		worker preferences in $C_1$
		\hspace{1cm}
	\end{minipage} 
	\end{wbox}
	\caption{An example in which Step 8 does not suffice}
	\label{ex:example1}
	
	\begin{wbox}
	\begin{minipage}{.33\linewidth}
		\centering
		\begin{tabular}{l|lll}
			a&2&1&3 \\
            b&1&2&3 \\
            c&2&3&1 \\
		\end{tabular}
		
		\hspace{1cm}
		
		firm preferences in $A$
		
		\hspace{1cm}
		
	\end{minipage}%
	\begin{minipage}{.34\linewidth}
	\centering
	\begin{tabular}{l|lll}
		1&c&a&b\\
        2&b&a&c\\
        3&c&b&a\\
	\end{tabular}
	
	\hspace{1cm}
	
	worker preferences in $A$
	 
	\hspace{1cm}
\end{minipage}%
	\begin{minipage}{.33\linewidth}
		\centering
		\begin{tabular}{l|lll}
		1&c&a&b\\
        2&a&c&b\\
        3&c&b&a\\
		\end{tabular}

		\hspace{1cm}

		worker preferences in $C_1$
		\hspace{1cm}
	\end{minipage} 
	\end{wbox}
	\caption{An example in which Step 7 does not suffice}
	\label{ex:example2} 
\end{figure}

% \textbf{\textit{Example 2:}}
% This example shows that choosing only the rotation corresponding to operation~\ref{op29} doesn't suffice. The sets $\mathcal{W}$ and $\mathcal{F}$ remain the same and their preferences as listed in tables \ref{table:4}, \ref{table:5} and \ref{table:6}. Person 2 altered its preferences from $A$ to $W_1$.

% Algorithm~\ref{alg:daalgorithm2} would first find the perfect matching $M_1=\{1a,2b,3c\}$ which is stable w.r.t. $A$ but has $2c$ as a blocking pair w.r.t $W_1$. At this point, the algorithm can either choose to make $c$ reject 3(corresponding to operation~\ref{op29}) or $w$ reject 2(operation~\ref{op30}). Choosing the latter produces to a stable matching $M_2={1b,2a,3c}$ while the former terminates without finding a stable matching.

% The above examples demonstrate that exploring both the paths is necessary to find the matching that is stable in the intersection of the instances.
\end{document}